\newcommand\fs@boxedtop
   \def\@fs@mid{\vspace\abovecaptionskip\relax}%
   \let\@fs@iftopcapt\iftrue
\renewcommand{\R}{\mathbb{R}}
\DeclareDocumentCommand\Pr{ m g }{%
    \ensuremath{   \IfNoValueTF {#2}
      {\mathbb{\mathbf{p}}\left[{#1}\right]}
      {\mathbb{\mathbf{p}}\left[{#1}\middle\vert{#2}\right]}%
    }
}
\DeclareDocumentCommand\E{ m g }{%
    \ensuremath{   \IfNoValueTF {#2}
      {\mathbb{E}\left[{#1}\right]}
      {\mathbb{E}\left[{#1}\middle\vert{#2}\right]}%
    }
}
\newcommand{\argmin}{\operatornamewithlimits{argmin}}
\renewcommand{\Re}{\R}
\newif\ifdraft
\newcommand{\mc}[1]{\ifdraft
\textcolor{green!80!black}{[Matteo: #1]} \fi}
\newcommand{\mn}[1]{\ifdraft
\textcolor{blue}{[Mingzi: #1]} \fi}
\newcommand{\Xomit}[1]{}
\renewcommand{\paragraph}[1]{\bigskip \noindent \textbf{#1}}
\renewcommand{\subparagraph}[1]{\medskip \noindent \emph{#1}}
\begin{document}
\onehalfspacing
\raggedbottom
\title{Signaling Design}

\author[Camboni]{Matteo Camboni$^\text{a}$}
\address{$^\text{\MakeLowercase{a}}$Department of Economics, University of Wisconsin-Madison%
\\
\href{mailto:camboni@wisc.edu}{camboni@wisc.edu}}

\author[Niu]{Mingzi Niu$^\text{b}$}
\address{$^\text{\MakeLowercase{b}}$The Hebrew University Business School
\\
\href{mailto:mingzi.niu@mail.huji.ac.il}{mingzi.niu@mail.huji.ac.il}}

\author[Pai]{Mallesh M. Pai$^\text{c}$}
\address{$^\text{\MakeLowercase{c}}$Department of Economics, Rice University%
\\
\href{mailto:mallesh.pai@rice.edu}{mallesh.pai@rice.edu}}

\author[Vohra]{Rakesh Vohra$^\text{d}$}
\address{$^\text{\MakeLowercase{d}}$Department of Economics, University of Pennsylvania%
\\
\href{mailto:rvohra@seas.upenn.edu}{rvohra@seas.upenn.edu}}

\address{\today}
\thanks{
We thank Ben Brooks, Francesc Dilme, Alessandro Pavan, Maher Said, and Mark Whitmeyer for their helpful comments and discussions. We are grateful to Jorge Zazueta for helpful research assistance---all errors remain our own. Niu gratefully acknowledges support from Azrieli Foundation; Pai gratefully acknowledges support from the NSF (CCF-1763349).}

\begin{abstract}
We revisit the classic job-market signaling model of \cite{spence1973job}, introducing profit-seeking schools as intermediaries that design the mapping from candidates' efforts to job-market signals. Each school commits to an attendance fee and a monitoring policy. We show that, in equilibrium, a monopolist school captures the entire social surplus by committing to low information signals and charging fees that extract students' surplus from being hired. In contrast, competition shifts surplus to students, with schools vying to attract high-ability students, enabling them to distinguish themselves from their lower-ability peers. However, this increased signal informativeness leads to more wasteful effort in equilibrium, contrasting with the usual argument that competition enhances social efficiency. This result may be reversed if schools face binding fee caps or students are credit-constrained.  
\newline
\newline
\noindent \textsc{Keywords:} signaling, competition, equilibrium refinement.
\newline
\newline
\noindent \textsc{JEL Classification:} D82, 
D83,
D43.	

\end{abstract}


\maketitle

\thispagestyle{empty}

\newpage

\pagenumbering{arabic} 
\section{Introduction}\label{sec:intro}
\cite{spence1973job}'s signaling model is the main alternative to the human capital model in understanding education's role in job markets. Even if a job candidate acquired no skills in the groves of academe, the ordeal is still useful if graduation correlates with traits predictive of job performance. The idea is straightforward: in a competitive labor market, candidates with privately known abilities can engage in costly signaling, i.e., undertaking actions that are more burdensome for those of lower ability.  While this enables high-ability individuals to secure higher wages, intense effort into signaling may be socially wasteful. Such inefficiency has prompted one critic of higher education to assert that ``taxpayers are mostly fueling a futile arms race.''\footnote{
\url{https://www.latimes.com/opinion/op-ed/la-oe-caplan-education-credentials-20180211-story.html}}

Indeed, the last twenty years have seen an explosion of interest in cheap, fast, and accessible ways for workers to reduce employers' uncertainty about their abilities. These take the form of alternative credentials such as badges and micro-credentials, e.g., privately administered skill certificates awarded based on online tests. Their most attractive feature is that the cost of providing these certificates is a fraction of that required to offer a two- or four-year degree. There is intense competition among signal providers, who compete not only on the fees they charge but also on the choice of monitoring technologies---the mapping of potential employees' efforts into job market signals. For instance, one source estimates that over 1 million distinct credentials are currently available in the USA alone.\footnote{See \url{https://credentialengine.org/all-resources/counting-credentials/}.} 
These choices shape the signaling options available to workers and influence wages in the job market. Further, their existence raises important economic questions. For instance, would signal providers fuel an unproductive arms race among workers in pursuing their own profit? Is competition among them socially beneficial?

We answer these questions within a version of the Spence model that endogenizes two critical features: signaling technology and access cost. Unlike the traditional approach, our framework recognizes that signal providers \emph{actively choose} both their attendance fees and monitoring technologies.
 As in \cite{spence1973job}, we have candidates who are privately informed about their ability and a competitive job market. We study both the case of a monopolist certifier (i.e., a school) and competition among $n$ ex-ante identical schools that determine \emph{both} attendance fees and signaling technology. We explore what signaling opportunities will be available to job candidates and what outcome(s) will emerge when both costs and monitoring technologies are endogenously determined by profit-maximizing school(s).

We show that, in the unique equilibrium in our setting, a profit-seeking monopolist school offers an {\em uninformative} signaling technology, generating the same signal regardless of effort,\footnote{An example of which would be grade non-disclosure policies at several Business schools.} and capture all the surplus in terms of fees. In this case, the separating equilibrium envisaged in \cite{spence1973job} vanishes, and no wasteful effort is exerted! Competition among schools affects both the total surplus produced and its distribution. On the one hand, competition in fees allows students to retain a higher share of the total surplus. On the other hand, competition to attract high types induces schools to offer more informative monitoring policies, incentivizing higher levels of wasteful effort. 
Absent frictions, competition harms social welfare. However, in the presence of credit constraints, competition can generate higher total surplus than a monopoly.  

Our game proceeds as follows. First, the schools (or school) publicly and simultaneously commit to a signaling policy comprising an attendance fee and a monitoring policy, i.e., a mapping from students' efforts to market signals. We focus on the case of a deterministic monitoring policy, i.e., where the signal sent by the school is a deterministic function of the effort taken by the student.\footnote{Extending to the case of a stochastic monitoring policy would not alter the main conclusions of the paper (see Section \ref{Discussion}) but would require a significant extension of our main refinement. A more detailed discussion is available upon request.
} 
Students, privately informed about their productivity, then choose which school, if any, to attend. Upon selecting a school, they pay the fee and decide how much effort to exert. Based on this effort, a signal is generated according to the chosen school's monitoring policy. The job market observes, for each candidate, which school they attended (if any) and the resultant signal. As in the standard Spence model, the market offers every student a wage equal to the student's expected productivity, conditional on the generated signal. Our main deviation from the Spence model is that we explicitly model profit-maximizing schools that can commit to a signaling policy. The subsequent game proceeds akin to Spence.

As standard in signaling models, our game displays a multiplicity of perfect Bayesian/ sequential equilibria due to off-path beliefs. In our setting, this multiplicity is exacerbated by the schools' endogenous choices. Even with a monopolist school, different signaling policies can arise in equilibrium, sustained by different equilibria in the continuation (signaling) subgames. To make progress, we invoke an equilibrium refinement that extends \cite{cho1987signaling}'s D1 criterion to our setting, which allows us to discipline off-path play sufficiently to obtain sharp equilibrium predictions. 

Our analysis aims to evaluate how competition, credit constraints, and possible government policies affect the total social welfare and its distribution. We capture two distinct sources of inefficiencies that can reduce social welfare away from its maximum: wasteful effort and sub-optimal job assignment.  
In particular, we consider two distinct signaling purposes: \emph{sorting}, where all workers are valuable for the firms, and \emph{screening}, where low-ability workers are detrimental (inflict losses even at zero wages). In the sorting case, signaling serves no social purpose, and thus, any credentialing system that induces no wasteful effort is socially optimal. In the screening case, however,  social optimality also requires firms to hire only high-ability students. 

In the monopoly scenario, a unique equilibrium outcome satisfies our refinement. In it, the school maximizes and fully appropriates social surplus: no wasteful effort is exerted, all and only productive types are employed, and the fees capture the entire surplus.
In the sorting context, where all workers are productive, the school adopts an uninformative monitoring policy (pooling all effort levels) and charges a sufficiently low fee to ensure full enrollment  (Proposition \ref{prop:mon_sorting}). 
In contrast, in the screening context, where employing low types is socially detrimental, the school can ensure all high types enroll at any given fee by enabling them to signal effort levels that only they are willing to sustain. As the fee increases, these effort levels decrease, eventually converging to zero. Ultimately, we show that the school still maximizes and fully extracts social welfare by ensuring only high types enroll and no wasteful effort is exerted (Proposition \ref{prop:mon_screen}).

We then analyze the case where $n$ identical schools compete. Here, competing schools can employ two strategies to lure students away from each other: lowering fees and adjusting monitoring policies. Lower fees attract all types equally but do not affect social welfare; more informative monitoring policies appeal to high-type students seeking to differentiate themselves from low-type students through increased effort. Consequently, even when fees are driven to zero, competition for high-type students compels schools to offer more informative monitoring, prompting high-types to exert more effort to signal their ability.  This dynamic leads to wasteful effort and lower social welfare.

In general, even with our refinement, a plethora of equilibria can arise in the setting with competition. However, one particular equilibrium outcome stands out for its robustness: full separation, yielding the Riley outcome (Propositions \ref{prop:R} and \ref{prop:competition_fierce})---low-type students exert no effort and high-type students exert the lowest effort that low-types are unwilling to emulate. Thus, compared to the monopoly case, competition shifts surplus from the school to high-type students and entails efficiency losses due to socially wasteful effort in equilibrium. The critical driver of this result is the high-type students' willingness to pay to access a monitoring policy that enables them to better distinguish themselves from low-type students and thus obtain higher wages.

Our analysis yields the following takeaways. Firstly, the choice to model signaling opportunities available to students as exogenously given is with loss. In contrast with the predictions of the literature on signaling, we demonstrate that the efficient outcome is attainable. In particular, a profit-seeking monopolist school optimally induces a social optimum. However, all surplus accrues to the monopolist signaling school. Competition improves the distribution of surplus at the cost of being inefficient. In particular, the signaling policy chosen by competing schools in equilibrium necessarily results in inefficient (non-zero) effort expended by the student. Therefore, the usual economic intuition that competition enhances economic efficiency fails in this context.  
The role of competition is partially restored when market frictions, such as credit constraints, lead the monopolist school to expand enrollment to maximize profits, pooling unproductive and productive types. In such cases, competition may generate a higher total surplus than monopoly due to a more efficient job allocation. 

\subsection{Related Literature}

As noted earlier, we depart from the literature on signaling inspired by \cite{spence1973job} by {\em endogenizing} the mapping from student effort to signal observed by the market. We assume that profit-maximizing intermediaries design and offer these mappings for a fee. Our objective is to understand which mappings emerge in equilibrium depending on the level of competition in the intermediaries' market.
Our focus is not on the design of this mapping to achieve various ends (maximize effort, maximize a student's probability of being employed)  as in, say,   \cite{ostrovsky2010information}, \cite{popov2013university}, \cite{boleslavsky2015grading}, and
\citet{olszewski2019pareto}. 

Our schools are analogous to the information designers in \cite{kamenica2011bayesian} or \cite{bergemann2016information}. They differ in that they do not observe the relevant state (the student's type), but an endogenous quantity (effort taken by the student which depends on the signaling policy the school commits to). In this sense, our setting is similar to the literature on information design with moral hazard, e.g., \cite{boleslavsky2020bayesian} or \cite{georgiadis2020optimal}. However, our schools are not interested in incentivizing effort by the student per se. Their goal is to influence the employment market's beliefs about the student's underlying type, which influences the wage they offer and, as a result, the fees the student is willing to pay.

The closest paper to ours is probably \cite{lu2019selling}, which considers a profit-maximizing school selling signals to a mass of heterogeneous students.\footnote{\citet{rayo2013monopolistic} also examines the problem of a monopolist intermediary selling signals to informed agents whose payoff is their conditional expected types. } The school 
does not choose a monitoring policy but a menu specifying \emph{effort-contingent} fees, assuming all students are productive (analogous to our sorting case). His focus is on the welfare impact of price transparency; ours is on the effect of competition on welfare and monitoring technologies.

Schools in our model are similar to but not identical to certifiers in the literature on optimal certification \citep{lizzeri1999information,albano2001strategic, 
zubrickas2015optimal,
harbaugh2018coarse, 
demarzo2019test, boleslavsky2020bayesian, bizzotto2021optimal, asseyer2024certification}. Schools, unlike certifiers, cannot directly observe the students' types. They only observe wasteful effort, whose cost depends on the students' types.  This assumption dramatically changes the school's incentives and social welfare considerations.

\section{Model}

There are three types of agents: a unit mass of students (senders) $S$, a mass of firms (receivers) $R$, and a set of $n\geq 1$ schools (intermediaries) $I=\{1,...,n\}$. 

\medskip
\noindent \textbf{(Standard) Notation:} For a function $g: A \times B \to C$, let $g_a: B \to C$ denote the projection of $g$ with respect to $A$, where $g_a(b) = g(a, b)$ for each $b \in B$ and fixed $a \in A$. Furthermore, for any subset $J \subseteq A \times B$, denote by $g(J)$ the image of $J$ under $g$, i.e., $g(J) := \{ c \in C \mid g(a, b) = c \text{ for some } (a, b) \in J \}$. 

\medskip
\noindent \textbf{Students:}
Students privately know their productivity, which can be either low or high: $\theta\in \Theta = \{\theta_L, \theta_H\}$, with $\theta_L < \theta_H$ and $\theta_H > 0$. We denote by $\lambda \in (0,1)$ the probability (or population fraction) of high-type students. 
As in standard signaling models, students can exert effort at a cost that depends on their productivity type. A type $\theta$ student who chooses $e\in \Re_{+}$  incurs  a cost $c:\Theta\times$ $\Re_{+}\rightarrow \Re_{+}.$ The function $c$ is strictly increasing and continuous in the second argument, decreasing in the first argument, and satisfies strict decreasing differences, i.e.,  
\begin{align*}
\forall e'> e: 
0 \leq c\left(  \theta_{L},e\right)  -c\left(  \theta_{H},e\right)  < c\left(
\theta_{L},e^{\prime}\right)  -c\left(  \theta_{H},e'\right) .
\end{align*}
We normalize $c\left(
\theta,0\right) =0.$ Unlike standard signaling models, however, students' efforts are not directly observed. The school determines the mapping from effort to signal.

\medskip 
\noindent \textbf{Schools:}
We model schools as profit-maximizing intermediaries. Each school $i \in I$ commits to a  policy $p_i=(f_i, M_i)\in \Re_{+} \times \mathcal{M} $, consisting of an attendance fee $f_i \in \R_+$ and a right-continuous monitoring policy $M_{i}:\Re_{+}\to \mathbb{M}$, mapping every effort $e \in \R_+$ into a message $ m =M_i\left( e \right)\in\mathbb{M}$. 
Without loss of generality, we assume all schools can rely on the same (rich) signal space $\mathbb{M}$.\footnote{We make the expositional choice to assume a predefined rich set of possible messages $\mathbb{M}$, ensuring that the set of feasible strategies for every school is well-defined. The alternative, i.e., letting the school select a set of possible messages as part of its strategies, would result in the strategy space being a set of all feasible sets, an ill-defined object.
}  
Denote by $\mathbf{\mathbf{p}}= (p_i)_{i \in I} \in 
(\R_+ \times \mathcal{M})^n \equiv \mathcal{A}$ the tuple of strategies chosen by the schools. 
\medskip

\noindent \textbf{Firms:} Firms operate in a perfectly competitive market, offering wages to attract students.
Type $\theta$ students produce a value $\theta$ for the firm that hires them. In the sequel, we discuss both the case where $\theta_L < 0 < \theta_H$ (\emph{screening}), i.e., the firm only wants to hire high types, and the case where $0 < \theta_L < \theta_H$ (\emph{sorting}), i.e., the firm wants both types but at different wages. For every student, firms observe only the school attended, $i \in I$, and the message generated, $m \in M_i(\Re_+)$. Firms do not hire students who have not attended a school.\footnote{One interpretation is that a diploma is necessary to access the job. Alternatively, we can think that students are unproductive until they have attended a school. In this second interpretation, school attendance allows students to attain their productivity type $\theta$, while the effort $e$ (e.g., exam preparation) remains wasteful.}

\medskip
\noindent \textbf{Timing:}
The timing is the following:
\begin{enumerate}
\item Simultaneously, each school $i\in I$ publicly commits to a policy $p_i = (f_i, M_i)$. 

\item Students privately observe their types, $\theta \in \Theta$, and, given the policy vector $\mathbf{\mathbf{p}}$, choose whether to attend one of the schools or opt for their outside option of $0$. 
Enrolled students also choose how much effort $e\in \Re_+$ to exert.
\item Each school $i$ collects a fee $f_i$ from every enrolled student, observes their effort level $e \in \mathbb{R}_+$, and generates the corresponding message, $m = M_i(e)$.
\item Firms observe the school attended (if any) and the message $m$ generated by each student, then simultaneously make wage offers.
\item Each student accepts the highest wage offer, and payoffs are realized.
\end{enumerate}

\medskip
\noindent \textbf{Payoffs:} The payoff of a type $\theta$ student from attending school $i$, paying fee $f_i$, exerting effort $e$, and obtaining wage $w$ is
\[
w-f_i  -c \left(  \theta, e\right)  .
\]
The outside option of not attending any school is assumed to be $0$.
The profit of school $i$, charging fee $f_i$ and attracting a mass $\beta_i\in [0,1]$ of student, is $\pi_i=\beta_i f_i $. Finally, a firm hiring a worker of type $\theta$ at wage $w$ earns a net payoff of $\theta-w$. 

\medskip
\noindent \textbf{Strategies:}
A pure strategy for school $i$ consists of a policy, $p_i=(f_i,M_i) \in \R_+ \times \mathcal{M}$.

A student’s strategy is a function $\psi:  \mathcal{A} \times \Theta \to \Delta((I \cup \{0\}) \times \mathbb{R}_+)$ that determines school and effort choices based on type and policies. Given policies $\mathbf{p}$, $\psi_{\mathbf{p}, \theta}(i, e)$ is the joint probability that a type-$\theta$ student attends school $i \in I$ (or no school if $i=0$) and exerts effort $e \in \mathbb{R}_+$. The marginal probability of enrollment in school $i$ for type $\theta$ is given by $\psi_{\mathbf{p}, \theta}(i) = \int_{e \in \mathbb{R}_+} \psi_{\mathbf{p}, \theta}(i, e) \, \mathrm{d}e$. 
The marginal probability of effort $e$ for type $\theta$ is $\psi_{\mathbf{p}, \theta}(e) = \sum_{i\in (I \cup \{0\})} \psi_{\mathbf{p}, \theta}(i, e)$. Finally, when $\psi_{\mathbf{p}, \theta}(i)>0$ we denote the probability that type $\theta$ student exerts effort $e$ conditional on enrolling in $i$ by $\psi_{\mathbf{p}, \theta}(e|i) = {\psi_{\mathbf{p}, \theta}(i, e)}/{\psi_{\mathbf{p}, \theta}(i)} $.

A strategy for a firm consists of a wage schedule $\omega: \mathcal{A} \times I \times \mathbb{M} \rightarrow \R_+ \cup \{\emptyset\}$, where  $\omega(\mathbf{\mathbf{p}}, i, m) \in \R_+$ denotes the wage offered to a student who attended school $i\in I$ and generated message $m\in \mathbb{M}$ under the policy vector $\mathbf{p}$, and $\omega(\mathbf{\mathbf{p}}, i, m)=\emptyset$ indicates that no offer is made to that student. Finally, let $\mu(\theta|\mathbf{p}, i, m) \in [0,1]$, with $\mu: \mathcal{A} \times I \times \mathbb{M} \rightarrow \Delta \Theta$, denote the firms' belief that a student is of type $\theta$, given attendance at school $i$ and message $m$ under the policy vector $\mathbf{p}$.

Given $(\mathbf{p}, \psi, \omega)$, a type $\theta$ student's expected payoff is given by
\[
U_\theta (\mathbf{p}, \psi, \omega) = \sum_{i \in I} \int_{e \in \mathbb{R}_+} \left[ \omega({\mathbf{p}},i, M_i(e)) - c(\theta, e) -f_i \right] \, \psi_{\mathbf{p}, \theta}(i, e) \, \mathrm{d}e, 
\]
and school $i$'s profit is 
\[
\pi_i (\mathbf{p}, \psi, \omega) = f_i\, \left[\lambda \psi_{\mathbf{p}, \theta_H}(i)+(1-\lambda)\psi_{\mathbf{p}, \theta_L}(i)\right].
\]

\subsection{Equilibrium}
The set of Perfect Bayesian Equilibria (PBE), $\mathcal{E}^*=(\mathbf{p}^*, \psi^*, \omega^*, \mu^*)$, of this game is immense. As standard in signaling models, every subgame after the schools' choices admits a multiplicity of Perfect Bayesian/ sequential equilibria. This multiplicity propagates upstream as different equilibria in off-path signaling subgames can support equilibria involving different policies. 
For example, in the screening case ($\theta_L<0$), every school is willing to offer any monitoring policy in equilibrium, believing that failing to do so will result in zero enrollment, with students being recognized as low types if they choose to enroll.

In the subgame following schools' policy choices, we focus on equilibria where off-path beliefs satisfy an extended D1 refinement in the spirit of \cite{cho1987signaling} and \cite{banks1987equilibrium}. Like other refinements based on forward induction, it restricts the possible beliefs of the firms off the equilibrium path (i.e., after an observable deviation by schools or students). Intuitively, the D1-criterion prescribes that receivers (firms) should attribute any off-path message $m$ to the sender's (student's) types being the $\theta$ \textit{most likely} to deviate and send that message. That is, to those $\theta$s for whom the set of receivers' responses to $m$ that makes them prefer generating $m$ over adhering to their candidate equilibrium strategy (and obtaining their candidate equilibrium payoff) is a \textit{strict superset} of the set of responses that makes any other type better off.

There are two main challenges in directly applying the D1 criterion to our sub-games. Unlike the classical setting:
\begin{enumerate}
\item Our model allows for general monitoring policies where effort may not be directly observable; the mapping from efforts to signal is deterministic but can be arbitrarily coarse.
\item Senders can choose among different monitoring policies at different fees. 
\end{enumerate}
Therefore, we need to extend the classical D1-refinement.
 
In the subgame following schools' policies $\mathbf{p} \in \mathcal{A}$, the PBE  $ \mathcal{E}^*_{\mathbf{p}}=(\psi^*_{\mathbf{p}}, \omega^*_{\mathbf{p}}, \mu^*_{\mathbf{p}})$ consists of three elements: (i) the student strategy $\psi^*_{\mathbf{p}}$, mapping student types into their school and effort choices; (ii) the firms' belief response $\mu^*_{\mathbf{p}}$, assigning the probability $\mu^*_{\mathbf{p}}(\theta | i, m)$ that a student attending school $i$ and generating signal $m$ is of type $\theta$; and (iii) the firms' wage response $\omega^*_{\mathbf{p}}$, assigning the wage $\omega^*_{\mathbf{p}}(i, m)$ to students attending school $i$ and generating signal $m$. 

The definition of PBE in this subgame is straightforward and omitted: all players' strategies are mutual best responses, firms' wage offers upon observing a message (on- or off-path) are consistent with their beliefs, 
\[\omega^*_{\mathbf{p}}( i, m)=\theta_H \, \mu^*_{\mathbf{p}}(\theta_H|i,m) +\theta_L[1-\mu^*_{\mathbf{p}}(\theta_H|i,m)],\]
whenever the right-hand side is non-negative, and $\emptyset$ otherwise;
and firms' on-path beliefs follow Bayes' rule. 

Given the $\mathcal{E}^*_{\mathbf{p}}$-equilibrium strategy profile of students $\psi^*_{\mathbf{p}}$, let $S^{\psi^*_{\mathbf{p}}} \subseteq I \times \mathbb{M}$ denote the set of student signals (i.e., chosen school and generated message) with positive probability: 
$S^{\psi^*_{\mathbf{p}}}=\left\{(i,m) \in I \times \mathbb{M}:\psi^*_{\mathbf{p}}(i,e)>0, \, M_i(e) = m \right\}.$ 
Also, denote by $U^{\mathcal{E}^*_{\mathbf{p}}}(\theta)$ the expected utility of type $\theta$ from playing their equilibrium strategy  $\psi^*_{\mathbf{p}}$ when schools announce $\mathbf{p}$ and firms respond with  $\omega^*_{\mathbf{p}}$. 
Given $\mathbf{p}$, firms can detect students' deviations from $\psi^*_{\mathbf{p}}$ only if they observe a message $(i,m) \notin S^{\psi^*_{\mathbf{p}}}$. Our extended D1 refinement on firms' beliefs applies precisely to off-path messages in every subgame.

In particular, given a candidate equilibrium $ \mathcal{E}^*_{\mathbf{p}}=(\psi^*_{\mathbf{p}}, \omega^*_{\mathbf{p}}, \mu^*_{\mathbf{p}})$ in the (possibly off-path) subgame following $\mathbf{p}$, define 
\begin{align*}
&D^\geq_{\mathcal{E}^*_{\mathbf{p}}}(\theta, i,m) = \Big\{w \in [\max\{0,\theta_L\},\theta_H]:U^{\mathcal{E}^*_{\mathbf{p}}}(\theta)\leq w-f_i -\min_{e:M_i(e)=m} 
c(\theta,e) 
\Big\},\\
&D^>_{\mathcal{E}^*_{\mathbf{p}}}(\theta, i,m) =  \Big\{w \in [\max\{0,\theta_L\},\theta_H]:U^{\mathcal{E}^*_{\mathbf{p}}}(\theta)< w-f_i -\min_{e:M_i(e)=m} 
c(\theta,e) 
\Big\}.
\end{align*}
In words, $D^\geq_{\mathcal{E}^*_{\mathbf{p}}}(\theta, i, m)$ denotes the set of wages $w$ for which type $\theta$ students would weakly prefer deviating to exert an effort $e \in M_i^{-1}(m)$ in school $i$, rather than following the equilibrium strategy $\psi^*_{\mathbf{p}}$ and receiving the expected payoff $U^{\mathcal{E}^*_{\mathbf{p}}}(\theta)$.\footnote{We restrict attention to sequentially rational wages, i.e., a wage offer that would be made for some belief a firm may have about the student's type.} Meanwhile, $D^>_{\mathcal{E}^*_{\mathbf{p}}}(\theta)$ is the set of wages that make these deviations strictly beneficial.
\begin{definition}\label{df:refine}
A Perfect Bayesian Equilibrium in the subgame following $\mathbf{p}$, $\mathcal{E}^*_{\mathbf{p}}=(\psi^*_{\mathbf{p}}, \omega^*_{\mathbf{p}}, \mu^*_{\mathbf{p}})$ fails our extended D1 criterion, if there exists an off-path message $(i,m)\notin S^{\psi^*_{\mathbf{p}}}$, and types $\theta, \theta'\in \Theta$, 
such that, $\mu^*_{\mathbf{p}}(\theta|i,m)>0$, and $$D^\geq_{\mathcal{E}^*_{\mathbf{p}}}(\theta, i,m)\subsetneq D^>_{\mathcal{E}^*_{\mathbf{p}}}(\theta', i,m).$$
 We call $EPBE$ any $PBE$ of the subgame that satisfies our extended D1 criterion.
\end{definition}
Our refinement prescribes that, upon observing a student unexpectedly enrolling in school $i$ and generating message $m$, firms should exclude the possibility that the student is of type $\theta$ if the set of wages justifying type $\theta$'s deviation to choices consistent with $(i, m) \notin S^{\psi^*_{\mathbf{p}}}$ is strictly smaller (in the sense of set inclusion) than that for type $\theta'$.

\bigskip
For example, consider the subgame following $\mathbf{p}=(f_j,M_j)_{j\in I}$, where school $i$ offered a (cutoff) monitoring policy $M_i$: 
\[
M_i(e) = \begin{cases}
   m \quad &\text{if } e \geq  \hat{e},\\
   0 \quad &\text{otherwise,} 
\end{cases}
\]
with $\hat{e} > 0$ and $m \neq 0$. 
Suppose the candidate equilibrium payoff 
$\mathcal{E}^*_{\mathbf{p}}=(\psi^*_{\mathbf{p}}, \omega^*_{\mathbf{p}}, \mu^*_{\mathbf{p}})$ specifies that no student enrolls in school $i$ and exerts effort $ e \geq  \hat{e}$, implying $(i,m) \notin S^{\psi^*_{\mathbf{p}}}$.
Our refinement requires the receiver's belief following the off-path signal $(i,m)$ to assign zero weight to type $\theta$ if there exists $\theta' \in \Theta \setminus {\theta}$ such that $D^\geq_{\mathcal{E}^*_{\mathbf{p}}}(\theta, i,m)\subsetneq D^>_{\mathcal{E}^*_{\mathbf{p}}}(\theta', i,m)$.
In this case, this is equivalent to:
\begin{align}\label{ineq:D1}
     \min\left\{\theta_H-f, U^{\mathcal{E}^*_{\mathbf{p}}}(\theta) + c(\theta, \hat{e})\right\} > U^{\mathcal{E}^*_{\mathbf{p}}}(\theta') + c(\theta', \hat{e}).
\end{align}
We emphasize that this is the primary implication of the refinement used in our proofs.

Suppose there is only one school ($n = 1$), and this school charges no fees while adopting a perfectly informative monitoring policy ($M(e) = e$ for all $e \in \mathbb{R}_+$). In that case, our extended D1 refinement coincides with the classical D1 refinement in \citet{banks1987equilibrium}. A natural concern is whether such refinement is too strong, potentially ruling out all PBEs in the subgame following some $\mathbf{p}$. The following theorem asserts that an EPBE exists in every subgame.\footnote{Proofs are omitted in the main text and can be found in the appendix.}

\begin{theorem}[Existence of EPBE]\label{thm:existence}
An EPBE (i.e., a PBE that satisfies our extended D1 criterion) exists in the subgame following any policy profile.
\end{theorem}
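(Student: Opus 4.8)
The plan is to construct an EPBE explicitly in the subgame following any fixed policy profile $\mathbf{p} = (f_i, M_i)_{i \in I}$, rather than attempting a fixed-point argument on the (badly behaved) space of belief systems. The key observation is that the extended D1 criterion only constrains firms' beliefs at \emph{off-path} messages $(i,m) \notin S^{\psi^*_{\mathbf{p}}}$, and for each such message the refinement picks out a subset of types via the strict set-inclusion test on $D^\geq$ versus $D^>$. So the construction has two layers: first pin down a candidate on-path student strategy $\psi^*_{\mathbf{p}}$ together with Bayes-consistent wages on $S^{\psi^*_{\mathbf{p}}}$; then, for every off-path $(i,m)$, assign the D1-surviving belief and the induced sequentially-rational wage; finally verify that no student type wants to deviate, given these off-path wages, so that $\psi^*_{\mathbf{p}}$ is indeed a best response.

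I would proceed in the following order. \emph{Step 1 (candidate on-path play).} Within each school $i$, this is essentially a Spence signaling game with the cost $c(\theta,\cdot)$ composed with the coarsening $M_i$; I would take a Riley-type least-cost separating or pooling outcome \emph{within} each school (whichever is relevant), then let each type choose across schools (including the outside option $0$) the option maximizing net payoff $w - f_i - c(\theta,e)$. This determines $\psi^*_{\mathbf{p}}$, the on-path signal set $S^{\psi^*_{\mathbf{p}}}$, the wages on it (Bayes' rule), and the equilibrium payoffs $U^{\mathcal{E}^*_{\mathbf{p}}}(\theta)$. \emph{Step 2 (off-path beliefs via D1).} Fix $(i,m)\notin S^{\psi^*_{\mathbf{p}}}$. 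The sets $D^\geq_{\mathcal{E}^*_{\mathbf{p}}}(\theta,i,m)$ and $D^>_{\mathcal{E}^*_{\mathbf{p}}}(\theta,i,m)$ are subintervals of $[\max\{0,\theta_L\},\theta_H]$ of the form $[\,w_\theta,\theta_H\,]$ and $(\,w_\theta,\theta_H\,]$ where $w_\theta = f_i + \min_{e:M_i(e)=m} c(\theta,e) + U^{\mathcal{E}^*_{\mathbf{p}}}(\theta)$, truncated to the admissible wage range; hence the inclusion $D^\geq(\theta)\subsetneq D^>(\theta')$ is equivalent to the single scalar inequality $\max\{w_{\theta'},\max\{0,\theta_L\}\} < \min\{w_\theta,\theta_H\}$ (morally, inequality \eqref{ineq:D1}). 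Because $\Theta=\{\theta_L,\theta_H\}$ has only two elements, at each off-path message at most one type can be strictly ``more likely to deviate'' than the other, so the D1 test is a clean dichotomy: either assign belief $1$ to the surviving type, or (if neither strictly dominates) any belief is permitted and I pick the most pessimistic one, $\mu^*_{\mathbf{p}}(\theta_H|i,m)=0$. This yields $\omega^*_{\mathbf{p}}(i,m)$ at every off-path message. \emph{Step 3 (verify best response and PBE conditions).} Check that, given the off-path wages from Step 2, neither type gains from any deviation to any $(i',m')$; the point is that whenever a deviation \emph{would} be tempting for some type at the assigned pessimistic wage, that type is exactly the one the D1 test did not eliminate, so the wage is in fact high enough only when the deviation is unprofitable — this is the standard ``D1 is self-consistent'' argument, adapted to the coarsened cost. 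Together with Bayes' rule on-path, this gives a PBE satisfying Definition \ref{df:refine}, i.e., an EPBE.

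The main obstacle is \emph{Step 3}, specifically ensuring the construction is internally consistent when a school offers a very coarse $M_i$ (e.g.\ uninformative, so that $\min_{e:M_i(e)=m} c(\theta,e)=0$ for the unique message and there are essentially no off-path messages within that school) simultaneously with other schools offering fine policies generating many off-path messages: the cross-school deviations must all be deterred at once, and the equilibrium payoffs $U^{\mathcal{E}^*_{\mathbf{p}}}(\theta)$ entering $w_\theta$ are themselves determined by the on-path selection in Step 1, so there is a mild simultaneity to resolve. I would handle this by choosing the Step 1 outcome to be the one that maximizes each type's payoff subject to Bayes-consistency (a Riley/undefeated-type selection), which makes the $D^\geq$ thresholds as high as possible and hence makes the off-path wages from Step 2 as unattractive as possible, closing the loop. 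A secondary, purely technical point is that $\min_{e:M_i(e)=m} c(\theta,e)$ is attained because $M_i$ is right-continuous and $c(\theta,\cdot)$ is continuous and increasing, so $M_i^{-1}(m)$, if nonempty, has a minimum; this guarantees the sets $D^\geq, D^>$ are well defined, which I would note in passing.
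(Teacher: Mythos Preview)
Your high-level architecture---explicit construction rather than a fixed-point argument, and the reduction of the D1 inclusion test to the scalar inequality $U^{\mathcal{E}^*_{\mathbf{p}}}(\theta)+c(\theta,e_{i,m})$ versus $U^{\mathcal{E}^*_{\mathbf{p}}}(\theta')+c(\theta',e_{i,m})$---matches the paper exactly, and your remark that $\min_{e:M_i(e)=m}$ is attained by right-continuity is correct and implicitly used there.

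The gap is in Step~1. Your ``Riley-within-each-school, then pick the best school'' decomposition is not what the paper does and is not obviously well-defined: the Riley outcome inside school $i$ depends on which types show up at $i$, which depends on payoffs at other schools, so the simultaneity you flag is not ``mild'' and your proposed fix (pick the outcome maximizing each type's payoff) is a selection principle, not a construction. The paper instead works \emph{across} schools from the outset. It defines $\underline{u}=\max\{0,\theta_L-\min_i f_i\}$, then for each school the maximal effort $e_i^*$ the low type would exert for wage $\theta_H$, identifies the critical message $m^*=\max_i \max_{e\le e_i^*}M_i(e)$ and the schools $I^*$ attaining it at lowest fee, and then makes a single binary case split: either no ``high'' signal $s\in S^*_+$ beats $m^*$ for the high type (in the precise sense of inequality \eqref{ineq:exist_comp}), in which case a (semi-)pooling EPBE at $m^*$ is written down explicitly; or some $s\in S^*_+$ does, in which case a separating EPBE at $\inf_{s\in S^*_+}C(\theta_H,s)$ is written down. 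In both cases the on-path play, wages, and off-path beliefs are given in closed form and each incentive and D1 constraint is checked directly. Your Step~3 ``D1 is self-consistent'' sentence is precisely the part the paper does by hand in each case; without the explicit $m^*$-based construction there is nothing concrete to verify it against.
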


We are now prepared to refine the set of perfect Bayesian equilibria for the entire game. Even when we require that every equilibrium in the subgame is an EPBE, a trivial source of multiplicity in equilibrium outcomes persists due to schools setting up strategically irrelevant messages—such as those sent only in response to implausibly high effort levels (resulting in negative payoffs for all student types even when rewarded with $w=\theta_H$). To address this, we introduce a minimality requirement: among the PBEs that survive the above refinement and produce the same outcome in terms of students' actions, messages sent, firms' wages, and school fees, we define \textit{Refined Perfect Bayesian Equilibria (RPBE)} as those involving the minimum number of unsent messages. 
\begin{definition}
    A PBE of our game $ \mathcal{E}^*=(\mathbf{p}^*, \psi^*, \omega^*, \mu^*)$ is a Refined PBE (RPBE) if:
    \begin{enumerate}[(i)]\itemsep0pt 
        \item \textbf{EPBE in every subgame:} for every $\mathbf{p}\in \mathcal{A}$,  $\mathcal{E}^*_{\mathbf{p}}=(\psi^*_{\mathbf{p}}, \omega^*_{\mathbf{p}}, \mu^*_{\mathbf{p}})$ is an EPBE of the subgame following $\mathbf{p}=(f_i,M_i)_{i\in I}$.
        \item \textbf{Minimality:} there is no PBE $\mathcal{E} = (\mathbf{p}, \psi, \omega, \mu)$ satisfying (i) above such that $\psi_\mathbf{p} = \psi_{\mathbf{p}^*}$, $(f_i)_{i \in I} = (f_i^*)_{i \in I}$, and, for all $(i, e) \in I \times \mathbb{R}_+$ such that $\psi_{\mathbf{p}^*}(i, e) > 0$, $M_i(e) = M_i^*(e)$ and $\omega(\mathbf{p}, i, M_i(e)) = \omega^*(\mathbf{p}^*, i, M^*_i(e))$; \textbf{but} $M_i(\mathbb{R}_+) \subseteq M_i^*(\mathbb{R}_+)$ for all $i \in I$, with $M_i(\mathbb{R}_+) \subsetneq M_i^*(\mathbb{R}_+)$ for some $i \in I$.
    \end{enumerate}
\end{definition}

\medskip
\subsection{Welfare} \label{sec:welfare_def}
The (utilitarian) social welfare of an RPBE $\mathcal{E}^*$ is simply 
$$ \lambda U^{\mathcal{E}^*_{\mathbf{p}^*}}(\theta_H) +  (1-\lambda) U^{\mathcal{E}^*_{\mathbf{p}^*}}(\theta_L) + \sum_{i\in I} \pi^*_i. $$
It can be rewritten as the total productivity from the job allocation net of the (wasteful) cost of effort, e.g.,
\[
\begin{split}
    \lambda \theta_H \mathrm{Pr}(\text{employed}|\theta_H) + (1-\lambda) \theta_L \mathrm{Pr}(\text{employed}|\theta_L) \\
    - \left[\lambda \int_{\mathbb{R}_+} c(\theta_H, e) \psi^*_{\mathbf{p}^*, \theta_H}(e) \mathrm{d}e + (1-\lambda) \int_{\mathbb{R}_+} c(\theta_L, e) \psi^*_{\mathbf{p}^*, \theta_L}(e) \mathrm{d}e\right].
\end{split}
\]
In the sorting case, hiring students of any type always enhances the total productivity from the job allocation. However, in the screening case  ($\theta_L <0$), 
the negative term of $\theta_L \mathrm{Pr}(\text{employed}|\theta_L)$ reflects a source of welfare loss arising from hiring low-type students.

In what follows, we refer to \emph{maximum social welfare} as the largest possible social welfare achievable given the parameters. Informally, in the sorting case ($\theta_L > 0)$, this requires that all students go to school and are employed; in the screening case ($\theta_L<0)$, this requires that only high types go to school and are employed. In both cases, no effort should be taken since education is unproductive. 

\section{Monopoly} 
In this section, we examine the case of a monopolist school ($n=1$). This is the natural counterpart to the standard costly signaling model \citep{spence1973job}, with one key difference: in our model, the monopolist school designs and sells the signaling technology to maximize its profits. We explore two distinct settings.
\begin{enumerate}[(i)]\itemsep0pt
\item \textbf{Sorting,} where both types are productive ($\theta_H > \theta_L > 0$). In this case, signaling has no social value and only functions to align students' wages with their productivity. 
\item \textbf{Screening,} where low-type students have negative productivity when hired ($\theta_H > 0 > \theta_L$). In this case, signaling has potential social value, as the maximum is achievable only when low-type students remain unemployed.
\end{enumerate}

\subsection{Sorting}\label{sec:mon_sorting}

In the sorting case, we show that a monopolist school, $i=1$, can extract the full potential surplus, $\E \theta$ while producing no information about students' efforts. Indeed, by charging an attendance fee of $f = \E \theta$ in exchange for an uninformative monitoring policy,
the school manages to attract all students.

\begin{proposition}[Monopoly: Sorting] \label{prop:mon_sorting}
Suppose $\theta_H  > \theta_L\geq 0$, and there is a monopolist school. The RPBE outcome is unique and features full pooling. $ \mathcal{E}^*=(p^*, \psi^*, \omega^*, \mu^*)$ is such that:
\begin{enumerate}[(i)]\itemsep0pt
    \item The school offers an uninformative monitoring policy $M^*$: $M^*(e) = m\in \mathbb{M}$ for all $e\in \R_+$.
    \item The school charges $f^*= \E \theta$.
    \item All students enroll and exert no effort on path: 
    $\psi^*_{p^*, \theta}(i =1,e=0)=1$  
     for all $\theta\in\{\theta_L,\theta_H\}$.
     \item Firms pay all graduates a wage $\omega^*_{p^*}(i = 1,m)=\E \theta$ on path.
\end{enumerate}

\end{proposition}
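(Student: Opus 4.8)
The plan is to argue in two parts: first, that the described profile $\mathcal{E}^*$ is indeed an RPBE; second, that \emph{every} RPBE outcome coincides with it. For the first part, the work is mostly verification. Given the uninformative monitoring policy $M^*$, the only on-path message is $m$, and Bayes' rule forces $\mu^*_{p^*}(\theta_H \mid 1, m) = \lambda$, hence $\omega^*_{p^*}(1,m) = \E\theta \geq 0$; students are willing to enroll since $\E\theta - f^* - c(\theta,0) = 0$, matching the outside option, and exerting $e>0$ is strictly dominated since it changes neither the message nor the wage but adds cost. The school's profit is exactly $\E\theta$. The only subtlety is off-path: (a) within the subgame following $p^*$, any deviation to $e>0$ still yields message $m$, so there is nothing to refine; and (b) following a \emph{different} policy $p=(f,M)$ chosen by the school, I need to exhibit some EPBE of that subgame (which exists by Theorem~\ref{thm:existence}) in which the school's profit is at most $\E\theta$, so that the deviation is unprofitable. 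The natural choice is the continuation equilibrium in which students only enroll if $f \leq \E\theta$ (and firms hold beliefs making enrollment with any positive effort unattractive, consistent with D1 since raising effort is costlier and yields no higher wage than $\E\theta$) — here profit is bounded by $\E\theta$ regardless. I should also handle $f > \E\theta$: even full enrollment yields wages at most $\E\theta < f$, so no student enrolls and profit is zero.

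\textbf{Uniqueness of the outcome.} This is the substantive part. Fix any RPBE $\mathcal{E}^*$ with school policy $p^* = (f^*, M^*)$. The first claim is that on-path effort must be zero for both types: if some type exerted $e > 0$ on path, I want to derive a profitable deviation for the school. The key economic observation is that since $\theta_L \geq 0$, firms always make a (non-negative) wage offer to any enrolled student regardless of beliefs, and the \emph{total} wage bill the market pays out, averaged over enrolled students, equals their average productivity — so informative signaling only redistributes wages across types, it cannot raise the total. Thus the maximum revenue the school can extract via fees is bounded by the net surplus $\E\theta$ minus total effort cost; any on-path effort strictly lowers the pie. A deviation to $M^*$ uninformative and fee $f^* = \E\theta$ captures the whole (larger) pie $\E\theta$, so it is profitable unless the original profile already had zero effort and $f^* = \E\theta$ with full enrollment. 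Making this rigorous requires showing the deviation to $(\E\theta, M^*\text{ uninformative})$ attracts all students in \emph{every} EPBE of its subgame — that is, the worst-case continuation still gives the school $\E\theta$. Here I invoke the refinement: with the uninformative policy there are no off-path messages within that subgame, firms must offer $\E\theta$, so enrollment is weakly optimal for both types and the school-optimal tie-break gives full enrollment and profit $\E\theta$. (If a student type is indifferent between enrolling and not, the convention that students break ties in favor of the profile under consideration — or a fee of $\E\theta - \eps$ followed by $\eps \to 0$ — closes the gap; I would make the $\eps$-argument explicit to avoid relying on tie-breaking.)

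\textbf{The main obstacle.} I expect the hard step to be ruling out RPBEs in which the school charges \emph{less} than $\E\theta$ but still enrolls everyone with zero effort — these have the right welfare but the wrong fee, yet they should be destroyed by an upward fee deviation. The difficulty is that after the school deviates to a higher fee $f' \in (f^*, \E\theta]$ it has also, in principle, the freedom to change $M$, and one must check that in the \emph{best} continuation EPBE for the deviating school, enrollment does not collapse. With the uninformative policy and $f' \leq \E\theta$, the argument above shows full enrollment is an EPBE outcome of that subgame with profit $f'$; since RPBE requires the school to be best-responding against \emph{the particular} continuation selected, and the original candidate with $f^* < \E\theta$ is only an RPBE if no deviation to \emph{any} $(p, \text{continuation EPBE})$ does better — but the definition pins down one continuation per subgame, so I must confirm that the RPBE's own continuation map, restricted to the deviation subgame $(\E\theta, M^* \text{ uninf.})$, is forced to give full enrollment. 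It is, precisely because that subgame has a \emph{unique} EPBE outcome (no off-path messages, Bayes-consistent wage $= \E\theta \geq 0$, enrollment strictly optimal once we perturb the fee down by $\eps$). So the deviation is always available and strictly profitable, contradicting $f^* < \E\theta$. Assembling these pieces — zero on-path effort, full enrollment, $f^* = \E\theta$, and the forced wage and belief on path — yields the claimed unique outcome.
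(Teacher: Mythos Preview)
Your argument has a genuine gap in the uniqueness step, precisely at the point you flag as ``the main obstacle.'' You claim that the subgame following the deviation to $\bigl(\E\theta-\eps,\ M^*\text{ uninformative}\bigr)$ has a \emph{unique} EPBE outcome with full enrollment, because ``there are no off-path messages.'' But this presumes enrollment. Consider the candidate continuation in which \emph{no} student enrolls: then the unique message $m$ is never generated, $(1,m)$ is off-path, and firms may hold belief $\mu(\theta_H\mid 1,m)=0$, yielding wage $\theta_L$. Enrollment then gives payoff $\theta_L-(\E\theta-\eps)<0$ for small $\eps$, so staying out is a best response for both types. Crucially, the extended D1 criterion does \emph{not} rule this belief out: since the minimum effort generating $m$ is zero and $c(\theta_L,0)=c(\theta_H,0)=0$, the sets $D^{\geq}(\theta_L,1,m)$ and $D^{\geq}(\theta_H,1,m)$ coincide, and neither is strictly contained in the other's strict counterpart. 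So the no-enrollment EPBE survives, and your deviation need not be profitable.

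The paper's fix is exactly to choose a deviation that gives D1 traction. Instead of an uninformative policy, the school deviates to a \emph{two-message} cutoff policy
\[
\hat M(e)=\begin{cases}\epsilon & e\geq\epsilon,\\ 0 & e<\epsilon,\end{cases}
\qquad \hat f=\E\theta-c(\theta_L,\epsilon),
\]
with $\epsilon>0$ small. Now the ``high'' message requires strictly positive effort, and since $c(\theta_L,\epsilon)>c(\theta_H,\epsilon)$, the D1 comparison in \eqref{ineq:D1} bites: if no one exerts $\epsilon$, both types have the same continuation payoff, but the high type is strictly more willing to deviate to effort $\epsilon$, so the refinement forces $\mu(\theta_H\mid 1,\epsilon)=1$ and wage $\theta_H$. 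This in turn makes enrollment with effort $\epsilon$ strictly profitable for \emph{both} types, guaranteeing full enrollment in every EPBE of that subgame. The rest of your structure (bounding profit by $\E\theta$, verifying the candidate is an RPBE) is fine; the missing ingredient is this slightly informative deviation that lets the single-crossing property do its work through the refinement.
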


We outline the proof below, which is particularly instructive, as its basic structure is used for subsequent results. At a high level, the proof consists of three main steps: (1) showing that any equilibrium outcome that extracts all potential surplus satisfies $(i)-(iv)$, (2) verifying that this equilibrium outcome can indeed be supported as an RPBE, and (3) demonstrating that no other equilibrium outcome survives the refinement. The proof also underscores the role of our refinement in ensuring tractability by yielding a unique equilibrium.

\begin{proof}[Proof of Proposition \ref{prop:mon_sorting}]
Denote by $ \mathcal{E}^*=(p^*, \psi^*, \omega^*, \mu^*)$ an RPBE of the monopolist game, by $\mathcal{E}^*_p$ the induced EPBE in the subgame following $p$, and by $\pi^*_p$ the associated school's profits in that subgame.

\paragraph{Step 1}: \emph{$\pi^*_p \leq \E \theta$ for all $p\in \mathcal{A}$, and $\pi^*_{p^*} = \E \theta$ implies $\mathcal{E}^*_{p^*}$ must satisfy (i)-(iv).}
The amount extracted in fees in $\mathcal{E}^*_p$ cannot exceed the surplus generated in the downstream market, which is at most $\E \theta$. Thus $\pi^*_p \leq \E \theta$. Moreover, the surplus generated in the downstream market is $\E \theta$ only if all students enroll and exert zero effort; thus (ii)-(iv) need to hold if $\pi^*_{p^*} = \E \theta$. 
Finally, the RPBE minimality requirement implies that the monitoring policy must be uninformative in equilibria where no effort is exerted; thus, (i) must also hold.

\paragraph{Step 2}: \emph{There exists an RPBE such that (i)-(iv) hold.}
Consider a PBE $ \mathcal{E}^*=(p^*, \psi^*, \omega^*, \mu^*)$ such that $(i)-(iv)$ holds on-path (in $\mathcal{E}^*_{p^*}$) and such that the equilibria $\mathcal{E}^*_{p'}$ prescribed in the subgames following any $p'\neq p^*$ are EPBE.\footnote{By Theorem \ref{thm:existence}, there exists an EPBE in the subgame following any policy $p\in \mathcal{A}$.} 
We need to show that $\mathcal{E}^*$ is an RPBE.
First note that $\pi^*_{p*}=\E \theta\geq \pi^*_{p'}$ (since $\mathcal{E}^*_{p^*}$ satisfies $(i)-(iv)$), and thus the school has no profitable deviation. Further, since students can generate only one signal under $p^*$ ($p^*$ is uninformative (i)), no message is unsent in the subgame following $p^*$. Thus, also $\mathcal{E}^*_{p^*}$ is a EPBE (it vacuously satisfies the extended D1 refinement).  Finally, since $\mathcal{E}^*$ also satisfies the minimality requirement, we can conclude that $\mathcal{E}^*$ is an RPBE.

\paragraph{Step 3}: \emph{In every RPBE $\pi^*_{p^*} = \E \theta$.} 
Since (by Step 1) $\pi^*_{p^*}\leq \E \theta$, we just need to show that it cannot be that $\pi^*_{p^*}<\E \theta$. 
For the sake of contradiction,  suppose there exists an RPBE $\mathcal{E}'=(p', \psi', \omega', \mu')$ where the school earns $\pi'_{p'} < \E \theta$. 

Consider a deviation by the school from policy $p'$ to policy $\hat{p} = (\hat{f}, \hat{M})$ such that:
\begin{align*}
    \hat f &= \E \theta - c(\theta_L, \epsilon),\\
    \hat M (e) &=
    \begin{cases}
         \epsilon \quad &\text{if } e \geq \epsilon,\\
         0 \quad &\text{if } e < \epsilon,
    \end{cases}
\end{align*}
where $\epsilon > 0$ is sufficiently small, i.e., such that $c(\theta_L, \epsilon) <\min\left\{ \theta_H-\E\theta, \E \theta - \pi'_{p'}\right\}$.

We now show that in the subgame following $\hat{p}$, the EPBE $\mathcal{E}'_{\hat{p}} = (\psi'_{\hat{p}}, \omega'_{\hat{p}}, \mu'_{\hat{p}})$, must involve all students enrolling. For $\epsilon$ sufficiently small, this generates a profit of $\E \theta - c(\theta_L, \epsilon) > \pi'_{p'}$. Thus, the school has a profitable deviation to $\hat{p}$ from the putative equilibrium policy $p' = (f', M')$, leading to a contradiction.

To understand why, assume for contradiction that in $\mathcal{E}'_{\hat{p}}$, a positive mass of students does not enroll. This would imply $\omega'_{\hat{p}}(1,m=\epsilon) > \E \theta$. Indeed:
\begin{itemize}
    \item[(i)]Suppose $\theta_L$ selects $e = \epsilon$ with positive probability following $\hat{p}$. Then, by the strict submodularity of $c$, $\theta_H$ must enroll and select $e = \epsilon$ with probability 1:  i.e., $\psi'_{\hat{p},\theta_H}(1, e=\epsilon) = 1$. As a result, $\omega'_{\hat p}(1,m=\epsilon ) > \E \theta$.

    \item[(ii)] Suppose instead no students of type $\theta_L$ select $e = \epsilon$. Consider two sub-cases:
    \begin{enumerate}
        \item[(a)] If $\theta_H$-students select $e = \epsilon$ with positive probability, then by Bayes' rule, $\mu'_{\hat{p}}(\theta_H | 1, m=\epsilon) = 1$
        \item[(b)] If no student selects $e = \epsilon$, $\theta_H$ and $\theta_L$ must obtain the same equilibrium payoff under $\mathcal{E}'_{\hat{p}}$, $U^{\mathcal{E}'_{\hat{p}}}(\theta_H) = U^{\mathcal{E}'_{\hat{p}}}(\theta_L)\leq \E\theta$, as they face the same costs for not enrolling or exerting zero effort. 
        Thus, for $\epsilon$ sufficiently small, we have $\max\left\{\theta_H, U^{\mathcal{E}'_{\hat{\mathbf{p}}}}(\theta_L) + c(\theta_L, \epsilon) \right\} > U^{\mathcal{E}'_{\hat{\mathbf{p}}}}(\theta_H) + c(\theta_H, \epsilon)$, and thus our refinement implies $\mu'_{\hat{\mathbf{p}}}(\theta_H \mid m = \epsilon) = 1$.

    \end{enumerate} 
    In either case therefore we have that $\mu'_{\hat{p}}(\theta_H | 1,m=\epsilon) = 1$, and the resulting wage is $\omega'_{\hat{p}}(1,m=\epsilon) > \E \theta$.
\end{itemize}
Finally, observe that under $\hat{\mathbf{p}}$, every student would prefer to enroll, paying $\hat{f} = \E \theta - c(\theta_L, \epsilon)$, exert effort $e = \epsilon$ (incurring a cost of $c(\theta_L, \epsilon)$), and receive a wage $\omega'_{\hat{\mathbf{p}}}(m = \epsilon) > \E \theta$, rather than forgoing enrollment and accepting the zero outside option. Consequently, by offering the policy $\hat{\mathbf{p}}$, the monopolist school ensures full enrollment and secures an equilibrium profit of $\pi'_{\hat{\mathbf{p}}} = \E \theta - c(\theta_L, \epsilon) > \pi'_{p'}$.
\end{proof}

Therefore, under sorting, a monopolist school maximizes and fully appropriates social surplus: no wasteful effort is exerted, all students are employed, and the fees capture the entire social surplus $\E \theta$.

At a technical level, in this case, the sole impact of our refinement is to rule out ``unreasonable'' equilibria in which firms negatively update their beliefs about students' ability upon observing their enrollment in school, regardless of the effort signaled through the monitoring policy. Our refinement ensures that if no students are expected to enroll and exert positive effort following a (possibly out-of-equilibrium) policy $p$, but firms observe that some students do so, they must positively, rather than negatively, update their beliefs about these students' types, due to the cost advantage of high-type students.

\subsection{Screening}
Unlike the sorting case, in the screening case, where $\theta_H > 0 > \theta_L$, low-type students are socially harmful on the job.  Thus, a school's ability to screen students through costly effort is crucial to avoid inefficient outcomes when types are ex-ante indistinguishable by firms. Provided that high-type students have a cost advantage, a sufficiently high fee, combined with the possibility of signaling an infinitesimal amount of effort, suffices to screen the types. As it turns out, a monopolist school can exploit this kind of policy to maximize and extract the entire social surplus. 

\begin{proposition}[Monopoly: Screening]\label{prop:mon_screen} 
Suppose $\theta_H > 0 > \theta_L$, and there is a monopolist school. The RPBE outcome is unique and features full separation: 
\begin{enumerate}[(i)]\itemsep0pt
    \item The school charges $f^*=  \theta_H$ and offers an uninformative monitoring policy $M^*$, such that $M^*(e) = m\in \mathbb{M}$ for all $e\in \R_+$.
    \item High-type students enroll and exert zero effort, $\psi^*_{p^*, \theta_H}(i=1,e=0)=1$; low-type students do not enroll, $\psi^*_{p^*, \theta_L}(i=0,e=0)=1$. 
    \item Firms pay all graduates a wage of $\omega^*_{p^*}(i = 1,m)=\theta_H$.
\end{enumerate}
\end{proposition}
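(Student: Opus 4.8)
The plan is to mirror the three-step structure of the proof of Proposition~\ref{prop:mon_sorting}, replacing the surplus benchmark $\E\theta$ by $\lambda\theta_H$: in the screening case, welfare is maximized only when every high type is hired, no low type is hired, and nobody exerts effort, so the maximal social surplus is $\lambda\theta_H$. First I would show $\pi^*_p\le\lambda\theta_H$ for every policy $p$ --- the fees collected in any EPBE of the subgame following $p$ cannot exceed the social surplus generated there (students are individually rational, and firms break even in expectation since $\omega^*_p=\E[\theta\mid\text{signal}]$), and that surplus is at most $\lambda\theta_H$ because hiring a low type contributes $\theta_L<0$ while effort is pure waste. Then I would argue that $\pi^*_{p^*}=\lambda\theta_H$ forces (i)--(iii): equality requires welfare exactly $\lambda\theta_H$ (all high types enrolled and employed, no low type employed, zero effort) \emph{and} zero aggregate student surplus; since $\Pr(\text{employed}\mid\theta_H)=1$ forces the high types' wage $w_H>0$, zero surplus gives $f^*=w_H>0$, so a low type --- being unemployed --- cannot afford a positive fee and stays out, leaving only high types at their signal; hence $w_H=\theta_H$ and $f^*=\theta_H$, and the RPBE minimality requirement forces $M^*$ to be uninformative because no effort is used on path.

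\textbf{Step 2: existence.} I would exhibit the candidate $p^*=(\theta_H,M^*)$ with $M^*$ uninformative, high types enrolling at zero effort, low types staying out, on-path wage $\theta_H$, and an arbitrary EPBE in every off-path subgame (which exist by Theorem~\ref{thm:existence}). One verifies this is a PBE: a high type is indifferent between enrolling and the outside option and breaks the tie toward enrolling; any positive mass of enrolling low types would push the wage strictly below $f^*=\theta_H$, so low types strictly prefer to stay out; the on-path belief $\mu^*(\theta_H\mid 1,m)=1$ is Bayes-consistent. Since the subgame after $p^*$ uses a single message, the extended D1 criterion holds there vacuously; by Step~1 the school has no profitable deviation; and minimality holds by construction, so $\mathcal{E}^*$ is an RPBE.

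\textbf{Step 3: every RPBE extracts $\lambda\theta_H$.} Toward a contradiction, let $\mathcal{E}'$ be an RPBE with $\pi'_{p'}<\lambda\theta_H$. As in the sorting proof, consider the school's deviation to the cutoff policy $\hat p=(\hat f,\hat M)$ with $\hat M(e)=\epsilon$ for $e\ge\epsilon$ and $\hat M(e)=0$ otherwise, $\hat f=\theta_H-c(\theta_L,\epsilon)$, and $\epsilon>0$ small enough that $\lambda\hat f>\pi'_{p'}$. The driving identity is $\theta_H-\hat f=c(\theta_L,\epsilon)>c(\theta_H,\epsilon)$. I would show that in \emph{every} EPBE $\mathcal{E}'_{\hat p}$ of the subgame following $\hat p$ one has $U^{\mathcal{E}'_{\hat p}}(\theta_H)>0$, so all high types enroll and the school earns at least $\lambda\hat f>\pi'_{p'}$ --- a contradiction. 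Three observations do it: (a) no low type generates message $\epsilon$, since individual rationality would force $w(\epsilon)\ge\hat f+c(\theta_L,\epsilon)=\theta_H$, impossible once a low type is pooled there; (b) if $(1,\epsilon)$ is on path, then by (a) only high types reach it, so $w(\epsilon)=\theta_H$ and $U^{\mathcal{E}'_{\hat p}}(\theta_H)\ge\theta_H-\hat f-c(\theta_H,\epsilon)=c(\theta_L,\epsilon)-c(\theta_H,\epsilon)>0$; (c) if $(1,\epsilon)$ is off path, the extended D1 inequality~\eqref{ineq:D1} (with $\hat e=\epsilon$, $f=\hat f$, excluded type $\theta_L$, dominating type $\theta_H$) reduces --- using $\theta_H-\hat f=c(\theta_L,\epsilon)$ and $U^{\mathcal{E}'_{\hat p}}(\theta_L)\ge0$ --- to $U^{\mathcal{E}'_{\hat p}}(\theta_H)<c(\theta_L,\epsilon)-c(\theta_H,\epsilon)$, so \emph{either} $U^{\mathcal{E}'_{\hat p}}(\theta_H)\ge c(\theta_L,\epsilon)-c(\theta_H,\epsilon)>0$, \emph{or} D1 forces $\mu'_{\hat p}(\theta_L\mid 1,\epsilon)=0$, hence $w(\epsilon)=\theta_H$, and then a high type gains $c(\theta_L,\epsilon)-c(\theta_H,\epsilon)>U^{\mathcal{E}'_{\hat p}}(\theta_H)$ by deviating to generate $\epsilon$, which is impossible in equilibrium. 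Combining Steps~1--3, the RPBE outcome exists, is unique, and is exactly (i)--(iii).

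\textbf{Main obstacle.} The delicate part is Step~3, specifically the assertion that the high type strictly prefers to enroll in \emph{every} EPBE of the off-path subgame after $\hat p$. Unlike the sorting case, where the deviation simply lures in all students, here one must also exclude continuation equilibria in which high types stay out --- or pool with low types at a depressed wage --- while $(1,\epsilon)$ sits off path with beliefs unfavorable to deviators; this is exactly where the extended D1 refinement, through \eqref{ineq:D1} and the inequality $\theta_H-\hat f=c(\theta_L,\epsilon)>c(\theta_H,\epsilon)$, does the work. Managing the on-path/off-path case split and verifying low-type individual rationality in each sub-case is the bulk of the argument; Steps~1--2 are routine given the template of Proposition~\ref{prop:mon_sorting}.
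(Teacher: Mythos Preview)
Your proposal is correct and follows essentially the same three-step architecture as the paper's proof in Appendix~\ref{app:proofs-prop-mon-screen}. The only noteworthy difference is the deviation fee in Step~3: the paper sets $\hat f=\theta_H-c(\theta_H,\epsilon)-\gamma_\epsilon$ with $\gamma_\epsilon<c(\theta_L,\epsilon)-c(\theta_H,\epsilon)$, which makes the low type \emph{strictly} unwilling to generate message $\epsilon$ even at wage $\theta_H$, so the ``no low type at $\epsilon$'' step is immediate; your choice $\hat f=\theta_H-c(\theta_L,\epsilon)$ leaves the low type exactly indifferent at wage $\theta_H$ and therefore requires the extra Bayes argument in~(a). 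Both routes work, and your dichotomy in~(c)---either $U^{\mathcal{E}'_{\hat p}}(\theta_H)\ge c(\theta_L,\epsilon)-c(\theta_H,\epsilon)$ already, or D1 forces $w(\epsilon)=\theta_H$ and yields a profitable high-type deviation---is a clean way to avoid the paper's appeal to $U^{\mathcal{E}'_{\hat p}}(\theta_H)=U^{\mathcal{E}'_{\hat p}}(\theta_L)$ in the off-path case. One minor wording slip in Step~2: with $f^*=\theta_H$ and $\mu^*(\theta_H\mid 1,m)=1$, a deviating low type earns exactly $0$, so low types are \emph{indifferent} rather than strictly preferring to stay out; this does not affect the argument.
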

The proof resembles that of Proposition \ref{prop:mon_sorting} and is deferred to Appendix \ref{app:proofs-prop-mon-screen}. 

Therefore, under screening, a monopolist school maximizes and fully appropriates the social surplus: no wasteful effort is exerted, all and only $\theta_H$-students are employed, and the fees capture the entire social surplus $\lambda \theta_H$.

 \section{Competition}

We turn to the case of $n >1$ competing schools. While a monopolist school maximizes social welfare, we show that competition inevitably introduces inefficiencies (wasteful effort) in equilibrium. This is true under both screening and sorting, casting doubt on the social benefits of competition in the market for signaling intermediaries. 

\begin{proposition}[Inefficiency under Competition]\label{prop:competition1}
When $n\geq 2$ schools compete to attract students, no RPBE outcome achieves the maximum social welfare. 
\end{proposition}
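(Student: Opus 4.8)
The plan is to argue by contradiction: suppose there is an RPBE $\mathcal{E}^* = (\mathbf{p}^*, \psi^*, \omega^*, \mu^*)$ with $n \geq 2$ competing schools that achieves the maximum social welfare. Recall from Section \ref{sec:welfare_def} that maximum social welfare requires, in both the sorting and screening cases, that \emph{no wasteful effort} is exerted on path; in the sorting case, additionally, that all students are employed, and in the screening case that exactly the high types are employed (and hence enrolled). So under the contradiction hypothesis, on path every enrolled student of every type exerts $e = 0$, and (by the RPBE minimality requirement, exactly as in Step 1 of Proposition \ref{prop:mon_sorting}) every school that attracts a positive mass of students offers an uninformative monitoring policy on path. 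Consequently, the total surplus collected in fees plus student utility equals $\E\theta$ (sorting) or $\lambda\theta_H$ (screening), and each school's wage offer to its (pooled) graduates equals the average productivity of its enrolled population.

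Next I would set up the deviation. The key observation is that with an uninformative policy, a high type and a low type who enroll at the same school are treated identically and receive the same wage; since a high type has strictly lower effort cost, whenever the low type (weakly) prefers to enroll at some school, the high type strictly prefers it too. I would then show that at least one school $i$ in the support must be attracting both types (in the sorting case, since all types are employed and schools are ex-ante identical, some school has a mixed enrollment, or — if there is full segregation — the school attracting only low types earns strictly less, contradicting the zero-profit/indifference structure that competition forces; in the screening case all enrolled students are high types, and I handle this variant separately, see below). Fix such a school $i$ with fee $f_i^*$. A competing school $j \neq i$ (which in equilibrium earns at most the profit implied by its own pooled population) can deviate to the \emph{cutoff policy}
\[
\hat{M}_j(e) = \begin{cases} \epsilon & \text{if } e \geq \epsilon,\\ 0 & \text{otherwise,}\end{cases}
\qquad \hat f_j = w_i^* - c(\theta_L,\epsilon) - \delta,
\]
where $w_i^*$ is the on-path wage at school $i$, and $\epsilon, \delta > 0$ are small. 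Exactly as in Step 3 of Proposition \ref{prop:mon_sorting}, the extended D1 refinement (inequality \eqref{ineq:D1}) forces the firms to believe that a student enrolling at $j$ and generating message $\epsilon$ is the high type — either because only high types find $e = \epsilon$ worth the fee, or, if the message is off path, because the high type's cost of $\epsilon$ is lower and the D1 inequality $\min\{\theta_H - \hat f_j,\, U(\theta_H) + c(\theta_H,\epsilon)\} > U(\theta_L) + c(\theta_L,\epsilon)$ holds for $\epsilon$ small. Hence school $j$'s graduates at message $\epsilon$ are paid $\theta_H$, and every high type strictly prefers to enroll at $j$, exert $\epsilon$, pay $\hat f_j$, and collect $\theta_H$, rather than to pool at school $i$ for wage $w_i^* < \theta_H$ (in sorting) — or, in the screening case, rather than accept any on-path allocation, since $\theta_H - \hat f_j - c(\theta_L,\epsilon) > 0$ for $\epsilon,\delta$ small while the on-path high-type surplus is squeezed toward $0$ by competition. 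So the deviating school $j$ captures the entire mass $\lambda$ of high types (and possibly low types too) at a strictly positive fee bounded away from $0$, yielding profit strictly above its equilibrium profit — the contradiction.

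The main obstacle is making the ``some school must earn strictly positive profit in equilibrium, yet the deviation does strictly better'' comparison airtight, because competition could in principle push all equilibrium fees (and profits) to zero, in which case I must show the deviation profit is \emph{strictly positive and bounded below} — this needs $w_i^* > 0$ in the sorting case (true since $\theta_L \geq 0$ and a positive mass of high types is enrolled somewhere, so the relevant wage is at least $\theta_L$, and I can pick the school whose pool has the highest wage, which is $> 0$) and $\theta_H > 0$ in the screening case. A secondary subtlety is ruling out that the deviation merely shifts the same high types around without raising total profit: I need to verify the EPBE selected in the post-deviation subgame cannot "punish" school $j$ with an equilibrium in which high types stay put — but since at message $\epsilon$ the D1 refinement pins the wage to $\theta_H$ and $\hat f_j$ is set so that enrolling-and-exerting-$\epsilon$ strictly dominates every on-path option for the high type, no EPBE of that subgame can have the high types decline, by sequential rationality. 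The remaining case-checking (full segregation in sorting, the screening variant, and confirming the refinement's hypothesis $\mu^*(\theta_H \mid \cdot) > 0$ is what we contradict) parallels Steps 1–3 of Proposition \ref{prop:mon_sorting} and is routine.
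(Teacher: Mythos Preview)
Your overall strategy---assume an efficient RPBE, note that minimality forces uninformative monitoring on path, then exhibit a profitable cutoff deviation whose message $\epsilon$ is pinned to wage $\theta_H$ by the D1 refinement---is the same as the paper's. The gap is in the fee you pick for the deviation. Setting $\hat f_j = w_i^* - c(\theta_L,\epsilon) - \delta$ with $\delta$ small does \emph{not} work in the case you yourself flag as the hard one, namely when competition has already driven all equilibrium fees to zero. In that case every school is uninformative with fee $0$, so $w_i^* = \E\theta$ and the high type's outside option in the post-deviation subgame is $\E\theta$. Your deviating high type gets $\theta_H - \hat f_j - c(\theta_H,\epsilon) \approx \theta_H - \E\theta$, and whenever $\E\theta > \theta_H/2$ (e.g.\ $\lambda$ large in sorting) this is strictly below $\E\theta$: the high type will \emph{not} switch. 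Worse, both D1 sets $D^{\ge}(\theta_L,j,\epsilon)$ and $D^{>}(\theta_H,j,\epsilon)$ are then empty, so the refinement places no restriction on $\mu^*_{\hat{\mathbf p}}(\theta_H\mid j,\epsilon)$, and there is a perfectly good EPBE of the post-deviation subgame in which firms assign $(j,\epsilon)$ to the low type and nobody enrolls at $j$. Your claim that ``$\hat f_j$ is set so that enrolling-and-exerting-$\epsilon$ strictly dominates every on-path option for the high type'' is therefore false for this fee.

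The paper's proof avoids this by separating two cases. First it shows $f^*_{\min}>0$ is impossible: any school not serving the whole market can undercut with a cutoff policy at a slightly \emph{lower} fee (and if one school already serves everyone, a rival can do the same), so $f^*_{\min}=0$. Then, with $f^*_{\min}=0$ and hence zero equilibrium profit, the zero-fee school deviates to a \emph{tiny} positive fee $\gamma_\epsilon$ with the cutoff policy, chosen so that $\theta_H - c(\theta_H,\epsilon) - \gamma_\epsilon > \E\theta$. Now D1 bites (the high type strictly gains at wage $\theta_H$ while the low type does not), the wage at $\epsilon$ is pinned to $\theta_H$, and the high type strictly prefers the deviation; any positive enrollment yields strictly positive profit, beating zero. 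The key idea you are missing is that when equilibrium profits are zero, the right deviation fee is arbitrarily small---just enough to make profit positive---not close to $w_i^*$.
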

Although this result applies to screening and sorting cases, the underlying intuition differs slightly between them. 

In the sorting case, the monopolist achieves maximum social welfare (i.e., zero effort and full employment) by offering an uninformative monitoring policy at a fee of $f^*= \E\theta$. Competing schools, however, must tempt students from each other. Competition causes them to lower their fees and adjust their monitoring policies. 
Lower fees attract all student types equally, redistributing social surplus from schools to students without changing its total. However, even with zero fees, more informative monitoring policies appeal to high-type students seeking to differentiate themselves from low types through increased effort. Consequently, competition for high types drives schools to offer increasingly informative monitoring, prompting high-type students to exert more effort to signal their ability. This dynamic results in wasteful effort and reduced social welfare.

In the screening case, the monopolist achieves the social optimum (i.e., zero effort and full employment of only high-type students) by offering an uninformative monitoring policy and charging $f^* = \theta_H$, discouraging low-type students from enrolling. Competition over fees still shifts welfare from schools to students, but it also incentivizes low types to enroll (and mimic high types), reducing future wages and social welfare due to their negative productivity. Schools adopt more informative monitoring policies to attract high types and deter low types, enabling productive students to signal a greater effort. Although these policies achieve efficient job allocation by deterring low-type enrollment, they lead to wasteful effort and reduced social welfare.

{While all RPBE outcomes are inefficient, they are not equivalent, and one stands out: the RPBE that induces the classical \emph{Riley outcome}---the cheapest separating equilibrium in the standard Spence model. In it, low-type students exert no effort and earn $\max\{0, \theta_L\}$, while high-type students exert the Riley effort $e^R$, which uniquely satisfies $c(\theta_L, e^R) = \theta_H - \max\{\theta_L, 0\}$, and earn $\theta_H$.
}

\begin{proposition}[Riley Outcome]\label{prop:R}Suppose $n \geq 2$ schools compete to attract students. Then, for every $\lambda \in (0,1)$, $\theta_H > 0$, and $\theta_L \in \mathbb{R}$, there exists an RPBE $\mathcal{E}^* = (\mathbf{p}^*, \psi^*, \omega^*, \mu^*)$ such that:

\begin{enumerate}
    \item[(i)] Every school charges $f^* = 0$ and adopts the monitoring policy $M^*$:
    \[
    M^*(e) =  
     \begin{cases}
        m^A & \text{if } e \geq e^R, \\
        m^B & \text{otherwise.}
    \end{cases}
    \]
    
    \item[(ii)] High-type students enroll and exert effort $e^R$: $\psi^*_{\mathbf{p}^*, \theta_H}(i, e^R) = 1/n$ for any $i \in I$.

   \item[(iii)] Low-type students exert no effort:
    \begin{enumerate}[(a)]\itemsep0pt
        \item In the sorting case, they all enroll: $\psi^*_{\mathbf{p}^*, \theta_L}(i, 0) = 1/n$ for every $i \in I$.
        \item In the screening case, none enrolls: $\psi^*_{\mathbf{p}^*, \theta_L}(0, 0) = 1$.
    \end{enumerate}

    \item[(iv)] Firms offer a wage of $\theta_H$ to graduates who generate message $m^A$ and a wage of $\max\{\theta_L, 0\}$ to those who generate message $m^B$: for every $i \in I$,
    \[
    \omega^*_{\mathbf{p}^*}(i, m) =  
     \begin{cases}
        \theta_H & \text{if } m = m^A, \\
        \max\{\theta_L, 0\} & \text{if } m = m^B.
    \end{cases}
    \]
\end{enumerate}
\end{proposition}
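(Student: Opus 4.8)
The plan is to exhibit the candidate strategy profile $\mathcal{E}^*$ described in (i)--(iv) and verify it is an RPBE, i.e., that (a) the prescribed continuation play in every subgame is a PBE satisfying the extended D1 criterion, (b) the schools have no profitable deviation, and (c) the minimality requirement holds. I would organize the verification by first checking on-path optimality in the subgame following $\mathbf{p}^*$, then extending to off-path subgames, and finally addressing deviations by a single school.

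First I would verify that, fixing $\mathbf{p}^*$, the described continuation is an EPBE of that subgame. On-path, firms' wages are Bayes-consistent: only $\theta_H$ generates $m^A$ (so the wage $\theta_H$ is correct), and $m^B$ is generated by low types (sorting) or by nobody who enrolls (screening), justifying $\max\{\theta_L,0\}$. Student optimality reduces to the standard Spence incentive constraints at $e^R$: by the definition of $e^R$ via $c(\theta_L,e^R)=\theta_H-\max\{\theta_L,0\}$, a low type is exactly indifferent between exerting $e^R$ for wage $\theta_H$ and exerting $0$ for wage $\max\{\theta_L,0\}$, hence weakly prefers $0$; by strict decreasing differences, a high type strictly prefers $e^R$ and wage $\theta_H$ over $0$ and wage $\max\{\theta_L,0\}$. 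In the screening case I also need $\theta_H-f^*=\theta_H\geq 0$ so high types want to enroll, and that low types do not want to enroll and exert $0$ (they would get $\max\{\theta_L,0\}$, but pooling with nobody makes the relevant deviation the one to $m^A$, which is unprofitable). For the off-path messages within this subgame — the only off-path signal is $(i,m^A)$ in the screening case if no low type is around, or more precisely any $(i,m)$ not in $S^{\psi^*}$ — I would invoke the extended D1 inequality \eqref{ineq:D1}: the message $m^A$ requires effort $\geq e^R$, and since the high type has a strict cost advantage and $e^R$ is calibrated to the low type's indifference, the set of wages rationalizing a high-type deviation strictly contains that for the low type, so D1 forces belief $\theta_H$ and wage $\theta_H$ — which is exactly what (iv) prescribes, so no contradiction arises and no additional messages are needed (minimality within the subgame is immediate since $M^*$ uses only two messages, both of which are on-path in the sorting case, and in the screening case $m^A$ must remain available to sustain high-type incentives, while $m^B$ is on-path as the message generated by zero effort that firms would see upon any enrollment deviation).

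Next I would rule out profitable school deviations. A deviating school $j$ picks some $\hat p_j=(\hat f_j,\hat M_j)$; by Theorem~\ref{thm:existence} the ensuing subgame has an EPBE, and I get to choose a worst-case (for the deviator) such EPBE to support the equilibrium. The key economic point, already sketched in the paper's intuition, is that since the other $n-1\geq 1$ schools still offer $(0,M^*)$, any student — of either type — can walk into one of those schools, pay nothing, and obtain exactly their candidate payoff ($\theta_H$ for high types net of $c(\theta_H,e^R)$, and $\max\{\theta_L,0\}$ for low types). Therefore school $j$ cannot attract any student while charging $\hat f_j>0$ unless its monitoring policy offers high types something strictly better net of fee; but the best a high type can obtain in the downstream market is wage $\theta_H$, which they already get, so $\hat f_j>0$ yields zero enrollment and zero profit, while $\hat f_j=0$ yields at most zero profit. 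Hence no deviation is strictly profitable. I would need to be slightly careful in the screening case that the off-path continuation I select does not accidentally let school $j$ profit by attracting high types at a positive fee via an even cheaper separating effort — but $e^R$ is already the cheapest separating effort (any smaller effort the low type would mimic), so the downstream wage of a high type is capped at $\theta_H$ regardless, and the argument goes through.

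The main obstacle I anticipate is the off-path-subgame bookkeeping for the screening case: I must specify, for every deviating policy $\hat p_j$, an EPBE of the continuation game in which school $j$ gets zero (or negative) demand, and check that such an EPBE genuinely satisfies the extended D1 criterion rather than merely PBE. Concretely, when $\hat M_j$ offers some intermediate effort threshold $\hat e<e^R$, D1 applied to that school's "pass" message may force firms to believe the deviator is a high type (giving wage $\theta_H$), which could let school $j$ extract a positive fee up to $\theta_H - (\theta_H - c(\theta_H,\hat e))= c(\theta_H,\hat e)>0$ from high types — seemingly a profitable deviation. The resolution, which I would need to spell out, is that a high type contemplating school $j$ at fee $\hat f_j$ and effort $\hat e$ compares against the outside continuation of $(0,M^*)$ at another school, which nets $\theta_H - c(\theta_H,e^R)$; the deviation is profitable for the high type only if $\theta_H - \hat f_j - c(\theta_H,\hat e) > \theta_H - c(\theta_H,e^R)$, i.e., $\hat f_j < c(\theta_H,e^R)-c(\theta_H,\hat e)$, but for such a fee the \emph{low} type's analogous comparison is $-\hat f_j - c(\theta_L,\hat e)$ versus $\max\{\theta_L,0\} - \max\{\theta_L,0\}=0$ wait — the low type's outside continuation is $\max\{\theta_L,0\}$ at wage $\max\{\theta_L,0\}$ zero effort — so the low type also wants to mimic whenever $\theta_H - \hat f_j - c(\theta_L,\hat e) > \max\{\theta_L,0\}$, and comparing the two thresholds using $c(\theta_L,e^R)=\theta_H-\max\{\theta_L,0\}$ together with decreasing differences shows that whenever the high type strictly gains, so does the low type, destroying separation and hence the wage $\theta_H$ — so firms' D1 belief at $(j,m_j^{\text{pass}})$ is not pinned to $\theta_H$, and an EPBE with low demand for $j$ exists. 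Making this comparison precise is where the real work lies; the rest is routine once Proposition~\ref{prop:competition1} and the Spence-style single-crossing computations are in hand.
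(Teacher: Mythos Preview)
Your proposal is correct and follows essentially the same approach as the paper's proof (Step 3 of the unified proof in the appendix): construct the candidate profile, verify the on-path subgame is an EPBE, specify for every unilateral deviation $\hat p_i$ an EPBE in which all students ignore school $i$ and play the Riley outcome at the remaining schools $j\neq i$ (with firms assigning wage $\max\{\theta_L,0\}$ to every message at the deviating school), and check that this off-path specification survives D1. The paper's verification that D1 permits belief $\theta_L$ at any message of the deviating school with threshold $\hat e<e^R$ is exactly the single-crossing computation you carry out in your final paragraph, namely that $c(\theta_H,e^R)-c(\theta_H,\hat e)<c(\theta_L,e^R)-c(\theta_L,\hat e)$ implies the low type's deviation set strictly contains the high type's. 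One small slip: in the screening case the off-path message at $\mathbf p^*$ is $(i,m^B)$, not $(i,m^A)$---high types send $m^A$ on path---but your subsequent D1 analysis is applied to the correct (low-effort) deviation anyway, so the argument stands.
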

Recall that the Riley outcome is the unique equilibrium outcome under standard forward induction refinements (e.g., the intuitive criterion or D1) in the Spence model, where student effort is directly observable to firms. Here, we show that the Riley outcome, which vanishes under monopoly, reemerges as an RPBE when $n \geq 2$ schools compete to attract students, even with endogenous monitoring.

To understand why this separating outcome is particularly robust, note that in any subgame where a school $i$ adopts the prescribed policy $p_i^* = (f^*, M^*)$, there exists an EPBE in which all students exclusively enroll in school $i$, regardless of the policies chosen by other schools.  
First, the school fee $f^* = 0$ cannot be undercut. Furthermore, since high-type students already earn the maximum wage of $\theta_H$ at school $i$, they can only be attracted to another school by the prospect of exerting less effort. However, lower effort appeals more to low-type students than to high-type students. Consequently, our  refinement attributes any off-path message associated with lower effort to low types, resulting in a wage of $\max\{0, \theta_L\}$, which is unattractive to high types. Thus, high types prefer remaining at school $i$, implying that also low types cannot benefit from switching schools. 
As a result, if one school $i$ adopts the policy $(f^*, M^*)$, any deviation by a competing school $j \neq i$ is deterred by the threat that all students would enroll in school $i$, leaving school $j$ with no profit.

While the outcome in Proposition \ref{prop:R} is \emph{``focal''}, other RPBEs emerge depending on the parameters. In what follows, we focus on symmetric RPBEs, i.e., those $\mathcal{E}^* = (\mathbf{p}^*, \psi^*, \omega^*, \mu^*)$ where schools adopt identical policies, and students' enrollment and effort choices are homogeneous across all schools: ${p}^*_i = (f^*, M^*)$ for all $i \in I$, and $\psi^*_{{p}^*, \theta}(i, e) = \psi^*_{{p}^*, \theta}(j, e)$ for any $\theta \in \{\theta_L, \theta_H\}$, $i, j \in I$, and $e \in \mathbb{R}_+$.

In general, the presence of a single competitor is insufficient to drive fees to zero, as Bertrand competition over fees interacts with competition over monitoring policies, resulting in equilibrium multiplicity. For brevity, we focus on settings where competition is sufficiently fierce to ensure zero fees in all RPBEs, deferring the analysis of less intense (``mild'') competition to Appendix \ref{apd:comp}.\footnote{In such cases, alongside the zero-fee equilibrium, a range of equilibria exists where schools charge positive fees.}

\begin{definition}[Fierce Competition]\label{def:fierce} We say that competition among $n\geq 2$ schools is fierce if \textbf{any} of the following conditions holds: (i) $n > {1}/{\lambda}$, (ii) $n \theta_L > \E \theta$, or (iii) $-(n-1)\theta_L \geq \theta_H$.
\end{definition}
While condition (i) applies to both sorting and screening, condition (ii) pertains only to sorting, and condition (iii) exclusively to screening. 

\begin{proposition}[Zero Fees under Fierce Competition]
\label{prop:comp_Riley}
Suppose schools are in fierce competition. In every symmetric RPBE
$\mathcal{E}^* = (\mathbf{p}^*, \psi^*, \omega^*, \mu^*)$, $f^* = 0$.      
\end{proposition}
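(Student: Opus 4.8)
\textbf{Proof proposal for Proposition \ref{prop:comp_Riley}.}

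The plan is to argue by contradiction: suppose there is a symmetric RPBE $\mathcal{E}^*$ in which every school charges $f^* > 0$. Since the RPBE is symmetric, all schools charge the same fee $f^*$ and use the same monitoring policy $M^*$, and each attracts the same mass of each type. Let $\beta_H, \beta_L \in [0,1]$ denote the total (summed over schools) enrollment probabilities of high and low types, so each school earns profit $f^*(\lambda \beta_H + (1-\lambda)\beta_L)/n$. The first step is to pin down the on-path structure: each enrolled type gets some equilibrium payoff $U^*(\theta) \ge 0$, and because a type-$\theta$ student who enrolls pays $f^*$ and (at best) gets wage $\theta_H$, participation forces $f^* \le \theta_H - c(\theta, e)$ along any on-path effort; in the screening case a low type enrolling requires the wage it obtains to cover $f^*$.

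The heart of the argument is a deviation by a single school $j$ to a policy $\hat p_j = (\hat f, \hat M)$ that undercuts slightly and offers the ``cutoff'' monitoring policy of the worked example in the paper (message $m$ for effort $\ge \hat e$, message $0$ otherwise), with $\hat f$ just below $f^*$ and $\hat e$ chosen so that inequality \eqref{ineq:D1} holds for $\theta = \theta_L, \theta' = \theta_H$ — i.e., so that firms seeing $(j,m)$ must believe the student is a high type and pay $\theta_H$. I would then invoke Theorem \ref{thm:existence} (an EPBE exists in the subgame following $\hat p_j$) and argue that in \emph{every} EPBE of that subgame, all high types (and, in the sorting case, all low types too) strictly prefer to enroll in $j$: a high type can secure $\theta_H - \hat f - c(\theta_H, \hat e)$, which for $\hat e$ small and $\hat f$ close to $f^*$ strictly exceeds $U^*(\theta_H)$ whenever $U^*(\theta_H) < \theta_H - f^* $ — and the gap $\theta_H - f^* - U^*(\theta_H)$ is controlled below using the three fierce-competition conditions. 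This is where the case split on (i), (ii), (iii) enters: each condition guarantees that with $n$ schools splitting the pie, the per-school profit $f^*(\lambda\beta_H + (1-\lambda)\beta_L)/n$ is small enough — relative to what school $j$ can grab by capturing \emph{all} the students it wants — that the deviation is strictly profitable. Concretely, under (i) $n > 1/\lambda$ the incumbent's per-school mass is at most $1/n < \lambda$, so a deviator capturing even just the high types does strictly better at a slightly lower fee; under (ii) $n\theta_L > \E\theta$ (sorting) the analogous comparison uses that full enrollment surplus split $n$ ways is dominated; under (iii) $-(n-1)\theta_L \ge \theta_H$ (screening) the relevant bound is that the wage a pooling message can command is too low to let the incumbent sustain positive fees against a separating deviation.

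After establishing that the deviation attracts all the high types (and the appropriate low types), I would compute the deviator's profit as $\hat f$ times the captured mass and show it strictly exceeds $f^*(\lambda\beta_H + (1-\lambda)\beta_L)/n$ for $\hat f \uparrow f^*$, $\hat e \downarrow 0$, contradicting equilibrium. One must also handle the boundary case $U^*(\theta_H) = \theta_H - f^*$ (high types already extracting their full downstream surplus): here I would instead deviate to an \emph{uninformative} policy at fee $f^* - \varepsilon$, which still attracts all high types (their payoff rises by $\varepsilon$) and in the screening case does not attract low types provided firms' belief after the lone on-path message is pinned down by who actually enrolls — a point needing care because the deviation subgame's EPBE might pool. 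The main obstacle, and the step I expect to absorb most of the work, is precisely this: controlling the \emph{worst-case} EPBE in the deviation subgame. Theorem \ref{thm:existence} only gives existence of \emph{some} EPBE, and the school's profitable-deviation argument must hold for whichever EPBE is selected there; so I need the refinement \eqref{ineq:D1} to be strong enough to force the favorable wage response for the deviator in \emph{all} EPBE of that subgame, and then show the three fierce-competition inequalities are exactly what make the resulting profit comparison go through in each of the sorting/screening subcases.
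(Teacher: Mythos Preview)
Your overall strategy---suppose $f^*>0$, construct a unilateral deviation that undercuts and captures the high types, then split on (i)/(ii)/(iii)---matches the paper's. But there is a genuine gap in your choice of deviation monitoring policy, and it is exactly the ``main obstacle'' you flag at the end without resolving.

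You propose a cutoff $\hat M$ with threshold $\hat e$ chosen ``small'' so that \eqref{ineq:D1} holds. The problem is that \eqref{ineq:D1} involves the equilibrium payoffs $U^{\mathcal{E}^*_{\hat{\mathbf{p}}}}(\theta_H)$ and $U^{\mathcal{E}^*_{\hat{\mathbf{p}}}}(\theta_L)$ \emph{in the deviation subgame}, not the original $U^*(\theta)$; and those payoffs are endogenous to the EPBE selected there. Concretely, the D1 condition at your message $m$ reduces to
\[
U^{\mathcal{E}^*_{\hat{\mathbf{p}}}}(\theta_H)-U^{\mathcal{E}^*_{\hat{\mathbf{p}}}}(\theta_L)\;<\;c(\theta_L,\hat e)-c(\theta_H,\hat e).
\]
For $\hat e$ small the right side is near zero, whereas the left side can be large---for instance, if the selected EPBE in the deviation subgame has high types separating at some effort $\hat e_h>0$ in a non-deviating school (which is perfectly possible since those schools still run $M^*$), then incentive compatibility gives $U^{\mathcal{E}^*_{\hat{\mathbf{p}}}}(\theta_H)-U^{\mathcal{E}^*_{\hat{\mathbf{p}}}}(\theta_L)$ as large as $c(\theta_L,\hat e_h)-c(\theta_H,\hat e_h)$. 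Your small-$\hat e$ cutoff then fails to trigger D1, and the worst-case EPBE can leave the deviator empty. You cannot pick $\hat e$ in advance to beat every possible $\hat e_h$ without forfeiting the ``cheap for high types'' property you also need.

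The paper avoids this circularity by having the deviator offer \emph{perfectly informative} monitoring $\hat M_i(e)=e$. Then, whatever EPBE is selected, let $\hat e_h$ be the maximal effort played outside the deviating school; low types' IC in that EPBE gives $U^{\mathcal{E}^*_{\hat{\mathbf{p}}}}(\theta_H)-U^{\mathcal{E}^*_{\hat{\mathbf{p}}}}(\theta_L)\le c(\theta_L,\hat e_h)-c(\theta_H,\hat e_h)$, and strict decreasing differences then force D1 to put belief one on $\theta_H$ at any effort $E(\hat e_h)$ just above $\hat e_h$. The key is that the perfectly informative menu lets the deviating school ``respond'' to whatever $\hat e_h$ the subgame equilibrium throws up, rather than committing ex ante to a threshold. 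With this device in place, the deviator provably attracts all high types in \emph{every} EPBE of the subgame, and your case split on (i)/(ii)/(iii) then goes through essentially as you sketch (for (iii), the paper's argument is that deterring the deviation requires a large enough mass of low types to enroll, which in turn caps the fee by their average productivity---your one-line gloss is in the right spirit but misses this two-step structure). Your separate ``boundary case'' $U^*(\theta_H)=\theta_H-f^*$ is unnecessary once the perfectly informative deviation is in place.
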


Intuitively, by undercutting competitors’ positive fees and offering a perfectly informative monitoring policy, a school can attract all high-type students seeking to differentiate themselves from low types through effort. If $\lambda f^* > {f^*}/{n}$ (condition (i)), the profit gained from enrolling all high types outweighs the potential loss from losing some low types. 
In the screening case, this same policy deviation prevents $f^* > 0$ unless a sufficient portion of low-type students enroll. However, enrolling a large fraction of low types is incompatible with positive school profits when low types are sufficiently destructive (condition (iii)). Thus, condition (iii) leads to zero fees. 
Finally, note that the risk of losing low types only applies when $f^* > \theta_L$. In that case, however, the deviating school could attract all students with a fee $\theta_L-\epsilon$. Since schools' total profit is bounded above by the maximal social surplus, which is $\mathbb{E}[\theta]$ in the sorting case, condition (ii) also implies zero fees.

Moreover, while multiple RPBE $\mathcal{E}^*$ with zero fees may coexist, we show (in Proposition \ref{prop:competition_fierce}) that they are all characterized by school policies $p^*$ that support an EPBE yielding the Riley outcome (as in Proposition \ref{prop:R}): low types exert no effort and earn $\max\{0, \theta_L\}$, while high types exert the Riley effort $e^R$, earning $\theta_H$. Indeed, we show that every RPBE requires schools' unilateral deviations to be discouraged by the threat of an EPBE where no student enrolls in the deviating school, low-type students exert no effort and earn $\max\{0,\theta_L\}$, high-type students exert the Riley effort $e^R$ and earn $\theta_H$ (i.e., the Riley outcome in terms of effort and wages).

Thus, every RPBE that does not yield the same outcome as Proposition \ref{prop:R} involves EPBEs where the effort choices of students not enrolling in school $j$ depend directly on school $j$'s policy beyond the "selection" channel (the proportion of high types that do not enroll in $j$). If we rule out this potentially unappealing feature, the outcome in Proposition \ref{prop:R} becomes the unique RPBE outcome of the game. This further supports the idea that the Riley outcome is focal in competitive settings.


\begin{definition}
    [Independence of Irrelevant School] \label{assump_invariant}
An RPBE $\mathcal{E}^* = (\mathbf{p}^*, \psi^*, \omega^*, \mu^*)$ satisfies the IIS property if for any $\mathbf{p}, \mathbf{p}'\in \mathcal{A}$ with $p_i'=p_i$  for every $i\in I\setminus \{j\}$, the EPBEs 
$\mathcal{E}^*_{\mathbf{p}}$ and $\mathcal{E}^*_{\mathbf{p}'}$ are such that: 
if the proportion of high types not enrolling in school $j\in I$ is the same in  $\mathcal{E}^*_{\mathbf{p}}$ and $\mathcal{E}^*_{\mathbf{p}'}$, i.e., if $\psi^*_{\mathbf{p}, \theta}(j)=\psi^*_{\mathbf{p'}, \theta}(j)=1$ for every $\theta\in \Theta$ or 
\[\frac{\lambda \left(1-\psi^*_{\mathbf{p},\theta_H}\left(j\right)\right)}{\lambda \left(1-\psi^*_{\mathbf{p},\theta_H}\left(j\right)\right)+(1-\lambda) \left(1-\psi^*_{\mathbf{p},\theta_L}\left(j\right)\right)}=\frac{\lambda \left(1-\psi^*_{\mathbf{p}',\theta_H}\left(j\right)\right)}{\lambda \left(1-\psi^*_{\mathbf{p}',\theta_H}\left(j\right)\right)+(1-\lambda) \left(1-\psi^*_{\mathbf{p}',\theta_L}\left(j\right)\right)},\]
then the strategy of students that do not enroll in $j$ is the same in  $\mathcal{E}^*_{\mathbf{p}}$ and $\mathcal{E}^*_{\mathbf{p}'}$, i.e., 
\[\psi^*_{\mathbf{p},\theta}\left(i,e|i\neq j\right)=\psi^*_{\mathbf{p}',\theta}\left(i,e|i\neq j\right),\] for any $\theta\in \Theta$, $e \in \R_+$, and $i\in I\setminus \{j\}$.

\end{definition}

In other words, IIS excludes RPBEs where two policy profiles, $\mathbf{p}$ and $\mathbf{p}'$, differ only in school $j$'s policy, yet the resulting EPBEs, $\mathcal{E}_{\mathbf{p}}$ and $\mathcal{E}_{\mathbf{p}'}$, prescribe different choices for students not enrolled in school $j$, even though the proportions of high- and low-type students among them remain identical across the two EPBEs. Intuitively, IIS holds if the EPBE choices of students who do not enroll in school $j$ depend only on their types, the policies offered at schools $i \neq j$, and the proportion of high-type students among those not enrolling in $j$. 
In our proof, IIS solely serves to discipline students' behavior at other schools when one school deviates from the equilibrium policy.

The following proposition characterizes (in both sorting and screening) the set of symmetric RPBE outcomes under fierce competition. It establishes that only the one in Proposition \ref{prop:R}, which mirrors the classical Riley outcome, satisfies the IIS property.

\begin{proposition}[Equilibrium under Fierce Competition]\label{prop:competition_fierce} 
Suppose competition among $n\geq 2$ schools is fierce. Every symmetric RPBE $\mathcal{E}^* = (\mathbf{p}^*, \psi^*, \omega^*, \mu^*)$ falls into one of the following two categories:
\begin{enumerate}
   \item[I.] The RPBE outcome of Proposition \ref{prop:R} (\textbf{Riley outcome}), or

\item[II.] {A Semi-pooling RPBE outcome} where \begin{enumerate}
   \item[(i)] Every school $i\in I$ charges $f^*=0$ and adopts the following monitoring policy
   \begin{align*}
   M^*(e) &=  
   \begin{cases}
        m_A \quad &\text{if } e \geq e^R,\\
        m_B \quad &\text{if } e \in [e_l, e^R),\\
        m_C \quad &\text{otherwise,}
    \end{cases}
\end{align*} 
where $e_l \in [0, e^R)$ and $m_A, m_B, m_C \in \mathbb{M}$;

\item[(ii)] All students enroll; low types exert $e_l$; high types exert $e_l$ with probability $q_h \in (0,1)$ and $e^R$ with probability $1-q_h$: $\psi^*_{p^*, \theta_H}(i,e_l)={q_h}/{n}$, $\psi^*_{p^*, \theta_H}(i,e^R)={(1-q_h)}/{n}$,  $\psi^*_{p^*, \theta_L}(i,e_l)={1}/{n}$ 
     for any $i \in I$;

\item[(iii)] Firms offer the following wage schedule to graduates of any school $i \in I$:
\begin{align*}
   \omega^*_{p^*}(i,m) &= 
   \begin{cases}
        \theta_H \quad &\text{if } m = m_A,\\
        w_l \quad &\text{if } m = m_B,\\
        \max\{\theta_L, 0\} \quad &\text{otherwise,}
    \end{cases}
\end{align*}
 where
\begin{align} \label{eq:mixed wage}
    w_l = \frac{\lambda q_h \theta_H + (1-\lambda) \theta_L}{\lambda q_h  + 1-\lambda} \in (\max\{\theta_L, 0\}, \theta_H).
\end{align}
\end{enumerate}
\end{enumerate}

\noindent Moreover, every symmetric RPBE satisfying IIS induces the RPBE outcome of Proposition \ref{prop:R}.
\end{proposition}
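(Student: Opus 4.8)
The plan is to combine Proposition~\ref{prop:comp_Riley} with a tight characterization of the on‑path continuation and a ``Riley‑threat'' deterrence argument, and then to use the IIS property to break the remaining multiplicity. First, by Proposition~\ref{prop:comp_Riley}, $f^*=0$ in every symmetric RPBE, so the subgame following $\mathbf{p}^*=((0,M^*),\dots,(0,M^*))$ is, up to relabeling the symmetric school a student attends, a single‑sender Spence game with monitoring policy $M^*$ and no fee. Let the on‑path wages be $w_1<\dots<w_k$ with $w_j$ generated by message $m_j$, and observe that a student who wants $m_j$ uses the cheapest effort $\underline e_j:=\inf\{e:M^*(e)=m_j\}$, which is attained by right‑continuity of $M^*$; the screening/sorting distinction will enter only through the sign of $\max\{0,\theta_L\}$ and through whether low types enroll.

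\textbf{On‑path structure.} Next I would show the on‑path profile has at most one ``pool'' message plus the top message. By single‑crossing, any pure‑high‑type message requires effort $\ge e^R$, so high types' payoff is at most $\theta_H-c(\theta_H,e^R)$; conversely, if high types got strictly less --- in particular if there were no pure‑high message, so the top types are (partially) pooled --- a competitor could cream‑skim with a slightly richer cutoff policy and a small fee, and the fierce‑competition conditions of Definition~\ref{def:fierce} (which drive all competitors to zero profit while a skimmer captures all high types) make this strictly profitable unless the remaining schools can retreat to an EPBE in which high types already earn $\theta_H-c(\theta_H,e^R)$; by RPBE minimality $M^*$ would then be uninformative, precluding such a retreat --- a contradiction. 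So the top message is pure‑high, at effort exactly $e^R$, with wage $\theta_H$. A short single‑crossing argument (comparing the low‑type and high‑type indifference conditions, which by strict decreasing differences are jointly satisfiable only at $e^R$) rules out three distinct on‑path wages and shows at most one on‑path message carries both types. Hence the on‑path profile consists of $m^A$ (high types, effort $e^R$, wage $\theta_H$) and one pool message (all low types and a fraction $q_h\in[0,1)$ of high types at some effort $e_l\in[0,e^R)$, wage $w_l$), with $w_l$ pinned by high‑type indifference to \eqref{eq:mixed wage} and $e_l$ constrained so that low types do not want to drop to a lower effort whose off‑path wage the extended D1 reads as $\max\{0,\theta_L\}$. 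The case $q_h=0$ is category~I, $q_h>0$ is category~II; minimality then forces $M^*$ to be exactly the cutoff policy of Proposition~\ref{prop:R} or the two/three‑step policy of part~II.

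\textbf{IIS forces the Riley outcome.} For the ``moreover'' clause I would show a category~II RPBE violates IIS. The key computation: if the non‑deviating schools sustain the semi‑pooling continuation, then after a competitor $j$ deviates to a cutoff policy at some $e'\in(e_l,e^R)$ with a small fee $\epsilon>0$, the extended D1 belief at $j$'s high message excludes $\theta_L$ --- because in the semi‑pooling continuation the low type's payoff $w_l-c(\theta_L,e_l)$ strictly exceeds $\max\{0,\theta_L\}$, so by strict decreasing differences $\theta_L$ is strictly less willing than $\theta_H$ to jump to effort $e'>e_l$ --- so that message pays $\theta_H$, high types strictly prefer to move to $j$, and $j$ earns positive profit in every such continuation. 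Thus this deviation can be deterred only by having the other schools retreat to the Riley continuation (which $M^*$ supports as an EPBE, and under which the same D1 belief flips to $\theta_L$, so no one moves to $j$). But applying IIS (Definition~\ref{assump_invariant}) to the profiles obtained from $\mathbf{p}^*$ by replacing $j$'s policy with (a) a dominated policy that attracts nobody and (b) the tempting cutoff‑$e'$ policy above, the students outside $j$ --- who have type composition $\lambda$ in both these profiles and on path --- must play one and the same strategy; that strategy is semi‑pooling on path yet must be Riley to deter (b), a contradiction. Conversely, in the Riley RPBE of Proposition~\ref{prop:R} the D1 belief at a deviator's cheap‑separation message is always $\theta_L$, so the Riley continuation itself deters every deviation and IIS holds; hence every symmetric RPBE satisfying IIS is in category~I.

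\textbf{Main obstacle.} The hard part is not the on‑path algebra but controlling the \emph{entire} set of EPBE that can follow an arbitrary unilateral school deviation: one must establish (a) that no EPBE of a post‑deviation subgame can leave the deviator with positive profit except when the remaining schools play the Riley continuation, and (b) that this Riley continuation is always available, i.e.\ that the equilibrium policy $M^*$, which has already been pinned to the specific cutoff policy above, supports the Riley outcome as an EPBE even when fewer than $n$ schools are active. This is exactly where the extended D1 criterion --- and, for the uniqueness part, the IIS property --- does the real work, and where the fierce‑competition conditions are needed to exclude positive‑fee escapes.
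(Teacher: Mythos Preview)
Your high-level decomposition matches the paper's: invoke Proposition~\ref{prop:comp_Riley} for $f^*=0$; classify the on-path continuation as separating (Riley), fully pooling, or semi-pooling; rule out full pooling; show the top message sits at exactly $e^R$; then use IIS to kill the semi-pooling class. Your IIS paragraph is essentially the paper's Step~6 --- the detour through a ``dominated policy (a)'' is unnecessary (IIS applies directly between $\mathbf{p}^*$ and the skimmer profile, since in both the type composition of students outside $j$ is $\lambda$), but it does no harm.

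The genuine gap is in your ``on-path structure'' paragraph. Two claims there do not hold as stated. First, ``by RPBE minimality $M^*$ would then be uninformative'' is false: a fully pooling RPBE at $e_l>0$ is supported by a two-message $M^*$ with threshold $e_l$ (the low message is off-path and carries D1-belief $\theta_L$), and minimality cannot shrink this to one message without changing the outcome. Second, and more importantly, your deterrence condition --- that the skimmer is unprofitable iff the remaining schools can ``retreat to an EPBE in which high types already earn $\theta_H-c(\theta_H,e^R)$'' --- is neither necessary nor sufficient. The extended-D1 wage at the skimmer depends on $U_H-U_L$, not on $U_H$ alone: for instance, full pooling at $e=0$ with $\lambda$ large gives $U_H=\mathbb{E}[\theta]>\theta_H-c(\theta_H,e^R)$, yet $U_H-U_L=0$ so D1 awards wage $\theta_H$ at any $\epsilon>0$ in the skimmer and high types strictly switch. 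The paper therefore does \emph{not} try to characterize ``which retreats deter''; it shows directly (Steps~4 and~5) that for the specific deviation $(\gamma_\epsilon,\hat M_i(e)=e)$, in \emph{every} EPBE of the post-deviation subgame the top wage at non-deviating schools is strictly below $\theta_H$ (because low types can and would pool there), and then D1 at effort ``just above'' whatever the others are playing delivers $\theta_H$ at the deviator --- so the deviator always attracts high types. That argument is what pins $e_h=e^R$ and kills full pooling, and your payoff-bound shortcut does not substitute for it.

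A smaller omission: you never prove the ``all students enroll'' part of category~II, i.e., that $U^{\mathcal{E}^*_{\mathbf{p}^*}}(\theta_L)>\max\{\theta_L,0\}$ strictly. The paper obtains this (Step~5(i)) by a separate skimmer deviation with cutoff $e_l+\epsilon$, showing that if low types were merely indifferent the deviator could again peel off high types; without it, the strict inequality in \eqref{eq:mixed wage} and full low-type enrollment are unjustified.
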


We sketch the key arguments here. First, by Proposition \ref{prop:comp_Riley}, the equilibrium fee is driven to zero under fierce competition. Second, we prove that any separating RPBE must deliver the Riley outcome as in Proposition \ref{prop:R}. Next, we show that the equilibrium outcome cannot be full pooling. Indeed, a unilateral deviation to a small (but positive) fee paired with a perfectly informative monitoring policy ($M_i(e)=e$ for all $e\geq 0$) would profitably attract at least some high-type students who would benefit from distinguishing themselves from low types. 

We then characterize semi-pooling RPBE outcomes with zero fees. We first show that the equilibrium payoff of low-type students must exceed $\max\{\theta_L,0\}$. 
Thus, low types will attend school and exert an effort that does not identify them as low types. Second, we show that the level of effort required to obtain the high wage $\theta_H$ on path must be the Riley effort $e^R$, and thus the school must allow students to signal it. Similarly to the pooling case, the main strategy for proving these results is to analyze a school’s unilateral deviation to a small positive fee and a more informative monitoring policy that allows high types to distinguish themselves from low types without excessive effort.

\section{Welfare Comparison}
How does competition among schools affect welfare in the signaling game? Absent any friction, the monopolist school maximizes social welfare by inducing zero wasteful effort and ensuring that only productive students are employed. However, high fees allow the monopolist to extract the entire surplus. Competition, as one might expect, redistributes welfare to students. More critically, though, competition reduces total social welfare: While productive students remain fully employed and unproductive students excluded, competition for high types drives the adoption of more informative monitoring structures, increasing wasteful effort and lowering overall welfare. This result reverses the classical economic intuition that competition enhances economic efficiency, potentially raising concerns about pro-competitive policies in the education/testing sector. 

It is worth noting, however, that the sharpness of this conclusion depends on the monopolist's ability to extract the entire surplus through high fees. In what follows, we show that the role of competition is partially restored when students face credit constraints or, equivalently, when schools are subject to a price cap on fees. Indeed, under these market frictions, both monopoly and competition introduce distortions, and the comparison in terms of social welfare depends on the trade-off between job allocation inefficiencies and the cost of wasteful efforts.



\subsection{Extension: Credit Constraint}\label{sec:credit}
So far, we have assumed that students' ability to enroll in schools is unconstrained by their wealth, allowing them to afford arbitrarily high fees. This section assumes students can only afford fees up to some  $K>0$. If $K \geq \theta_H$, the credit constraint is slack, and our preceding analysis holds. Furthermore, since competition depresses fees, our conclusions in the competitive case remain essentially unaffected.\footnote{The only difference is that $f^*=0$ even when competition is less fierce: not only when $n>\frac{1}{\lambda}$ or $n\theta_L>\E\theta$, or 
$-n\theta_L< \theta_H-\theta_L$,  
but also when $n\theta_L>{K}$.} Therefore, we focus on the case where $K \in (0, \theta_H)$ and there is a monopolist school.

In the case of monopoly, students' ability to pay high fees was crucial for the monopolist to extract all the social surplus. While the socially efficient outcome can still arise in the sorting case, in the screening case, the monopolist sacrifices allocative efficiency to maximize profits: in the resulting RPBE, a positive fraction of low types  (with $\theta_L < 0$) enroll and are subsequently employed. Specifically, if $K \geq \E \theta$, the RPBE outcome is unique: (i) the school adopts an uninformative monitoring policy and charges the highest fee $f^* = K$, attracting all high types along with a positive fraction of low types; (ii) all students exert zero effort and receive a wage of $K$ upon graduation. Conversely, if the credit constraint is tighter, $K < \E\theta$, multiple RPBE outcomes arise, characterized by full enrollment and employment (resulting in allocation inefficiency) and, in some cases, positive (wasteful) effort.
\begin{proposition}[Credit Constraint]\label{prop:mon_screenConst} 
Suppose a monopolist school (i.e., $n = 1$) and students can only pay a fee up to $K \in (0, \theta_H)$. 

\begin{enumerate}[(i)]\itemsep0pt
    \item If $K \geq \E\theta$, the RPBE outcome is unique. In the sorting case, the outcome is as described in Proposition \ref{prop:mon_sorting}. In the screening case, the outcome is characterized as follows:
        \begin{enumerate}
            \item[(a)] The school charges the maximum affordable fee, $f^* = K$, and adopts an uninformative monitoring policy, $M^*(e) = 0 \text{ for all } e \in \mathbb{R}_+.
            $
            \item[(b)] All high-type students and a fraction $\alpha_K \in (0, 1]$ of low-type students enroll:         $\psi^*_{\mathbf{p}^*, \theta_H}(i = 1) = 1$ and $\psi^*_{\mathbf{p}^*, \theta_L}(i = 1) = \alpha_K,
            $
            where $\alpha_K$ is given by:
            \begin{align}\label{eq:low enroll prob}
                \frac{\lambda \theta_H + \alpha_K (1 - \lambda) \theta_L}{\lambda + \alpha_K (1 - \lambda)} = K.
            \end{align}
            \item[(c)] Enrolled students exert zero effort: 
           $ \psi^*_{\mathbf{p}^*, \theta}(e = 0 | i = 1) = 1,$ $ \text{for } \theta \in \{\theta_L, \theta_H\}.
           $ 
            \item[(d)] Firms pay all graduates a wage of $K$:             $\omega^*_{\mathbf{p}^*}(m = 0) = K$.
        \end{enumerate}
    \item If $K<\E\theta$, every RPBE outcome still features $f^*=K$ and full enrollment. However, in addition to outcomes where all students exert zero effort, there also exist RPBE outcomes where high-type students exert positive effort.
\end{enumerate}
\end{proposition}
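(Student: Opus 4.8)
The plan is to reuse the three-step architecture behind Proposition~\ref{prop:mon_sorting}: (1) bound the monopolist's equilibrium profit and argue that attaining the bound pins down the claimed outcome; (2) exhibit an RPBE attaining it; (3) show every RPBE attains it. \emph{Step 1 (profit bound).} For any policy $p$ with induced EPBE $\mathcal{E}^*_p$, write $\beta_p=\lambda\psi^*_{p,\theta_H}(1)+(1-\lambda)\psi^*_{p,\theta_L}(1)$ for the enrolled mass. Two constraints bind: the credit cap gives $\pi^*_p=\beta_p f\le\beta_p K\le K$; and, since every enrollee weakly prefers enrolling, she pays no more than the wage of her message, so summing over enrollees and using that competitive firms pay each message pool its expected productivity, $\pi^*_p\le\lambda\psi^*_{p,\theta_H}(1)\,\theta_H+(1-\lambda)\psi^*_{p,\theta_L}(1)\,\theta_L$. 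Maximizing $\min\{(\lambda x+(1-\lambda)y)K,\ \lambda x\theta_H+(1-\lambda)y\theta_L\}$ over $(x,y)\in[0,1]^2$ yields the bound $\bar\pi$: in the screening case, $\bar\pi=K$ when $K<\E\theta$ and $\bar\pi=(\lambda+(1-\lambda)\alpha_K)K=\lambda\theta_H+(1-\lambda)\alpha_K\theta_L$ when $K\ge\E\theta$, where $\alpha_K\in(0,1]$ is exactly the root of~\eqref{eq:low enroll prob} (indeed $\alpha_K\le1\iff K\ge\E\theta$); in the sorting case, $\bar\pi=\min\{K,\E\theta\}$. A short argument then shows $\pi^*_{p^*}=\bar\pi$ forces $f^*=K$, full high-type enrollment, low-type enrollment $\alpha_K$ (screening with $K\ge\E\theta$) or $1$ (otherwise) — this is where~\eqref{eq:low enroll prob}, read as the low types' participation constraint at fee $K$, pins down $\alpha_K$ — and, since equality in the second constraint requires every enrollee's effort cost to vanish, zero on-path effort; RPBE minimality then forces an uninformative on-path monitoring policy, with on-path wage equal to the enrollees' average type ($K$ when $K\ge\E\theta$, by~\eqref{eq:low enroll prob}, and $\E\theta$ otherwise).

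\emph{Step 2 (existence).} In the sorting case, and in the zero-effort screening case with $K<\E\theta$, take $p^*=(\min\{K,\E\theta\},M^*)$ with $M^*$ uninformative; the argument of Proposition~\ref{prop:mon_sorting} applies verbatim, the fee being affordable. In the screening case with $K\ge\E\theta$, take $p^*=(K,M^*)$, $M^*$ uninformative, and select in the $p^*$-subgame the EPBE in which all high types and a fraction $\alpha_K$ of low types enroll (the wage $K$ makes everyone indifferent, and $M^*$ admits no off-path message, so this is an EPBE); since by Step~1 the school never earns more than $\bar\pi=\pi^*_{p^*}$, it has no profitable deviation, and completing the profile with EPBEs elsewhere (Theorem~\ref{thm:existence}) and imposing minimality yields an RPBE. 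For part (ii), a positive-effort RPBE is obtained with $p^*=(K,\tilde M)$ where $\tilde M(e)=m$ for $e\ge e'$ and $=m'$ for $e<e'$, $e'>0$ small: on path everyone enrolls, exerts $e'$, and earns $\E\theta$, while the refinement attributes the off-path message $m'$ to low types (because $c(\theta_H,e')<c(\theta_L,e')$), so lower effort earns only $\max\{0,\theta_L\}$ and is unprofitable; the school still collects $K=\bar\pi$.

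\emph{Step 3 (uniqueness).} Suppose some RPBE gives the school $\pi'_{p'}<\bar\pi$. I would deviate to the cutoff policy $\hat p=(\hat f,\hat M)$ with $\hat M(e)=\epsilon$ for $e\ge\epsilon$ and $=0$ otherwise, and $\hat f=\min\{K,\E\theta\}-c(\theta_L,\epsilon)-\delta$ (hence $\hat f=K-c(\theta_L,\epsilon)-\delta$ under screening), with $\epsilon,\delta>0$ small. The construction forces, in \emph{every} EPBE of the $\hat p$-subgame: that the message $\epsilon$ is on path (otherwise inequality~\eqref{ineq:D1} with $\hat e=\epsilon$, uniformly for small $\epsilon$, assigns it to high types at wage $\theta_H$, which a high type would strictly profit from generating as $\hat f+c(\theta_H,\epsilon)<\theta_H$); that some low type generates $\epsilon$ (otherwise its pool is all high with wage $\theta_H$, and a low type profits by generating it since $\theta_H-\hat f-c(\theta_L,\epsilon)=\theta_H-K+\delta>0$); and, because $\hat f$ leaves low types at most indifferent while high types pooling with them on $\epsilon$ earn the strictly positive surplus $c(\theta_L,\epsilon)-c(\theta_H,\epsilon)$, that all high types enroll and generate $\epsilon$, with low-type enrollment determined by their indifference/participation condition, which converges to $1$ (sorting, or screening with $K<\E\theta$) or to $\alpha_K$ (screening with $K\ge\E\theta$) as $\epsilon,\delta\to0$. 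Hence the $\hat p$-subgame profit exceeds $\bar\pi-o(1)>\pi'_{p'}$, a contradiction; in particular, when $K<\E\theta$, since $\pi^*_p\le K$ always with equality only if the fee equals $K$ and enrollment is full, every RPBE features $f^*=K$ and full enrollment.

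\emph{Main obstacle.} The crux is Step~3: showing that \emph{no} EPBE of the $\hat p$-subgame has low profit. Because the target outcome involves \emph{partial} low-type enrollment (unlike the unconstrained cases), one cannot simply argue ``everyone enrolls''. One must (i) collapse the continuum of knife-edge EPBEs in which only a fraction of high types enroll — which is exactly why $\hat f$ is set strictly below, not equal to, the low types' break-even level, so that high types \emph{strictly} prefer enrolling and therefore all do — and (ii) run a case analysis, parallel to cases (i)–(ii) in the proof of Proposition~\ref{prop:mon_sorting}, over which types generate message $\epsilon$ versus message $0$ (under screening, message $0$ sent only by low types carries the negative wage $\theta_L$ and is thus never chosen, which trims the cases), invoking strict decreasing differences and the extended D1 refinement at each branch. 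Recovering the precise low-type rate $\alpha_K$ in the limit in the screening sub-case $K\ge\E\theta$ is the one genuinely computational point, but it follows from the indifference condition once (i)–(ii) are established.
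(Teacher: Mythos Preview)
Your three-step architecture matches the paper's, and Steps~1 and~2 are essentially the same (the paper phrases the profit bound via the maximal feasible low-type share $\tilde\alpha_f$ at each fee level rather than via a two-constraint $\min$, but the content is identical). The real difference is the deviation used in Step~3. The paper keeps the fee at the \emph{full} cap, $\hat f=K$, and instead uses a richer monitoring policy that is perfectly informative on $[0,\epsilon)$ and pools above: $\hat M(e)=e$ for $e<\epsilon$ and $\hat M(e)=\epsilon$ for $e\ge\epsilon$. The continuum of messages below $\epsilon$ lets the D1 refinement bite \emph{iteratively}: if the minimal high-type effort $\hat{\underline e}_h$ were strictly below $\epsilon$, every message $\gamma\in(\hat{\underline e}_h,\epsilon]$ would be attributed to $\theta_H$ and carry wage $\theta_H$, triggering an upward deviation; hence all enrolled students must exert exactly $\epsilon$, and the low-type share is then pinned by indifference at fee $K$, converging to $\alpha_K$ directly. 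This sidesteps the case analysis you flag as the ``main obstacle.''

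Your two-message cutoff with a reduced fee can be made to work, but your stated fee is internally inconsistent in precisely the case that drives part~(i). The formula $\hat f=\min\{K,\E\theta\}-c(\theta_L,\epsilon)-\delta$ gives $\hat f\approx\E\theta$, not $K$, in screening with $K>\E\theta$; yet your own computation two lines later (``$\theta_H-\hat f-c(\theta_L,\epsilon)=\theta_H-K+\delta$'') and your parenthetical both presume $\hat f\approx K$. Taking the formula literally, with $\hat f\approx\E\theta$ the low-type indifference condition forces \emph{full} low-type enrollment, and the deviation profit is only $\approx\E\theta$, which is strictly below $\bar\pi=K[\lambda+(1-\lambda)\alpha_K]=\lambda\theta_H+(1-\lambda)\alpha_K\theta_L$ whenever $\alpha_K<1$; so this deviation fails to rule out RPBEs with profit in $(\E\theta,\bar\pi)$. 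Replacing the formula by $\hat f=K-c(\theta_L,\epsilon)-\delta$ throughout repairs the argument, but the paper's choice---keeping $\hat f=K$ and letting the finer monitoring do the work---is cleaner: the target fee appears exactly rather than only in a double limit over $\epsilon$ and $\delta$, and the subgame analysis is shorter.
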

Intuitively, when $K > \E\theta$, the credit constraint binds only in the screening case, as the monopolist's unconstrained optimal policy has $f^* = \theta_H > K$. In this scenario, the monopolist could maintain allocative efficiency, attracting only high types, by allowing them to reveal any effort up to $e^R_K$, where $\theta_H - c(\theta_L, e^R_K) = K$. This policy would ensure that only high types enroll, generating $\lambda K$ revenue. However, if the school limits the maximum signalable effort to $\bar{e} < e^R_K$ while charging $K$, it would also attract a fraction $\alpha_K(\bar{e})$ of low types willing to exert $\bar{e}$ to be pooled with high types. Here, $\alpha_K(\bar{e})$ is defined by:
\[
\frac{\lambda \theta_H + \alpha_K(\bar{e}) (1-\lambda) \theta_L}{\lambda + \alpha_K(\bar{e}) (1-\lambda)} = K + c(\theta_L, \bar{e}).
\]
Since $\alpha_K(\bar{e})$ (and thus profits) increase as $\bar{e}$ decreases, the school optimally sets $\bar{e} = 0$, explaining the result in point (i).

On the other hand, when $K < \E\theta$, multiple RPBEs arise in both the screening (if $\E\theta > 0$) and sorting cases because the school’s maximization problem becomes slack. Once the school attracts all students while charging the maximum fee of $K$, it becomes indifferent to students' effort levels. For instance, the school might offer an uninformative monitoring structure, limiting efforts to zero and leaving students with a utility of $\E\theta - K > 0$. Alternatively, the school could allow students to reveal effort up to $\bar{e}'$ such that $\E\theta - K = c(\theta_L, \bar{e}')$. This would induce an EPBE where all students enroll and exert $\bar{e}'$. While both policies yield the same profit for the school, the latter reduces students’ payoffs to zero, inducing wasteful effort.

\textbf{Welfare Comparison:} When students face credit constraints, or equivalently, when schools face a price cap on fees, both monopoly and competition introduce distortions in the screening case. A monopolist school constrained by a fee cap optimally compensates for lost revenue by expanding enrollment, partially pooling productive and unproductive types, which reduces job allocation efficiency. Whether a planner prefers monopoly or competition depends on the relative magnitude of job allocation inefficiencies versus the cost of wasteful efforts. For instance, if the fee cap is relatively high the monopolist delivers higher social welfare (by continuity). Conversely, if the fee cap is sufficiently low (e.g., resulting in full employment of unproductive types under monopoly) and the cost for high types to exert the Riley effort is relatively low, competition yields higher social welfare.

\section{Discussion}\label{Discussion}
To conclude, we discuss our key assumptions, their role, and possibilities for future investigation.  

Our model assumes that each student's outside option is exogenously determined and normalized to 0. Thus, schools offer not only monitoring structures but also access to employers. This reflects contexts where education is necessary for skill-intensive jobs, while workers without education are restricted to jobs where skills are irrelevant. It also accounts for settings where schooling helps students develop innate abilities, but their exam preparation effort is unproductive. 
Nevertheless, our logic extends beyond these specific scenarios to settings where the outside option is endogenously determined, e.g., when students who opt out earn their (conditional) expected type. Intuitively, while this outside option forces a monopolist to leave students a welfare of $\theta_L$ in the sorting case (when $\theta_L>0$), the resulting monopolist RPBE remains qualitatively unchanged: zero effort is exerted, and all productive types are enrolled and employed. In fact, by allowing students to signal a vanishing level of effort and charging an appropriate fee (converging to $\E\theta - \theta_L$ in the sorting case and to $\theta_H$ in the screening case), schools can prevent equilibria where productive types opt-out and extract their maximum attainable surplus. 
In the competition case, this variation is also unlikely to have a significant impact: fees will still be driven toward 0, and competition for high types will still lead to more wasteful effort. However, the situation becomes more interesting when students' outside options are type-dependent.

Our model assumes that students' types are binary for the economy of exposition. Extending our analysis to multiple types will not alter the main qualitative conclusions. As with binary types, a monopolist school would offer an uninformative monitoring policy (allowing students to signal only a vanishing level effort) and charge a fee of $f^* = \E{\theta|\theta \geq 0}$, inducing no effort and ensuring full employment of all productive types. In the competition case, our logic still holds, with schools' competing policies driving inefficiently high wasteful effort. The main challenge would be characterizing the full set of RPBEs, as schools may compete to attract multiple types.

We limited schools to choosing deterministic monitoring policies and did not allow them to offer menus of fees and monitoring policies to ``screen'' students. These limitations are without loss under monopoly as the school both maximizes and fully extracts the entire surplus.\footnote{Even when students are credit-constrained, such an expansion remains irrelevant, as schools already extract all surplus that can be obtained.} Relaxing the first limitation requires an extension of the refinement. Expanding the policy set to include menus or stochastic monitoring would, at most, introduce vacuous equilibrium multiplicity.  
Even under competition, expanding the available policy set would have minimal impact. Allowing competition through menus would be irrelevant, as students are drawn to a school by the specific combination of fees and monitoring they select, not by the entire menu. Finally, while introducing stochastic monitoring might complicate the characterization of all possible RPBE in competitive settings, it would not alter the result that competition induces inefficiently high effort, thus reducing welfare (the proof of Proposition \ref{prop:competition1} remains valid).

\Xomit{
\textit{Discussion: Intuitive Criterion.} The paper opts to extend the D1 refinement rather than use the more popular Intuitive Criterion. In particular, our game is richer than standard signaling models since we allow for an endogenous and flexible monitoring policy. As a result, the Intuitive Criterion has insufficient ``bite''--- see Appendix \ref{sec:ic} for a detailed discussion. 
}


Finally, our analysis focuses on a set of PBEs that survives the refinement in Definition \ref{df:refine}. This refinement allows us to make sharp predictions about monitoring structures, welfare, and the effects of competition among signaling intermediaries. 
Without such a refinement, the set of PBEs would be vast, as the classical multiplicity of signaling games is compounded by the endogeneity of the monitoring function in our setting. For instance, in the screening case, any monitoring policy could be sustained in a PBE if schools believe that deviations on their part would result in zero enrollment due to firms’ pessimistic beliefs.  
To make progress, we extended the D1 refinement \citep{banks1987equilibrium, cho1987signaling} to our richer setting with endogenous signaling technologies.\footnote{Qualitatively similar results could be obtained by focusing on sequentially stable outcomes \citep{dilme2024sequentially} instead of the $D1$ refinement. Simpler refinements such as the Intuitive Criterion alone  lack sufficient ``bite'' in our setting, leaving multiple equilibrium outcomes even in the monopoly case.}
Our RPBE refinement applies the standard forward induction reasoning to every subgame: given any vector of fees and monitoring policies offered by the school (on- and off-path), if one type incurs a higher cost to send an out-of-equilibrium message than another, the receiver should not believe the message originates from the former type. 

\newpage
\nocite{ali2022sell}
\nocite{lu2019selling}
\nocite{spence1974competitive}
\nocite{Riley1975competitive}
\nocite{dranove2010quality}
\singlespacing

\bibliographystyle{aea}
\bibliography{signaling}

\newpage
\appendix

{
\section{Proof of Theorem \ref{thm:existence}}\label{sec:existence}

In this section, we establish that there exists an EPBE in the subgame following any policy $\mathbf{p} = (f_i, M_i)_{i\in I}$ where, for every $i\in I$, $f_i\leq\theta_H$, and $M_i$ is a deterministic and {right-continuous} function of the student's effort. 
Without loss of generality, let $\mathbb{M} \subset \mathbb{R}$. Further, we assume that $M_i$ is either continuous or gridded (i.e., for any $e \in \mathbb{R}_+$, there exists $k \in \mathbb{N}$ such that $M_i(e) = k \epsilon$, where $\epsilon >0$ can be arbitrarily small).\footnote{The continuous case includes the perfectly informative monitoring policy, i.e., $M_i(e) = e$ for any $e \in \mathbb{R}_+$.}

\Xomit{
\section{Existance OLD}

\subsection{Monopoly}
Fix the monopolist school's policy to $p = (f, M)$. For any $m \in M(\mathbb{R}_+)$, let $e_m$ denote the minimum effort required to generate it: $e_m = \min_{e \in \mathbb{R}_+ : M(e) = m} e$. Without loss of generality, relabel messages so that $m = e_m$. Denote by $\underline{\psi}$ the strategy that delivers a student the highest payoff between enrolling with zero effort and not enrolling at all; the payoff following $\underline{\psi}$ is therefore at least $\underline{u}=\max\{0, \theta_L - f\}$.
Finally, define $m^* = M(e^*)$, where $e^* \in \mathbb{R}_+$ is the maximal effort that is rationalizable for the low type: \mc{rationalizable is probably the wrong word Why not "where $e^* \in \mathbb{R}_+$ is the maximal effort that the low type might be willing to exert"}: 
\[
\theta_H -c(\theta_L, e^*) -f=  \underline{u}.
\]
By definition, $
c(\theta_L,e_{m^*})+f\leq\theta_H-\underline{u}< c(\theta_L,e_{{m}})+f$ for any $m>m^*$. Thus, $e > e_{m^*}$ is strictly dominated for $\theta_L$-students, who would nonetheless prefer exerting $e_{m^*}$ to obtain the wage $\theta_H$ rather than receiving their minimal payoff $\underline{u}$.

\Xomit{We further denote $$ {m}^*_+=\begin{cases}
m^* & \text{ if $\left\{m\in M({\R_+}): m>m^* \right\}=\emptyset$}, \\
    \inf\left\{m\in M({\R_+}): m>m^* \right\} \,\,\,\,\,\,\,\,\,& \text{ otherwise.}
\end{cases} $$
In what follows, we will show that no matter (i) ${m}^*_+>{m}^*$ or (ii) ${m}^*_+ = {m}^*$, there is always an RPBE in the subgame following the policy $p = (f,M)$.
}

\Xomit{
\paragraph{\textbf{Case 1: there exists a message \boldsymbol{$\tilde m \in M(\R)$} such that \boldsymbol{$\tilde m > {m}^*$}.}}
}
\vspace{0.2 in}
\textbf{Case 1:} Suppose that, given $p$, it is too costly or impossible for high-type students to signal an effort higher $e_{m^*}$: i.e., suppose that either $\{\tilde m \in M(\R): \tilde m > {m}^*\} = \emptyset$ or
\begin{align}\label{ineq:exist_mon_pool}
    \theta_H-c(\theta_H,e_{m})-f\leq \underline{u}+c(\theta_L,e_{m^*})-c(\theta_H,e_{m^*})
\end{align}
 for every message $m>m^*$.
 Then, following $p$, there exists a (semi-)pooling EPBE  where:
    \begin{enumerate}
        \item high types enroll and exert $e_{m^*}$; i.e., $\psi_{p,\theta_H}(i=1,e_{m^*})=1$; low types randomize between $(i=1,e_{m^*})$, with probability $q\in (0,1]$, and $\underline{\psi}$, with probability $1-q$: 
        \[
q= 
\begin{cases}
    1 &\text{ if }  \bar{w}\leq \E \theta,\\
    \frac{\lambda}{1-\lambda} \left(\frac{ \theta_H}{\bar{w} - \theta_L }-1\right) &\text{ otherwise.}
\end{cases}
\]
where $\bar{w} = c(\theta_L, e_{m^*}) +f+ \underline{u}$ is the wage that, when paired with effort $e_{m^*}$, yields $\theta_L$-students their minimal payoff $\underline{u}$.

        \item Firms' wage scheme is 
\[
\omega_p(i=1, m)= 
\begin{cases}
    \max\{\theta_L,0\} &\text{ if }m \in [0, m^*),\\
    \max\{\bar{w},\E \theta \} &\text{ if }m = m^*,\\
    \theta_H  &\text{ if }m > m^*.\\
\end{cases}
\]
The associated belief system is given by
\[
\mu_p(\theta_H|i=1, m)= 
\begin{cases}
    0 &\text{ if }m \in [0, m^*),\\
    \frac{\lambda }{\lambda + (1-\lambda)q } &\text{ if }m = m^*,\\
    1  &\text{ if }m > m^*.\\
\end{cases}
\]
    \end{enumerate}

To verify, first note that firms' wage scheme is consistent with their belief system, and on-path beliefs are updated according to Bayesian rule. 

Second, low type would not deviate. Indeed, deviations to selecting efforts $e\neq \{0,e_{m^*}\}$ are obviously dominated, so we just need to check the low-type students' payoffs at $(i=1,e_{m^*})$ and $\underline{\psi}$. If $\bar{w} > \E \theta$, low-type students can optimally randomize (as prescribed) between $(i=1,e_{m^*})$ and $\underline{\psi}$ as they they both deliver a payoff of $\underline{u}$.
Finally, if $\bar{w} \leq \E \theta$, low-type students can optimally select (as prescribed) $(i=1,e_{m^*})$ as this allows them them to fully pool with high-type students and obtain a payoff of $\E \theta-c(\theta_L,e_{m^*})-f\geq \bar{w}-c(\theta_L,e_{m^*})-f=\underline{u}$.

Third, high-type students would not deviate from exerting $e_{m^*}$. Indeed, 
increasing their effort to $\tilde e> e_{m^*}$ is suboptimal: This result is trivial if $M(\tilde e)\leq m^*$,\footnote{It would be strictly dominated by $e_M(\tilde e)<\tilde e$.} 
and it is a direct consequence of \eqref{ineq:exist_mon_pool} otherwise; note in fact that  \eqref{ineq:exist_mon_pool} implies that if $M(\tilde e)=m>m^*$, then $\theta_H - c(\theta_H, \tilde{e})-f \leq \underline{u}+c(\theta_L, e_{m^*})-c(\theta_H, e_{m^*})=\bar{w}-f-c(\theta_H, e_{m^*})
\leq \max\{\bar{w},\E \theta \}-c(\theta_H, e_{m^*})-f$. 
    Moreover, since high-type students have a cost advantage in exerting effort and, as we showed, low-type students weakly prefer $(i, e_{m^*})$ to $\underline{\psi}$ (which involves no effort), high-type students strictly prefer $(i, e_{m^*})$ to $\underline{\psi}$ and, more generally, to any effort level $\tilde{e} < e_{m^*}$.

Finally, we need to check that off-path beliefs survive our refinement. First note that deviation to any higher effort $\tilde e > e_{m^*}$ is unattractive for both types (given that wage is capped by $\theta_H$). Therefore, our refinement does not restrict off-path belief upon observing $m > m^*$; in particular, the belief $\mu(\theta_H|p, m)=1$ for all $m > m^*$ is consistent with our refinement.
Besides, since high-type students have a cost advantage in exerting effort, deviation to any lower effort $\tilde e < e_{m^*}$ is more attractive for low-type students. 
To see this more clearly, for any $m < m^*$, 
\begin{align*}
\omega(p,m^*) - c(\theta_H, e_{m^*}) + c(\theta_H, e_{m}) 
&> \omega(p,m^*)  - c(\theta_L, e_{m^*}) + c(\theta_L, e_{m})
\end{align*}
where the inequality follows from the assumption that $c(\theta,e)$ has strictly decreasing differences. 
Thus, our refinement is consistent with  $\mu(\theta_H|p, m)=0$ for all $m < m^*$.

\vspace{0.2 in}
\textbf{Case 2:} Suppose, instead, that signalling an effort higher $e_{m^*}$  it is not too costly for high-type students; i.e., there exists a message $m' > m^*$ 
such that $\theta_H-c(\theta_H,e_{m'})-f> \underline{u}+c(\theta_L,e_{m^*})-c(\theta_H,e_{m^*}).$ Then, defining $m^*_+ = \inf\left\{m \in M(\mathbb{R}_+): m > m^* \right\},$ we have
\begin{align}\label{m+}
    \theta_H - c(\theta_H, e_{m^*_+}) - f > \underline{u} + c(\theta_L, e_{m^*}) - c(\theta_H, e_{m^*}).
\end{align}

We show that there exists a separating RPBE in the subgame following $p$ where 
    \begin{enumerate}
        \item high-type students exert effort $e=e_{{m}^*_+}$; low-type students choose $\underline{\psi}$;
        
         \item firms' wage scheme is given by
\[
\omega(p, m)= 
\begin{cases}
    \max\{\theta_L,0\} &\text{ if }m \in [0, {m}^*_+),\\
    \theta_H &\text{ if }m \geq {m}^*_+.
\end{cases}
\]
The associated belief system is given by
\[
\mu(\theta_H|p, m)= 
\begin{cases}
   0 &\text{ if }m \in [0, {m}^*_+),\\
   1 &\text{ if }m \geq {m}^*_+.
\end{cases}
\]
    \end{enumerate}

   To verify, first note that firms' wage scheme is consistent with their belief system, and on-path beliefs are updated according to Bayesian rule. Second, low-type students would not deviate from $\underline{\psi}$ since 
   $\underline{u}=\theta_H-f-c(\theta_L,e_{m^*})>\theta_H-f-c(\theta_L,e_{m^*_+})$. Third, high-type students would not deviate from  $(i=1,e_{m^*_+})$: (i) $e>e_{m^*_+}$ incurs higher cost effort and delivers the same wage $\theta_H$; (ii) $\underline{\psi}$ results in a lower payoff since, by condition \eqref{m+}, $\underline{u} < \underline{u}-c(e_{m^*},\theta_H)+c(e_{m^*},\theta_L) < \theta_H-c(e_{m^*_+},\theta_H) - f.$

Finally, we check that off-path beliefs survive our  refinement. Since deviations to sending any higher message $\tilde m > m^*_+$ are unprofitable for all students, our refinement does not restrict the off-path belief for all $m > m^*_+$. Therefore, we only need to show that any message $\tilde m < m^*_+$ is more attractive for low-type students to send.
Indeed, by definition of ${m^*_+}$, if ${\tilde m<m^*_+}$ then $\tilde m \leq m^*$, i.e., $e_{\tilde m} \leq e_{m^*}$. So, by
condition \eqref{m+} and strictly decreasing differences of $c$ in $(\theta,e)$,
\[\theta_H-c(\theta_H,e_{m^*_+})-f>\underline{u}- c(\theta_H,e_{\tilde m})+c(\theta_L,e_{\tilde m}).\] 
Rearranging terms, we get
\[\underline{u} + c(\theta_L, e_{\tilde m})<\left(\theta_H-c(\theta_H, e_{m^*_+})-f \right)+ c(\theta_H, e_{\tilde m}).\] 
Thus, the prescribed off-path belief $\mu(\theta_H|p, \tilde m)=0$ upon observing any message $\tilde m < m^*_+$ is consistent with our refinement.

\Xomit{Suppose now ${m}^*_+={m}^*$. Then 
\begin{itemize}
    \item either $\left\{m\in M({\R_+}): m>m^* \right\}=\emptyset$, in which case there is a RPBE equivalent to (1) or (2),
    \item or $c(e_{m^*},\theta_L)=\theta_H-\max\{\theta_L,f\}$ and for every $\epsilon>0$ there exists $e_j\in (e_{m^*},e_{m^*}+\epsilon)$ such that $\theta_H-c(e_{j},\theta_L)<\max\{\theta_L,f\}<\theta_H-c(e_{j},\theta_H)$. In this case there exists a separating RPBE in the subgame following $p$ where 
    \begin{itemize}
        \item high-type students choose $e={m}^*$ 
        \item low-type students choose $e=0$ if $\theta_L>f$ and not enrolling if $\theta_L\leq f$.
         \item $\omega(p, s)=\max\{\theta_L,0\}$ for all $s< {m^*}$, and $\omega(p, j)=\theta_H$ for all $j\geq{m^*}$
    \end{itemize}  
To see this note that 
    \begin{enumerate}
        \item correct belief update (and wage) on path.
        \item low type would not deviate since $\theta_H-\max\{\theta_L,f\}<c(e_{m^*_+},\theta_L)$
        \item high-type students would not deviate from $e_{m^*}$: deviation up clearly sub-optimal for H; deviating down also sub optimal as $\theta_H-c(e_{m^*},\theta_H)>\max\{\theta_L,f\} $
    
    \item Deviation to $e_s< e_{m^*}$ is more attractive for low rather than high-type students because of the strictly supermoduar cost function. Then $\omega(p, s)=\theta_L$ for all $s\leq {m^*}$ is consistent with RPBE.    
    \end{enumerate}   

\end{itemize}

Consider now non-deterministic messages. more complicated but should still go through.

\Xomit{
\textcolor{blue}{[Generalized deterministic messages] Fix any announced policy $p = (f,M)$ where $f \in [0, \theta_H]$, and $M$ generates deterministic messages. Let $\hat{e}$ solve:\footnote{Existence and uniqueness follow from the fact that $c(\theta_L, e)$ is continuous and strictly increasing in $e$ and spans from $0$ to $\infty$.}
\begin{align*}
c(\theta_L,e) = \theta_H -\max\{\theta_L,f\}.
\end{align*}
}

Define $\tilde{e}$ as follows,
\begin{align*}
\tilde{e} = \inf{e \geq \hat{e}}
c(\theta_L,e) = \theta_H -\max\{\theta_L,f\}.
\end{align*}
such that $\tilde{e}$ is the lowest effort that generates a different signal 
solve:\footnote{Existence and uniqueness follow from the fact that $c(\theta_L, e)$ is continuous and strictly increasing in $e$ and spans from $0$ to $\infty$.}
\begin{align*}
c(\theta_L,e) = \theta_H -\max\{\theta_L,f\}.
\end{align*}
Then:
\begin{enumerate}
    \item 
    If $e_{m^*}>0$ and $\theta_H-\max\{\theta_L,f\}\leq c(e_{m^*_+},\theta_H)-c(e_{m^*},\theta_H)+c(e_{m^*},\theta_L) $, then there exists a (semi)-pooling RPBE in the subgame following $p$ where 
    \begin{itemize}
        \item high-type students choose $e_{m^*}$ 
        \item low-type students choose, with probability $q\in (0,1]$, $e_{m^*}$ and, with probability $1-q$, either $e=0$ (if $\theta_L>f$) or not enrolling (if $\theta_L\leq f$).
        \item $\omega(p, 0)=\max\{\theta_L,0\}$ 
        \item $\omega(p, m^*)=\max\{c(e_{m^*},\theta_L)+\max\{\theta_L,f\},\E \theta \}$
        \item $\omega(p, s)=\theta_L$ for all $s\in (0,{m^*})$, and $\omega(p, j)=\theta_H$ for all $j>{m^*}$
        \item $q=1$ if $c(e_{m^*},\theta_L)+\max\{\theta_L,f\}\leq \E \theta$; otherwise $q$ is such that $c(e_{m^*},\theta_L)+\max\{\theta_L,f\}=\frac{\lambda \theta_H+(1-\lambda )q \theta_L}{\lambda+(1-\lambda )q}$.
    \end{itemize}  
To see this note that 
    \begin{enumerate}
        \item correct belief update (and wage) on path.
        \item low type would not deviate: if $c(e_{m^*},\theta_L)+\max\{\theta_L,f\} \geq \E \theta$, then low type is indifferent between $e_{m^*}$ delivering $c(e_{m^*},\theta_L)+\max\{\theta_L,f\}-c(e_{m^*},\theta_L)-f=\max\{\theta_L,f\}-f$ and doing the best between $e=0$ and not enrolling, which delivers $\max\{0, \theta_L-f\}$; if $c(e_{m^*},\theta_L)+\max\{\theta_L,f\}<\E \theta$ then low type strictly prefers $e_{m^*}$.
        \item high-type students would not deviate from $e_{m^*}$: deviation down clearly sub-optimal for H; deviating up to $e_j> e_{m^*}$, would also deliver H a lower payoff  $\theta_H -c(e_{j},\theta_H)-f<
    \omega(p, m^*)-c(e_{m^*},\theta_H)-f$ 
(indeed  $\theta_H-\max\{\theta_L,f\} \leq c(e_{m^*_+},\theta_H)-c(e_{m^*},\theta_H)+c(e_{m^*},\theta_L) $ implies  $\theta_H-c(e_{j},\theta_H) \leq \max\{\theta_L,f\}+c(e_{m^*},\theta_L) -c(e_{m^*},\theta_H) \leq \omega(p, m^*)-c(e_{m^*},\theta_H)$                 )  
    \item Since deviation to $e_s< e_{m^*}$ is more attractive for low rather than high-type students, then $\omega(p, s)=\theta_L$ for all $s\in (0,{m^*})$ is consistent with RPBE.     
\end{enumerate}
\item If $e_{m^*}=0$ and $\theta_H-\max\{\theta_L,f\} \leq c(e_{m^*_+},\theta_H)-c(e_{m^*},\theta_H)+c(e_{m^*},\theta_L) $, then there exists a pooling RPBE at $e_{m^*}=0$: same idea
       \item 
    If $\theta_H-\max\{\theta_L,f\}>c(e_{m^*_+},\theta_H)-c(e_{m^*},\theta_H)+c(e_{m^*},\theta_L) $, then there exists a separating RPBE in the subgame following $p$ where 
    \begin{itemize}
        \item high-type students choose $e_{m^*_+}$ 
        \item low-type students choose $e=0$ if $\theta_L>f$ and not enrolling if $\theta_L\leq f$.
         \item $\omega(p, s)=\max\{\theta_L,0\}$ for all $s\leq {m^*}$, and $\omega(p, j)=\theta_H$ for all $j\geq{m^*_+}$
    \end{itemize}  
To see this note that 
    \begin{enumerate}
        \item correct belief update (and wage) on path.
        \item low type would not deviate since $\theta_H-\max\{\theta_L,f\}<c(e_{m^*_+},\theta_L)$
        \item high-type students would not deviate from $e_{m^*_+}$: deviation up clearly sub-optimal for H; deviating down also sub optimal as $\theta_H-c(e_{m^*_+},\theta_H)>\max\{\theta_L,f\}-c(e_{m^*},\theta_H)+c(e_{m^*},\theta_L)>\max\{\theta_L,f\} $
    \item Deviation to $e_s< e_{m^*}$ is more attractive for low rather than high-type students. Indeed $\theta_H-c(e_{m^*_+},\theta_H)>\max\{\theta_L,f\}-c(e_{m^*},\theta_H)+c(e_{m^*},\theta_L)$ implies $\theta_H-c(e_{m^*_+},\theta_H)>\max\{\theta_L,f\}-c(e_{s},\theta_H)+c(e_{s},\theta_L)$ given strict decreasing differences. Rearranging we get that $c(e_{m^*_+},\theta_H)-c(e_{m^*},\theta_H)<[\theta_H-c(e_{m^*},\theta_L)-f]-[\max\{\theta_L,f\}-f]$, implying that downward deviations are less attractive for high-type students. Then $\omega(p, s)=\theta_L$ for all $s\leq {m^*}$ is consistent with RPBE.    
    \end{enumerate}   
\end{enumerate}    }

\mc{Mingzi, I did not see that you were already working on generalizing this to uncountable messages using "Xomit"... my bad! But I have already fixed the previous proof to account for that more general case: please take a look} \mn{Thanks for the note! I'm on it!}
}

\subsection{Competition}

Denote the lowest fee among all schools, and the schools with the lowest fee, by 
\begin{align*}
    f_{min} = \min_{i\in I} f_i,\,\,\,\,and\,\,\,\,\,
    I_{min} = \argmin_{i\in I} f_i.
\end{align*}

Denote by $\underline{\psi}$ the strategy that provides a student with the highest payoff between (i) enrolling with equal probability in all $i \in I$ such that $f_i = f_{min}$ and exerting no effort, and (ii) not enrolling at all. The payoff following $\underline{\psi}$ is therefore at least $\underline{u} = \max\{0, \theta_L - f_{min}\}$.

\vspace{0.1 in}
For every $i \in I$, denote by $e^*_i \in\R_+$ the maximal effort low-type students are willing to exert in $i$ to gain wage $\theta_H$ rather than getting the reservation payoff $\underline{u}$: 
\[\theta_H - c(\theta_L, e^*_i) - f_i = \underline{u}.\]  

For any message $m \in M_i(\mathbb{R}_+)$ from school $i \in I$, let $e_{i,m} = \min_{e' \in \mathbb{R} : M_i(e') = m}$ denote the minimum effort required to generate it, and let $C(\theta, i, m) = c(\theta, e_{i,m}) + f_i$ denote its minimum associated cost for type $\theta \in \Theta$. 

\Xomit{
Define the ``marginal'' school group 
\[I^* = \argmin_{\{i \in \argmin_{j \in I} C(\theta_L, j,m^*)\} } C(\theta_H, i,m^*).\]

Additionally, define the set of ``marginal'' messages
\[M^* = \{m^*_i:  i \in I^*\};\]
the set of ``high'' messages
\[M_+^* = \{i,m \in M_i (\R): e_{i,m}> e_{m^*_i}, i \in I^*\} \cup
\{m_j \in M_j (\R): e_{m_j}> \min\{e_{m^*_i}, e_{m^*_j}\}, i \in I^*, j \in I\};\]
and the set of ``low'' messages
\[M_-^* =  \{i,m \in M_i (\R): e_{i,m} < {e}_{m^*_i}, i \in I^*\} \cup \{i,m \in M_i (\R): e_{i,m} \leq \max\{e_{m^*_i}, e_{m^*_j}\}, i \notin I^*\}.\]
It implies that 
\begin{align}\label{ineq:exist_comp_compare}
    C(\theta_L, {i,m}) \leq \theta_H-\max\{\theta_L - f_{min}, 0\}  < C(\theta_L, m_j),
\end{align}
for any $i,m \in M^* \cup M_-^* $ {and any $(j,m)\in S^*_+$.}
Thus, as a low-type student, the relative gain from being identified as high type --- instead of choosing $\underline{\psi}$ --- can cover the cost to exert effort ${e}_{m^*_i}$ at school $i \in I^*$. 
In contrast, $\underline{\psi}$ dominates sending any ``high'' messages that indicate an excessively high effort for low-type students. 

We split the discussion depending on the magnitude of this relative gain from being identified as high type.  
If the gain is relatively small such that, for any message $(j,m)\in S^*_+ $,  
\begin{align}\label{ineq:exist_comp}
    \theta_H  -\underline{u}\leq C(\theta_H,m_j)-C(\theta_H,{i,m^*})  + C(\theta_L,{i,m^*}),
\end{align}
where $i \in I^*$, or $M_+^* = \emptyset$,\footnote{In other words, (semi-)pooling RPBE exists if it is too costly or impossible for high-type students to signal a higher (threshold) effort in any school, identifying themselves as high-type students.} we show that there exists a (semi-)pooling RPBE in the subgame following $\mathbf{p}$ where 
    \begin{enumerate}
        \item high-type students enroll with equal probabilities in the schools sending signal ${i,m^*}$ where $i \in I^*$, i.e., 
        \[\psi_{\theta_H, \mathbf{p}}(i) = 
        \begin{cases}
            \frac{1}{|{I}^*| } &\text{ if } i \in I^*,\\
            0 &\text{ otherwise,}
        \end{cases}\]
and exert effort ${e}_{m^*_i}$ in the school $i \in I^*$ that they enroll; 
        
       \item low-type students mimic high types with probability $q\in (0,1]$ and choose $\underline{\psi}$ with probability $1-q$, \Xomit{
       i.e.,  \[\psi_{\theta_H, \mathbf{p}}(i) = 
        \begin{cases}
            \frac{q}{|\underline{I}(m^*)| } &\text{ if } i \in I (m^*) \setminus \{k\in I: f_k = f_{min}\},\\
            \frac{1-q}{|\{k\in I: f_k = f_{min} \leq \theta_L\}|} &\text{ if } i \in \{k\in I: f_k = f_{min} \leq \theta_L\} \setminus  I (m^*) \\
            \frac{q}{|\underline{I}(m^*)| } +  \frac{1-q}{|\{k\in I: f_k = f_{min} \leq \theta_L\}|} &\text{ if } i \in I (m^*) \cap \{k\in I: f_k = f_{min} \leq \theta_L\},\\
            0 &\text{ otherwise,}
        \end{cases}\]
        }
       where $q\in (0,1]$ is given by
        \[
q= 
\begin{cases}
    1 &\text{ if } \bar{w} \leq \E \theta,\\
    \frac{\lambda}{1-\lambda} \left(\frac{ \theta_H}{\bar{w} - \theta_L }-1\right) &\text{ otherwise,}
\end{cases}
\]
where $\bar{w} = C(\theta_L,{i,m^*}) + \max\{\theta_L-f_{min},0\}$ with $i \in I^*$, denoting the wage that makes a low-type student indifferent between mimicking high-type students and $\underline{\psi}$;

        \item firms' wage scheme is given by
\[
\omega_{\mathbf{p}}(i, m)= 
\begin{cases}
    \max\{\theta_L,0\} &\text{ if } i,m \in M_-^*,\\
    \max\{\bar{w},\E \theta \} &\text{ if } i,m \in M^*,\\
    \theta_H  &\text{ if } i,m \in M_+^*.
\end{cases}
\]
The associated belief system is given by
\[
\mu_{\mathbf{p}}(\theta_H| i,m)= 
\begin{cases}
    0 &\text{ if } i,m \in M_-^*,\\
    \frac{\lambda }{\lambda + (1-\lambda)q } &\text{ if } i,m \in M^*,\\
    1  &\text{ if } i,m \in M_+^*.
\end{cases}
\]
    \end{enumerate} 
\label{ineq:exist_comp_compare}
To verify, first note that firms' wage scheme is consistent with their belief system, and on-path beliefs are updated according to Bayesian rule. Second, low type would not deviate: 
(i) if $\bar{w} > \E \theta$, then low type is indifferent between receiving $\bar{w}$ by exerting ${m}^* (= e_{i,m^*})$ at school $i \in I^*$  and $\underline{\psi}$;
(ii) if $\bar{w} \leq \E \theta$, then, compared to $\underline{\psi}$, low type prefers to mimic and fully pool with high-type students; 
(iii) if $M_+^* \neq \emptyset$, then by \eqref{ineq:exist_comp_compare}, low type strictly prefers $\underline{\psi}$ to sending any high message $(j,m)\in S^*_+$; (iv) if $M_+^* = \emptyset$, exerting $e>e_{i,m^*}$ at school $i \in I$ only incurs wasteful effort cost with no benefit and is thus suboptimal for any student. Third, high-type students would not deviate: (i) generating any ``high'' signal $(j,m)\in S^*_+$ yields a lower payoff for high-type students. To see this, 
note that
\begin{align*}
   \theta_H  - C(\theta_H,m_j) \leq \underline{u} + C(\theta_L,{i,m^*}) - C(\theta_H,{i,m^*}) \leq  \omega(\mathbf{p}, m^*_i) - C(\theta_H,{i,m^*}) 
\end{align*}
for any $(j,m)\in S^*_+$ and any $i,m \in M^*$, where the first inequality follows from \eqref{ineq:exist_comp}, and the second inequality follows from the definition of $\bar{w}$; (ii) as we have just seen, low-type students weakly prefers to exert effort ${e}_{i,m^*}$ at any school $i \in I^*$ to being identified as a low-type (and thus choosing $\underline{\psi}$). Since high-type students have a cost advantage in exerting effort, it follows that they must also prefer to exert effort ${m}^*$ than to be identified as a low type. Thus, high-type students have no strict incentive to generate
any “low” signal $i,m \in M_-^*$.

Finally, we need to check that off-path beliefs survive our refinement. Note that deviation to a ``high'' signal $i,m \in M_+^*$ (i.e., to signal a higher effort $e_{i,m} > {e}_{m^*_i}$ for any $i \in I$) is unattractive for both types (given that wage is capped by $\theta_H$). Hence, our refinement does not restrict the off-path belief upon observing a signal $i,m \in M_+^*$.  Besides, since high-type students have a cost advantage in exerting effort, it is more attractive for low-type students to deviate and signal a lower effort. To see it more clearly, for any $j \in I$ and any $(j,m)\in S^*_-$ such that $e_{m_j} < e_{m^*_j}$, 
\begin{align*}
\omega_{\mathbf{p}}(i, m^*) - C(\theta_H, i,m^*) + C(\theta_H, m_j) &= \omega_{\mathbf{p}}(i, m^*)  - [c(\theta_H, e_{m^*_i})+ f_i]  + [c(\theta_H, e_{m_j}) + f_j]\\
&= \omega_{\mathbf{p}}(i, m^*)  - [c(\theta_H, e_{m^*_i})-c(\theta_H, e_{m_j})]   + f_j -f_i\\
&> \omega_{\mathbf{p}}(i, m^*)  - [c(\theta_L, e_{m^*_i}) - c(\theta_L, e_{m_j})]   + f_j -f_i\\
&= \omega_{\mathbf{p}}(i, m^*) - [c(\theta_L, e_{m^*_i}) + f_i ]
+ [c(\theta_L, e_{m_j}    + f_j ]\\
&= \omega_{\mathbf{p}}(i, m^*) - C(\theta_L, i,m^*) + C(\theta_L, m_j)
\end{align*}
where $i \in I^*$, and the inequality follows from the assumption that $c(\theta,e)$ has strictly decreasing differences. By \eqref{ineq:D1}, we have $\mu(\theta_H|\mathbf{p}, m_j)=0$ for all $(j,m)\in S^*_-$ such that $e_{m_j} < \bar{e}^*$. Additionally, in the case when $e_{m_j} = \bar{e}^*$ and $j \notin I^*$, our refinement does not restrict the off-path belief. Therefore, the off-path beliefs,  $\mu(\theta_H|\mathbf{p}, m)=1$ for all $(j,m)\in S^*_+$ and $\mu(\theta_H|\mathbf{p}, m)=0$ for all $(j,m)\in S^*_-$, are consistent with our refinement. 
}
\Xomit{
To verify, first note that firms' wage scheme is consistent with their belief system, and on-path beliefs are updated according to Bayesian rule. Second, low type would not deviate. To see this, first note that among all messages that generates a higher wage $\theta_H$
   
   since 
   \[\]
   $\theta_H-\max\{\theta_L,f\}<c(\theta_L,e_{m^*_+})$. Third, high-type students would not deviate from exerting effort $e_{m^*_+}$: (i) exerting any higher effort incurs higher cost effort with the same wage $\theta_H$; (ii) not enrolling or exerting any lower effort results in a lower payoff $\max\{\theta_L-f,0\} <  \max\{\theta_L-f,0\}-c(e_{m^*},\theta_H)+c(e_{m^*},\theta_L) < \theta_H-c(e_{m^*_+},\theta_H) - f.$

Finally, we need to check that off-path beliefs survive our refinement. Since for all students, deviation to sending any higher message $\tilde m > m^*_+$ cannot be profitable, our refinement does not restrict the off-path belief for all $m > m^*_+$. Therefore, we only need to show that any message $\tilde m < m^*_+$ is more attractive for low-type students to send.
Indeed, the condition 
\[\theta_H-c(\theta_H,e_{m^*_+}) \geq \max\{\theta_L,f\}-c(\theta_H, e_{m^*})+c(\theta_L,e_{m^*})\] 
implies that, for any $e_{\tilde m} < e_{m^*_+}$, 
\[\theta_H-c(\theta_H,e_{m^*_+})>\max\{\theta_L,f\} - c(\theta_H,e_{\tilde m})+c(\theta_L,e_{\tilde m})\] 
due to strict decreasing differences. Rearranging terms, we can obtain
\[\max\{\theta_L-f,0\} + c(\theta_L, e_{\tilde m})<\theta_H-c(\theta_H, e_{m^*_+})-f + c(\theta_H, e_{\tilde m}).\] 
By our refinement \eqref{ineq:D1}, the off-path belief upon any message $\tilde m < m^*_+$ must be 
\[\mu(\theta_H|\mathbf{p}, \tilde m)=0.\]     
    \mn{TBC}
To see this note that 
    \begin{enumerate}
        \item correct belief update (and wage) on path.
        \item low type would not deviate since $n^*>m^*$ and thus (by definition of $m^*$) $$
c(e_{n^*},\theta_L)+f_{n^*} > \theta_H -\max\{\theta_L-f_{min},0\},
$$
        
        \item high-type students would not optimally deviate to  a different $e_s>e_{m^*}$, by definition of $n^*.$
    
    \item Deviation to $e_s< e_{m^*}$ is more attractive for low rather than high-type students. Indeed $\theta_H -c(e_{n^*},\theta_H)-f_{n^*}>
    c(e_{m^*},\theta_L)+\max\{\theta_L-f_{min},0\}-c(e_{m^*},\theta_H)$

    implies 
    $$\theta_H -c(e_{n^*},\theta_H)-f_{n^*}>
    c(e_{s^*},\theta_L)+\max\{\theta_L-f_{min},0\}-c(e_{s^*},\theta_H)$$
    
    given strict decreasing differences. Rearranging we get that 
      $$[\theta_H -f_{s}-c(e_{s},\theta_L)]-[\max\{\theta_L-f_{min},0\}]>
    c(e_{n^*},\theta_H)+f_{n^*}-[c(e_{s},\theta_H)+f_s],$$
    
    implying that downward deviations are less attractive for high-type students than for low-type students.    
    Then $\omega(\mathbf{p}, s)=\theta_L$ for all $s\leq {m^*}$ is consistent with RPBE.   
    \end{enumerate}   

}
\Xomit{
Denote  
\[\underline{e}^* = \min_{i \in I} e_{i,m^*}\]
Define the ``marginal'' school group 
\[I^* = \argmin_{\{i \in I: e_{i,m^*} = \underline{e}^*\} } f_i.\]
Additionally, define the set of ``marginal'' messages
\[M^* = \{m^*_i:  i \in I^*\};\]
the set of ``high'' messages
\[M_+^* =  \{i,m \in M_i (\R): e_{i,m} > \underline{e}^*\};\]
and the set of ``low'' messages
\[M_-^* =  \{i,m \in M_i (\R): e_{i,m} < \underline{e}^*, i \in I^*\} \cup \{i,m \in M_i (\R): e_{i,m} \leq \underline{e}^*, i \notin I^*\}.\]
It implies that 
\[C(\theta_L, {i,m}) \leq \theta_H-\max\{\theta_L - f_{min}, 0\},\]
for any $i,m \in M^* \cup M_-^* $.
Thus, as a low-type student, the relative gain from being identified as high type --- instead of choosing $\underline{\psi}$ --- can cover the cost to exert effort $\underline{e}^*$ at school $i \in I^*$. 

We split the discussion depending on the magnitude of this relative gain from being identified as high type.  
If the gain is relatively small such that, for any message $(j,m)\in S^*_+ $,  
\begin{align}\label{ineq:exist_comp}
    \theta_H  -\underline{u}\leq C(\theta_H,m_j)-C(\theta_H,{i,m^*})  + C(\theta_L,{i,m^*}),
\end{align}
where $i \in I^*$, or $M_+^* = \emptyset$,\footnote{In other words, (semi-)pooling RPBE exists if it is too costly or impossible for high-type students to signal a higher (threshold) effort in any school, identifying themselves as high-type students.} we show that there exists a (semi-)pooling RPBE in the subgame following $\mathbf{p}$ where 
    \begin{enumerate}
        \item high-type students enroll with equal probabilities in the schools sending signal ${i,m^*}$ where $i \in I^*$, i.e., 
        \[\psi_{\theta_H, \mathbf{p}}(i) = 
        \begin{cases}
            \frac{1}{|{I}^*| } &\text{ if } i \in I^*,\\
            0 &\text{ otherwise,}
        \end{cases}\]
and exert effort $\underline{e}^*$ in the school $i \in I^*$ that they enroll; 
        
       \item low-type students mimic high types with probability $q\in (0,1]$ and choose $\underline{\psi}$ with probability $1-q$, \Xomit{
       i.e.,  \[\psi_{\theta_H, \mathbf{p}}(i) = 
        \begin{cases}
            \frac{q}{|\underline{I}(m^*)| } &\text{ if } i \in I (m^*) \setminus \{k\in I: f_k = f_{min}\},\\
            \frac{1-q}{|\{k\in I: f_k = f_{min} \leq \theta_L\}|} &\text{ if } i \in \{k\in I: f_k = f_{min} \leq \theta_L\} \setminus  I (m^*) \\
            \frac{q}{|\underline{I}(m^*)| } +  \frac{1-q}{|\{k\in I: f_k = f_{min} \leq \theta_L\}|} &\text{ if } i \in I (m^*) \cap \{k\in I: f_k = f_{min} \leq \theta_L\},\\
            0 &\text{ otherwise,}
        \end{cases}\]
        }
       where $q\in (0,1]$ is given by
        \[
q= 
\begin{cases}
    1 &\text{ if } \bar{w} \leq \E \theta,\\
    \frac{\lambda}{1-\lambda} \left(\frac{ \theta_H}{\bar{w} - \theta_L }-1\right) &\text{ otherwise,}
\end{cases}
\]
where $\bar{w} = C(\theta_L,{i,m^*}) + \max\{\theta_L-f_{min},0\}$ with $i \in I^*$, denoting the wage that makes a low-type student indifferent between mimicking high-type students and $\underline{\psi}$;

        \item firms' wage scheme is given by
\[
\omega_{\mathbf{p}}(i, m)= 
\begin{cases}
    \max\{\theta_L,0\} &\text{ if } i,m \in M_-^*,\\
    \max\{\bar{w},\E \theta \} &\text{ if } i,m \in M^*,\\
    \theta_H  &\text{ if } i,m \in M_+^*.
\end{cases}
\]
The associated belief system is given by
\[
\mu_{\mathbf{p}}(\theta_H| i,m)= 
\begin{cases}
    0 &\text{ if } i,m \in M_-^*,\\
    \frac{\lambda }{\lambda + (1-\lambda)q } &\text{ if } i,m \in M^*,\\
    1  &\text{ if } i,m \in M_+^*.
\end{cases}
\]
    \end{enumerate} 
To verify, first note that firms' wage scheme is consistent with their belief system, and on-path beliefs are updated according to Bayesian rule. Second, low type would not deviate: 
(i)if $\bar{w} > \E \theta$, then low type is indifferent between receiving $\bar{w}$ by exerting $\underline{e}^* (= e_{i,m^*})$ at school $i \in I^*$  and $\underline{\psi}$;
(ii) if $\bar{w} \leq \E \theta$, then, compared to $\underline{\psi}$, low type prefers to mimic and fully pool with high-type students; 
(iii) if $M_+^* \neq \emptyset$, then  we have 
\begin{align}\label{ineq:exist_comp_pool_high_no up}
   \theta_H  - C(\theta_H,m_j) \leq \underline{u} + C(\theta_L,{i,m^*}) - C(\theta_H,{i,m^*}) \leq  \omega(\mathbf{p}, m^*_i) - C(\theta_H,{i,m^*}) 
\end{align}
for any $(j,m)\in S^*_+$ and any $i,m \in M_+^*$, where the first inequality follows from \eqref{ineq:exist_comp}, and the second inequality follows from the definition of $\bar{w}$. Rearranging terms in the inequality above, we obtain
\begin{align*}
    \theta_H  - \omega(\mathbf{p}, m^*_i) &\leq  C(\theta_H,m_j)  - C(\theta_H,{i,m^*}) \\
    &= [c(\theta_H,e_{m_j})   - c(\theta_H,e_{i,m^*})] + f_j - f_i \\
    &< [c(\theta_L,e_{m_j})   - c(\theta_L,e_{i,m^*})] + f_j - f_i \\
    &= C(\theta_L,m_j)  - C(\theta_L,{i,m^*}),
\end{align*}
where the second inequality follows from the assumption that $c(\theta,e)$ has strictly decreasing differences.Therefore, 
\[\theta_H  - C(\theta_L,m_j) < \omega(\mathbf{p}, m^*_i) - C(\theta_L,{i,m^*}).\] 
Therefore, low-type students have no incentive to deviate and exert a higher effort;
(iv) if $\{\tilde m \in M(\R): \tilde m > {m}^*\} = \emptyset$, exerting $e>e_{m^*}$ only incurs wasteful effort cost with no benefit and is thus suboptimal for any student.
Third, high-type students would not deviate: (i) generating any ``high'' signal $(j,m)\in S^*_+$ yields a lower payoff for high-type students;\footnote{If there is no message that can indicate a higher effort than $\underline{e}^*$, i.e., $M_+^* = \emptyset$, then exerting $e>\underline{e}^*$ is suboptimal for every student at any school. When $M_+^* \neq \emptyset$, the result follows from \eqref{ineq:exist_comp_pool_high_no up}.} (ii) as we have just seen, low-type students weakly prefers to exert effort $\underline{e}^*$ at any school $i \in I^*$ to being identified as a low-type (and thus choosing $\underline{\psi}$). Since high-type students have a cost advantage in exerting effort, it follows that they must also prefer to exert effort $\underline{e}^*$ than to be identified as a low type. Thus, high-type students have no strict incentive to generate
any “low” signal $i,m \in M_-^*$.

Finally, we need to check that off-path beliefs survive our refinement. Note that deviation to a ``high'' signal $(j,m)\in S^*_+$ (i.e., to signal a higher effort $e_{m_j} > \underline{e}^*$) is unattractive for both types (given that wage is capped by $\theta_H$). Hence, our refinement does not restrict the off-path belief upon observing a signal $(j,m)\in S^*_+$.  Besides, since high-type students have a cost advantage in exerting effort, it is more attractive for low-type students to deviate and signal a lower effort $e_{m_j} < \underline{e}^*$. To see it more clearly, for any $(j,m)\in S^*_-$ such that $e_{m_j} < \underline{e}^*$, 
\begin{align*}
\omega_{\mathbf{p}}(i, m^*) - C(\theta_H, i,m^*) + C(\theta_H, m_j) &= \omega_{\mathbf{p}}(i, m^*)  - [c(\theta_H, \underline{e}^*)+ f_i]  + [c(\theta_H, e_{m_j}) + f_j]\\
&= \omega_{\mathbf{p}}(i, m^*)  - [c(\theta_H, \underline{e}^*)-c(\theta_H, e_{m_j})]   + f_j -f_i\\
&> \omega_{\mathbf{p}}(i, m^*)  - [c(\theta_L, \underline{e}^*) - c(\theta_L, e_{m_j})]   + f_j -f_i\\
&= \omega_{\mathbf{p}}(i, m^*) - [c(\theta_L, \underline{e}^*) + f_i ]
+ [c(\theta_L, e_{m_j})    + f_j ]\\
&= \omega_{\mathbf{p}}(i, m^*) - C(\theta_L, i,m^*) + C(\theta_L, m_j)
\end{align*}
where $i \in I^*$, and the inequality follows from the assumption that $c(\theta,e)$ has strictly decreasing differences. By \eqref{ineq:D1}, we have $\mu(\theta_H|\mathbf{p}, m_j)=0$ for all $(j,m)\in S^*_-$ such that $e_{m_j} < \underline{e}^*$. Additionally, in the case when $e_{m_j} = \underline{e}^*$ and $j \notin I^*$, our refinement does not restrict the off-path belief. Therefore, the off-path beliefs,  $\mu(\theta_H|\mathbf{p}, m)=1$ for all $(j,m)\in S^*_+$ and $\mu(\theta_H|\mathbf{p}, m)=0$ for all $(j,m)\in S^*_-$, are consistent with our refinement. 
}
Following this notation $e_{i,M_i(e_i^*)}$ is the effort that efficiently generates the message $M_i(e_i^*)$ at school $i$
Denote by $\bar{e}^*$ the maximal of this effort across different schools, and by $I^*$ the set of the cheapest schools where $e_{i,M_i(e_i^*)}=\bar{e}^*$:
\[\bar{e}^* = \max_{i \in I} e_{i,M_i(e_i^*)},\,\,\,\,and\,\,\,\,\,I^* = \argmin_{\{i \in I:\bar{e}^*= e_{i,M_i(e_i^*)} \} } f_i.\]
Additionally, define the set of ``marginal'' signals
\[
S^* = \{(i, m)\mid i \in I^*, \, m = M_i(\bar{e}^*)\}.
\]
the set of ``high'' signals
\[S_+^* =  \{(i, m)\mid i \in I, \, e_{i,m} > \bar{e}^*\};\]
and the set of ``low'' messages
\[S_-^* =  \{(i, m)\mid i \in I^*, \, e_{i,m} < \bar{e}^*\} \bigcup \{(i, m)\mid i \in I\setminus I^*, \, e_{i,m} \leq \bar{e}^*\}.\]
Note that $C(\theta, s) = C(\theta, s'')$ for all $s,s''\in S^*.$ Also, by definition, we have 
\begin{align}\label{ineq:exist_comp_compare}
  C(\theta_L, s) \leq \theta_H-\underline{u}  < C(\theta_L, s'), 
\end{align}
for any $s=(i,m) \in S^*$ {and any $s'=(j,m) \in S_+^*$.}\footnote{Note, indeed, that $e>e_j^*$
if $(j,M_j(e)) \in S_+^*$.} 
Thus, as a low-type student, the relative gain from being identified as high type --- instead of obtaining $\underline{u}$ --- can cover the cost to exert $\bar{e}^*$ at school $i \in I^*$. 
In contrast, $\underline{\psi}$ dominates sending any ``high'' signal $s'$ that indicate excessively high effort for low-type students. 

We split the discussion depending on the magnitude of this relative gain from being identified as high type.  
\vspace{0.2 in}

\textbf{Case 1:}
If $S_+^* = \emptyset$ or if, for any $s'=(j,m)\in S_+^* $,  
\begin{align}\label{ineq:exist_comp}
    \theta_H  -\underline{u}\leq C(\theta_H,s')-C(\theta_H,s^*)  + C(\theta_L,s^*),
\end{align}
where $s^* \in S^*$, there exists a (semi-)pooling RPBE $\hat{ \mathcal{E}}$ in the subgame following $\mathbf{p}$ where 
    \begin{enumerate}
        \item High-type students enroll with equal probabilities in every school $i \in I^*$ and exert effort $\bar{e}^*$, i.e., 
        
        \[\hat \psi_{\theta_H, \mathbf{p}}(i,e) = 
        \begin{cases}
            \frac{1}{|{I}^*| } &\text{ if } i \in I^*\, and \, e=\bar{e}^*,\\
            0 &\text{ otherwise,}
        \end{cases}\]
        
       \item Low-type students mimic high types ($\psi_{\mathbf{p},\theta_L}=\psi_{\mathbf{p},\theta_H}$) with probability $q\in (0,1]$ and choose $\psi_{\mathbf{p},\theta_L}=\underline{\psi}$ with probability $1-q$, \Xomit{
       i.e.,  \[\hat \psi_{\theta_H, \mathbf{p}}(i) = 
        \begin{cases}
            \frac{q}{|\underline{I}(m^*)| } &\text{ if } i \in I (m^*) \setminus \{k\in I: f_k = f_{min}\},\\
            \frac{1-q}{|\{k\in I: f_k = f_{min} \leq \theta_L\}|} &\text{ if } i \in \{k\in I: f_k = f_{min} \leq \theta_L\} \setminus  I (m^*) \\
            \frac{q}{|\underline{I}(m^*)| } +  \frac{1-q}{|\{k\in I: f_k = f_{min} \leq \theta_L\}|} &\text{ if } i \in I (m^*) \cap \{k\in I: f_k = f_{min} \leq \theta_L\},\\
            0 &\text{ otherwise,}
        \end{cases}\]
        }
       where $q\in (0,1]$ is given by
        \[
q= 
\begin{cases}
    1 &\text{ if } \bar{w} \leq \E \theta,\\
    \frac{\lambda}{1-\lambda} \left(\frac{ \theta_H}{\bar{w} - \theta_L }-1\right) &\text{ otherwise,}
\end{cases}
\]
where $\bar{w} = C(\theta_L,s^*) + \underline{u}$ with $s^* \in S^*$, denoting the wage that makes a low-type student indifferent between mimicking high-type students and $\underline{\psi}$;

        \item Firms' wage scheme is 
\[
\omega_{\mathbf{p}}(s)= 
\begin{cases}
    \max\{\theta_L,0\} &\text{ if } s \in S_-^*,\\
    \max\{\bar{w},\E \theta \} &\text{ if } s \in S^*,\\
    \theta_H  &\text{ if } s \in S_+^*.
\end{cases}
\]
The associated belief system is given by
\[
\mu_{\mathbf{p}}(\theta_H| s)= 
\begin{cases}
    0 &\text{ if } s \in S_-^*,\\
    \frac{\lambda }{\lambda + (1-\lambda)q } &\text{ if } s \in S^*,\\
    1  &\text{ if } s \in S_+^*.
\end{cases}
\]
    \end{enumerate} 

To verify, first note that firms' wage scheme is consistent with their belief system, and on-path beliefs are updated according to Bayesian rule. 

Second, low type would not deviate. Indeed, deviations to $\psi_{\mathbf{p},\theta_L}\not\subseteq \Delta \left(\{\underline{\psi}\} \bigcup supp(\hat \psi_{\mathbf{p},\theta_H})\right)$ are obviously dominated, so we just need to check the low-type students' payoffs following $\psi_{\mathbf{p},\theta_L}=\hat \psi_{\mathbf{p},\theta_H}$ and $\psi_{\mathbf{p},\theta_L}=\underline{\psi}$. 
If $\bar{w} > \E \theta$, then low type is indifferent between $\hat \psi_{\mathbf{p},\theta_H}$ and $\underline{\psi}$ as they both deliver payoff $\underline{u}$;
If $\bar{w} \leq \E \theta$, low-type students prefer $\hat \psi_{\mathbf{p},\theta_H}$ over choosing $\underline{\psi}$ and obtaining $\underline{u}$; 
Further, if $S_+^* \neq \emptyset$, then by \eqref{ineq:exist_comp_compare}, low type strictly prefers $\underline{\psi}$ to $\hat \psi_{\mathbf{p},\theta_H}$. Finally, if $S_+^* = \emptyset$, exerting $e>e_{i,M_i(e_i^*)}$ at school $i \in I$ only incurs wasteful effort cost with no benefit and is thus suboptimal for any student. 

Third, high-type students would not deviate: (i) generating any ``high'' signal $(j,m)\in S^*_+$ yields a lower payoff for high-type students. To see this, 
note
\begin{align*}
   \theta_H  - C(\theta_H,s) \leq \underline{u} + C(\theta_L,s^*) - C(\theta_H,s^*) \leq  \omega(\mathbf{p}, s^*) - C(\theta_H,s^*) 
\end{align*}
for any $s\in S^*_+$ and any $s^* \in S^*$, where the first inequality follows from \eqref{ineq:exist_comp}, and the second from the definition of $\bar{w}$. Moreover, since low-type students weakly prefer $\hat \psi_{\mathbf{p},\theta_H}$ to $\underline{\psi}$, then also high-type students must prefer  $\hat \psi_{\mathbf{p},\theta_H}$ to $\underline{\psi}$, due to their cost advantage in exerting effort. Thus, high-type students have no strict incentive to generate
any “low” signal $s \in S_-^*$.


Finally, we need to check that off-path beliefs survive our refinement. Note that deviation to a ``high'' signal $s=(j,m)\in S^*_+$ (i.e., to signal a higher effort $e_{j,m} > \bar{e}^*$) is unattractive for both types (given that wage is capped by $\theta_H$). Hence, our refinement does not restrict the off-path belief upon observing a signal $s\in S^*_+$.  Besides, due to the high-type students' cost advantage, deviations that signal lower effort $e_{s} < \bar{e}^*$ are more attractive to low-type students. Indeed, for every $s=(j,m)\in S^*_-$ such that $e_{s} < \bar{e}^*$, 
\begin{align*}
\omega_{\mathbf{p}}(s^*) - C(\theta_H, s^*) + C(\theta_H, s) &= \omega_{\mathbf{p}}(s^*)  - [c(\theta_H, \bar{e}^*)+ f_i]  + [c(\theta_H, e_{s}) + f_j]\\
&> \omega_{\mathbf{p}}(s^*)  - [c(\theta_L, \bar{e}^*) - c(\theta_L, e_{s})]   + f_j -f_i\\
&= \omega_{\mathbf{p}}(s^*) - C(\theta_L, s^*) + C(\theta_L, s),
\end{align*}
where $s^*=(i,m^*)\in S^*$ and the inequality follows from the assumption that $c(\theta,e)$ has strictly decreasing differences. 
Thus $\mu(\theta_H|\mathbf{p}, s)=0$ is consistent with our refinement for all $s\in S^*_-$ such that $e_{s} < \bar{e}^*$. Additionally, in the case when $e_{j,m} = \bar{e}^*$ and $j \notin I^*$, our refinement does not restrict the off-path belief. Therefore, the off-path beliefs,  $\mu(\theta_H|\mathbf{p}, s)=1$ for all $s\in S^*_+$ and $\mu(\theta_H|\mathbf{p}, s')=0$ for all $s'\in S^*_-$, are consistent with our refinement. 

\vspace{0.2 in}
\textbf{Case 2:} 
Suppose now that, given $s^* \in S^*$, there exists a signal $s\in S^*_+$ such that
\begin{align}\label{ineq:exist_comp_sep}
    \theta_H  -\underline{u} > C(\theta_H,s)-C(\theta_H,{s^*})  + C(\theta_L,{s^*}).
\end{align}
Define $\underline{C}^{*}_+ = \inf_{s \in S_+^*} C(\theta_H, s)$ and
\begin{align*}
    \Xomit{\begin{cases}
    \min_{i,m \in M_+^*} C(\theta_H, i,m) &\text{ if } \inf\{e_{i,m}: i,m \in M_+^*, i \in I\} > \bar{e}^*\\
    \min_{i,m \in M_+^* \cup M^*} C(\theta_H, i,m) &\text{ if } \inf\{e_{i,m}: i,m \in M_+^*, i \in I^*\} = \bar{e}^*,\\
    \min_{i,m \in M_+^* \cup {(j,m): \inf\{e_{j,m}: i,m \in M_+^*, i \in I^*\} = \bar{e}^*}} C(\theta_H, i,m) &\text{ otherwise }
    \end{cases}
    }
    \underline{S}^{*}_+ &= 
    \begin{cases}
    \{s \in S_+^*: C(\theta_H, s) = \underline{C}^{*}_+\} &\text{ if }  \{s \in S_+^*: C(\theta_H, s) = \underline{C}^{*}_+\} \neq \emptyset,\\
    \{s\in I\times \mathbb{M}: e_{s} = \bar{e}^* \,and\, C(\theta_H, s) = \underline{C}^{*}_+\} &\text{ otherwise. }
    \end{cases}
\end{align*}
Note for all $s \in  \underline{S}^{*}_+ $, $e_{s}\geq \bar{e}^*$.

\Xomit{
Note that $C(\theta_H, i,m) = \underline{C}^{*}_+$ for 
$i,m \in M^*$ only if $\inf_{(j,m)\in S^*_+} e_{j,m} = \bar{e}^*$, that is,
$\theta_H - C(\theta_L, \bar{e}^*) = \underline{u}$. \mn{think more on this}
}

We show that, following $\mathbf{p}$,  there exists a separating EPBE where 
    \begin{enumerate}
    \item high types choose $\hat \psi_{\theta_H, \mathbf{p}}$ such that
        \[\hat \psi_{\theta_H, \mathbf{p}}(i,e) = 
        \begin{cases}
            \frac{1}{|\underline{S}^{*}_+| } &\text{ if } (i,M_i(e))  \in  \underline{S}^{*}_+,\\
            0 &\text{ otherwise;}
        \end{cases}\]        
        \item low-type students choose $\hat \psi_{\theta_L, \mathbf{p}}=\underline{\psi}$;
        
         \item firms' wage scheme is 
         \[
\omega_{\mathbf{p}}(s)= 
\begin{cases}
    \theta_H &\text{ if } s \in \underline{S}_+^* \cup {S}_+^*,\\
    \max\{\theta_L,0\} &\text{ otherwise.}
\end{cases}
\]
The associated belief system is given by
\[
\mu_{\mathbf{p}}(\theta_H| s)= 
\begin{cases}
  1 &\text{ if } s \in \underline{S}_+^* \cup {S}_+^*,\\
 0 &\text{ otherwise.}
\end{cases}
\]
    \end{enumerate}

   To verify, first note that firms' wage scheme and belief system are consistent and, on-path, beliefs are updated according to Bayesian rule. 
   Second, low-type students would not deviate since (i) by \eqref{ineq:exist_comp_compare}, sending a signal $s\in S^*_+$ results in a payoff lower than $\underline{u}$; (ii) signal $s=(i,m) \in S^* \cup  S_{-}^*$ leads to wage $\theta_H$ only if $\inf\{e_{s'}: s' \in S^*_+\} = e_{s} =\bar{e}^*$ and $C(\theta_H, s) = \underline{C}_+^*$. Thus, we can find a sequence $\{e^n\}$ such that $(i,M_i(e^n)) \in S^*_+$ and $\lim_{n \rightarrow \infty} e^n = \bar{e}^* = e_{s}$. By \eqref{ineq:exist_comp_compare}, we have that,  for every $n \in \mathbb{N}$, 
$\theta_H- [c(\theta_L, e^n) + f_i] < \underline{u}$.
   Thus, by the continuity of $c(\theta, e)$ in $e$, we get that $\theta_L$-students (weakly) prefer $\underline{\psi}$ to sending $s$ and earning $\theta_H$:
   \[\theta_H- [c(\theta_L, \bar{e}^*) + f_i] \leq \underline{u}.\]

Third, high-type students would not deviate from $(i, \bar{e}^*)$ when $s' = (i, M_i(\bar{e}^*)) \in \underline{S}_+^*$. Indeed, since $C(\theta_H, s') = \inf_{s \in S_+^*} C(\theta_H, s)$, sending any $s \in S_+^* \setminus \underline{S}_+^*$ incurs a higher cost for the same wage, $\theta_H$. Moreover, not enrolling or exerting $e<\bar{e}^*$ also results in lower payoffs as \[\underline{u} \leq  \underline{u} -C(\theta_H,{s^*})  + C(\theta_L,{s^*}) \leq \theta_H-\underline{C}_+^*,\]
for any $s^* \in S^*$, where the second inequality follows from \eqref{ineq:exist_comp_sep} and the definition of $\underline{C}_+^*$.


Finally, we need to check that off-path beliefs survive our refinement. Our refinement does not restrict the off-path belief for $s \in  S_+^* \setminus \underline{S}_+^*$, as deviations to sending this signals are unprofitable for all students. Thus, we only need to show that every $(i, e)$ with $s = (i, M_i(e)) \in S_{-}^* \cup (S^* \setminus \underline{S}_+^*)$ is more attractive for low-type students.  Condition \eqref{ineq:exist_comp_sep} implies 
\begin{align*}
    \theta_H  -\underline{u} &\geq C(\theta_H,s')-C(\theta_H,{s^*})  + C(\theta_L,{s^*})\\
    &= C(\theta_H,s') + c(\theta_L,\bar{e}^*) - c(\theta_H,\bar{e}^*),
\end{align*}
for any $s' \in \underline{S}_+^*$ and $s^* \in S^*$. Then, since $c$ has strictly decreasing differences in $(\theta,e)$ and $e_{s} \leq \bar{e}^*$ for all $s \in S_{-}^* \cup (S^* \setminus \underline{S}_+^*)$, we have
\begin{align*}
    \theta_H - C(\theta_H,s')  &\geq  \underline{u}+ [c(\theta_L,e_{s}) - c(\theta_H,e_{s})]=
    \underline{u} + C(\theta_L,{s})  - C(\theta_H,{s}),
\end{align*}
and thus 
\[\theta_H - C(\theta_H,{s'}) + C(\theta_H,s) \geq \underline{u} + C(\theta_L, s).\] 
Thus, $\mu_{\mathbf{p}}(\theta_H| i,m)=0$ for all $s \in S_{-}^* \cup (S^* \setminus \underline{S}_+^*)$ is consistent with our refinement.

\Xomit{\textcolor{blue}{[Old proof]}

Suppose $M$ generates $N$ deterministic messages. For any $m\in (i,M_i(\R))_{i\in I}$, call $i_m$ the (only school) able to generate $m$, and by  $e_m=\min_{e\in\R:M_{i_m}(e)=m} e$ the minimal effort that can generate it. WLOG relabel messages in a way that $m<n$ if $e_m<e_n$ or if $e_m=e_n$ but $f_{i_m} < f_{i_n}$; if with an abuse of notation we say $m\sim n$ if $e_m=e_n$  and $f_{i_m} = f_{i_n}$. 
Call $e_{m^*}$ the maximum \textit{message} effort such that $\theta_L$ prefers paying $f_{m^*}$, exerting $e_{m^*}$, and obtaining wage $\theta_H$ to both not enrolling and paying the minimal fee ($f_{min}$), exerting zero effort, and obtaining $\theta_L$ : i.e., $e_{m^*}$ is the highest $e_m$, with $m\in (i,M_i(\R))_{i\in I}$, such that 
$$
c(e_{m},\theta_L)+f_{m}\leq \theta_H -\underline{u},
$$

Fix the schools' policy vector $\mathbf{p} = (f_i,M_i)_{i\in I}$ where, for all $i \in I$, $M_i$ is a deterministic and {right-continuous} 
function of the agent's effort, and $f_i\leq\theta_H$. 
Then
\begin{enumerate}
   \item 
    If $e_{m^*}>0$ and, for all $j>m^*$, 
    \begin{equation} \label{xx}
        \theta_H -c(e_{j},\theta_H)-f_j<
    c(e_{m^*},\theta_L)+\underline{u}-c(e_{m^*},\theta_H),
    \end{equation}
    
    then there exists a (semi)-pooling RPBE in the subgame following $\mathbf{p}$ where 
    \begin{itemize}
        \item high-type students choose $e_{m^*}$ (split equally among all $n\sim m^*$)
        \item low-type students choose, with probability $\mathbf{p}\in (0,1]$, $e_{m^*}$ (split equally among all $n\sim m^*$) and, with probability $1-\mathbf{p}$, either enrolls in $\min$, pays $f_{min}$, and selects $e=0$ (if $\theta_L>f_{min}$) or does not enroll (if $\theta_L\leq f_{min}$).
        \item $\omega(\mathbf{p}, 0)=\max\{\theta_L,0\}$.
        \item $\omega(\mathbf{p}, n)=\max\{c(e_{n},\theta_L)+f_{m^*}+\underline{u},\E \theta \}$ for all $n\in (M_i(\R_+))_{i\in I}$ such that $n\sim m^*$.
        \item $\omega(\mathbf{p}, s)=\theta_L$ for all $s\in (0,{m^*})$, and $\omega(\mathbf{p}, j)=\theta_H$ for all $j>{m^*}$
        \item $\mathbf{p}=1$ if $c(e_{m^*},\theta_L)+f_{m^*}+\underline{u}\leq \E \theta$
otherwise $\mathbf{p}$ is such that $c(e_{m^*},\theta_L)+f_{m^*}+\underline{u}=\frac{\lambda \theta_H+(1-\lambda )\mathbf{p} \theta_L}{\lambda+(1-\lambda )\mathbf{p}}$.
    \end{itemize}

To see this note that 
    \begin{enumerate}
    \item correct belief update (and wage) on path.
    \item low type would not deviate: if $c(e_{m^*},\theta_L)+f_{m^*}+\underline{u} \geq \E \theta$, then low type is indifferent between $e_{m^*}$ delivering $c(e_{m^*},\theta_L)+f_{m^*}+\underline{u}-c(e_{m^*},\theta_L)-f_{m^*}=\underline{u}$ and doing the best between $e=0$ and not enrolling, which delivers $\underline{u}$; if $c(e_{m^*},\theta_L)+f_{m^*}+\underline{u}<\E \theta$ then low type strictly prefers $e_{m^*}$.
    \item high-type students would not deviate from $e_{m^*}$: deviation down clearly sub-optimal for H; deviating up to $e_j> e_{m^*}$, would also deliver H a lower payoff  $\theta_H -c(e_{j},\theta_H)-f_j<
    \omega(\mathbf{p}, m^*)-c(e_{m^*},\theta_H)-f_{m^*}$  (by Eq \ref{xx})

    \item Since deviation to $e_s< e_{m^*}$ is more attractive for low rather than high-type students, then $\omega(\mathbf{p}, s)=\theta_L$ for all $s\in (0,{m^*})$ is consistent with RPBE.     
\end{enumerate}

\item If $e_{m^*}=0$ and Eq \ref{xx} holds, then there exists a pooling RPBE at $e_{m^*}=0$: same idea

\item  If  $\exists j>m^*$ for which Eq \ref{xx} is violated, then consider $n^*>m^*$, such that $$n^*\in \arg\max_{j>m^*} \left(\theta_H -c(e_{j},\theta_H)-f_{j}\right).$$ Note that 
       $$\theta_H -c(e_{n^*},\theta_H)-f_{n^*}>
    c(e_{m^*},\theta_L)+\underline{u}-c(e_{m^*},\theta_H),$$
    then there exists a separating RPBE in the subgame following $\mathbf{p}$ where 
    \begin{itemize}
        \item high-type students choose $e_{n^*}$ 
        \item low-type students choose $e=0$ if $\theta_L> f_{min}$ and not enrolling if $\theta_L\leq f_{min}$.
         \item $\omega(\mathbf{p}, s)=\max\{\theta_L,0\}$ for all $s\leq {m^*}$, and $\omega(\mathbf{p}, j)=\theta_H$ for all $j\geq{n^*}$
    \end{itemize}

To see this note that 
    \begin{enumerate}
        \item correct belief update (and wage) on path.
        \item low type would not deviate since $n^*>m^*$ and thus (by definition of $m^*$) $$
c(e_{n^*},\theta_L)+f_{n^*} > \theta_H -\underline{u},
$$
        
        \item high-type students would not optimally deviate to  a different $e_s>e_{m^*}$, by definition of $n^*.$
    
    \item Deviation to $e_s< e_{m^*}$ is more attractive for low rather than high-type students. Indeed $\theta_H -c(e_{n^*},\theta_H)-f_{n^*}>
    c(e_{m^*},\theta_L)+\underline{u}-c(e_{m^*},\theta_H)$

    implies 
    $$\theta_H -c(e_{n^*},\theta_H)-f_{n^*}>
    c(e_{s^*},\theta_L)+\underline{u}-c(e_{s^*},\theta_H)$$
    
    given strict decreasing differences. Rearranging we get that 
      $$[\theta_H -f_{s}-c(e_{s},\theta_L)]-[\underline{u}]>
    c(e_{n^*},\theta_H)+f_{n^*}-[c(e_{s},\theta_H)+f_s],$$
    
    implying that downward deviations are less attractive for high-type students than for low-type students.    
    Then $\omega(\mathbf{p}, s)=\theta_L$ for all $s\leq {m^*}$ is consistent with RPBE.   
    \end{enumerate}   

    \end{enumerate}

}

\section{UPDATED}
}
\subsection{Monopoly} 
Fix the monopolist school's policy to $p = (f, M)$. For any $m \in M(\mathbb{R}_+)$, let $e_m$ denote the minimum effort required to generate it: $e_m = \min_{e \in \mathbb{R}_+ : M(e) = m} e$. Without loss of generality, relabel messages so that $m = e_m$. Denote by $\underline{\psi}$ the strategy that delivers a student the highest payoff between enrolling with zero effort and not enrolling at all; the payoff following $\underline{\psi}$ is therefore at least $\underline{u}=\max\{0, \theta_L - f\}$.
Finally, denote by $e^* \in \mathbb{R}_+$ the maximal effort that the low type might be willing to exert 
\[
\theta_H -c(\theta_L, e^*) -f=  \underline{u},
\]
and define $m^* = \max_{e \in [0,e^*]} M(e)$.
By definition, $
c(\theta_L,e_{m^*})+f \leq \theta_H-\underline{u}< c(\theta_L,e_m)+f$ for any $m>m^*$.
Thus, $e > e_{m^*}$ is strictly dominated for $\theta_L$-students, who would nonetheless prefer exerting $e_{m^*}$ to obtain the wage $\theta_H$ rather than receiving their minimal payoff $\underline{u}$.

\vspace{0.2 in}
We first analyze the case when,
given $p$, it is impossible or too costly for the high type to signal an effort higher than  $e_{m^*}$, i.e., 
either $\{ m \in M(\R):  m > {m}^*\} = \emptyset$ or
\begin{align}\label{ineq:exist_mon_pool}
    \forall m > m^*: \;\;\theta_H-c(\theta_H,e_{m})-f\leq \underline{u}+c(\theta_L,e_{m^*})-c(\theta_H,e_{m^*}).
\end{align}

 Then, following $p$, there exists an EPBE  where:
    \begin{enumerate}
        \item high types enroll and exert $e_{m^*}$; i.e., $\psi_{p,\theta_H}(1,e_{m^*})=1$; low types randomize between $(1,e_{m^*})$, with probability $q\in [0,1]$, and $\underline{\psi}$, with probability $1-q$: 
        \[
q= 
\begin{cases}
    1 &\text{ if }  \bar{w}\leq \E \theta,\\
    \frac{\lambda}{1-\lambda} \left(\frac{ \theta_H - \bar{w}}{\bar{w} - \theta_L }\right) &\text{ otherwise,}
\end{cases}
\]
where $\bar{w} = c(\theta_L, e_{m^*}) +f+ \underline{u} \in [\max\{\theta_L,f\}, \theta_H]$ is the wage that, when paired with effort $e_{m^*}$, yields $\theta_L$-students their minimal payoff $\underline{u}$.

        \item Firms' wage scheme is 
\[
\omega_p(1, m)= 
\begin{cases}
    \max\{\theta_L,0\} &\text{ if }m \in [0, m^*),\\
    \max\{\bar{w},\E \theta \} &\text{ if }m = m^*,\\
    \theta_H  &\text{ if }m > m^*.\\
\end{cases}
\]
The associated belief system is given by
\[
\mu_p(\theta_H|1, m)= 
\begin{cases}
    0 &\text{ if }m \in [0, m^*),\\
    \frac{\lambda }{\lambda + (1-\lambda)q } &\text{ if }m = m^*,\\
    1  &\text{ if }m > m^*.\\
\end{cases}
\]
    \end{enumerate} 
To verify, first note that firms' wage scheme is consistent with their belief system, and on-path beliefs are updated according to Bayes' rule. 

Second, the low type has no incentive to deviate. Indeed, deviations to selecting efforts $e\neq \{0,e_{m^*}\}$ are dominated, so we just need to check the low-type students' payoffs at $(1,e_{m^*})$ and $\underline{\psi}$. If $\bar{w} > \E \theta$, low-type students can optimally randomize (as prescribed) between $(1,e_{m^*})$ and $\underline{\psi}$ as they they both deliver a payoff of $\underline{u}$. Conversely, if $\bar{w} \leq \E \theta$, low-type students can optimally select (as prescribed) $(1,e_{m^*})$ as this allows them them to fully pool with high-type students and obtain a payoff of $\E \theta-c(\theta_L,e_{m^*})-f\geq \bar{w}-c(\theta_L,e_{m^*})-f=\underline{u}$.

Third, high-type students have no incentive to deviate from exerting $e_{m^*}$. Indeed, 
increasing their effort to $\tilde e> e_{m^*}$ is suboptimal. This is trivially true if $M(\tilde e)\leq m^*$.
Further, it is a direct consequence of \eqref{ineq:exist_mon_pool} otherwise; note in fact that  \eqref{ineq:exist_mon_pool} implies that if $M(\tilde e)=m>m^*$, then $\theta_H - c(\theta_H, \tilde{e})-f \leq \underline{u}+c(\theta_L, e_{m^*})-c(\theta_H, e_{m^*})=\bar{w}-f-c(\theta_H, e_{m^*})
\leq \max\{\bar{w},\E \theta \}-c(\theta_H, e_{m^*})-f$. Moreover, since high-type students have a cost advantage in exerting effort and, as we showed, low-type students weakly prefer $(i, e_{m^*})$ to $\underline{\psi}$ (which involves no effort), high-type students must prefer $(i, e_{m^*})$ to $\underline{\psi}$ and, more generally, to any effort level $\tilde{e} < e_{m^*}$.

Finally, we must check that off-path beliefs survive our refinement. First, note that deviation to any higher effort $\tilde e > e_{m^*}$ is unattractive for both types (given that wage is capped by $\theta_H$). Therefore, our refinement does not restrict off-path belief upon observing $m > m^*$; in particular, the belief $\mu(\theta_H|p, m)=1$ for all $m > m^*$ is consistent with our refinement.
Besides, since high-type students have a cost advantage in exerting effort, deviation to any lower effort $\tilde e < e_{m^*}$ is more attractive for low-type students. 
To see this more clearly, for any $m < m^*$, 
\begin{align*}
\omega(p,m^*) - c(\theta_H, e_{m^*}) + c(\theta_H, e_{m}) 
&> \omega(p,m^*)  - c(\theta_L, e_{m^*}) + c(\theta_L, e_{m})
\end{align*}
where the inequality follows from the assumption that $c(\theta,e)$ has strictly decreasing differences. 
Thus, $\mu(\theta_H|p, m)=0$ for all $m < m^*$ is consistent with our refinement.

\vspace{0.2 in}
What remains to discuss is the case where signalling an effort higher than $e_{m^*}$  is not too costly for high-type students; i.e., 
there exists a message $m' > m^*$
such that $\theta_H-c(\theta_H,e_{m'})-f> \underline{u}+c(\theta_L,e_{m^*})-c(\theta_H,e_{m^*}).$ Further, defining $m^*_+ = \inf\left\{m \in M(\mathbb{R}_+): m > m^* \right\},$ we have $m^*_+ \geq m^*$ and
\begin{align}\label{m+}
    \theta_H - c(\theta_H, e_{m^*_+}) - f > \underline{u} + c(\theta_L, e_{m^*}) - c(\theta_H, e_{m^*}).
\end{align}
We show that, in this case, there exists a separating RPBE in the subgame following $p$, where 
    \begin{enumerate}
        \item high-type students exert effort $e=e_{{m}^*_+}$; low-type students choose $\underline{\psi}$;
        
         \item firms' wage scheme is given by
\[
\omega(p, m)= 
\begin{cases}
    \max\{\theta_L,0\} &\text{ if }m \in [0, {m}^*_+),\\
    \theta_H &\text{ if }m \geq {m}^*_+.
\end{cases}
\]
The associated belief system is given by
\[
\mu(\theta_H|p, m)= 
\begin{cases}
   0 &\text{ if }m \in [0, {m}^*_+),\\
   1 &\text{ if }m \geq {m}^*_+.
\end{cases}
\]
    \end{enumerate} 

To verify, first note that firms' wage scheme is consistent with their belief system, and on-path beliefs are updated according to Bayes' rule. Second, low-type students have no incentive to deviate from $\underline{\psi}$ since 
   $\underline{u}=\theta_H-f-c(\theta_L,e^*) \geq \theta_H-f-c(\theta_L,e_{m^*_+})$. Third, high-type students have no incentive to deviate from  $(1,e_{m^*_+})$: (i) $e>e_{m^*_+}$ incurs higher cost effort and delivers the same wage $\theta_H$; (ii) $\underline{\psi}$ results in a lower payoff since, by condition \eqref{m+}, $\underline{u} \leq \underline{u}-c(\theta_H,e_{m^*})+c(\theta_L,e_{m^*}) < \theta_H-c(\theta_H,e_{m^*_+}) - f.$
   
Finally, we check that off-path beliefs survive our refinement. Since deviations to sending any higher message $\tilde m > m^*_+$ are unprofitable for all students, our refinement does not restrict the off-path belief for all $m > m^*_+$. Therefore, we only need to show that any message $\tilde m < m^*_+$ is more attractive for low-type students to send.
Indeed, by definition of ${m^*_+}$, if ${\tilde m<m^*_+}$ then $\tilde m \leq m^*$, i.e., $e_{\tilde m} \leq e_{m^*}$. So, by
condition \eqref{m+} and strictly decreasing differences of $c$ in $(\theta,e)$,
\[\theta_H-c(\theta_H,e_{m^*_+})-f>\underline{u}- c(\theta_H,e_{\tilde m})+c(\theta_L,e_{\tilde m}).\] 
Rearranging terms, we get
\[\underline{u} + c(\theta_L, e_{\tilde m})<\left(\theta_H-c(\theta_H, e_{m^*_+})-f \right)+ c(\theta_H, e_{\tilde m}).\] 
Thus, the prescribed off-path belief $\mu(\theta_H|p, \tilde m)=0$ upon observing any message $\tilde m < m^*_+$ is consistent with our refinement.

\subsection{Competition}
Denote the lowest fee among all schools 
by 
\begin{align*}
    f_{min} = \min_{i\in I} f_i.
\end{align*}
Denote by $\underline{\psi}$ the strategy that provides a student with the highest payoff between (i) enrolling with equal probability in all $i \in I$ such that $f_i = f_{min}$ and exerting no effort, and (ii) not enrolling at all. The payoff following $\underline{\psi}$ is therefore  $\underline{u} = \max\{0, \theta_L - f_{min}\}$.

\vspace{0.1 in}
For any message $m \in M_i(\mathbb{R}_+)$ from school $i \in I$, let $e_{i,m} = \min \{e' \in \mathbb{R}: M_i(e') = m
\}$ denote the minimum effort required to generate it and let $C(\theta, i, m) = c(\theta, e_{i,m}) + f_i$ denote its minimum associated cost for type $\theta \in \Theta$. Without loss of generality, relabel signal $(i,m)$ so that $(i,m) = (i,e_{i,m})$.
Further, for every $i \in I$, denote by $e^*_i \in\R_+$ the maximal effort low-type students are willing to exert in $i$ to gain wage $\theta_H$ rather than getting the reservation payoff $\underline{u}$: 
\[\theta_H - c(\theta_L, e^*_i) - f_i = \underline{u}.\]  
Define $m_i^* = \max_{e \in [0,e_i^*]} M_i(e)$,
and define
\[{m}^* = \max_{i \in I} m_i^*,\,\,\,\,\text{and}\,\,\,\,\,I^* = \argmin_{\{i \in I:m_i^*= {m}^* \} } f_i.\]
Additionally, define,
\begin{align*}
    &\text{``marginal'' signals}: S^* = \{(i, m)\mid i \in I^*, \, m = {m}^*\},\\
    &\text{``high'' signals}: S_+^* =  \{(i, m)\mid i \in I, \, m > m^*\}, \text{ and ,}\\
    &\text{``low'' signals}: S_-^* =  \{(i, m)\mid i \in I^*, \, m < {m}^*\} \bigcup \{(i, m)\mid i \in I\setminus I^*, \,m \leq {m}^*\}.
\end{align*}
Note that $C(\theta, s) = C(\theta, s'')$ for all $s,s''\in S^*.$ Also, by definition, we have 
\begin{align}\label{ineq:exist_comp_compare}
  C(\theta_L, s) \leq \theta_H-\underline{u}  < C(\theta_L, s'), 
\end{align}
for any $s \in S^*$ and any $s' \in S_+^*$.
Thus, as a low-type student, the relative gain from being identified as high type --- instead of obtaining $\underline{u}$ --- can cover the cost to exert effort ${m}^*$ at school $i \in I^*$. 
In contrast, $\underline{\psi}$ dominates sending any ``high'' signal $s'$ that indicates excessively high effort for low-type students. 

We split the discussion depending on the magnitude of this relative gain from being identified as a high type.  

\vspace{0.2 in}
\paragraph{Case 1}: 
If $S_+^* = \emptyset$ or if, for any $s'\in S_+^* $,  
\begin{align}\label{ineq:exist_comp}
    \theta_H  -\underline{u}\leq C(\theta_H,s')-C(\theta_H,s^*)  + C(\theta_L,s^*),
\end{align}
where $s^* \in S^*$, there exists an EPBE ${\mathcal{E}_{\mathbf{p}}}$ in the subgame following $\mathbf{p}$ where 
    \begin{enumerate}
        \item High-type students enroll with equal probabilities in every school $i \in I^*$ and exert effort ${m}^*$, i.e., 
        \[\psi_{\mathbf{p},\theta_H}(i,e) = 
        \begin{cases}
            \frac{1}{|{I}^*| } &\text{ if } i \in I^* \text{ and } e={m}^*,\\
            0 &\text{ otherwise.}
        \end{cases}\]
       \item Low-type students mimic high types ($\psi_{\mathbf{p},\theta_L}=\psi_{\mathbf{p},\theta_H}$) with probability $q\in (0,1]$ and choose $\psi_{\mathbf{p},\theta_L}=\underline{\psi}$ with probability $1-q$, 
       where $q\in [0,1]$ is given by
        \[
q= 
\begin{cases}
    1 &\text{ if } \bar{w} \leq \E \theta,\\
    \frac{\lambda}{1-\lambda} \left(\frac{ \theta_H - \bar{w}}{\bar{w} - \theta_L }\right) &\text{ otherwise,}
\end{cases}
\]
where $\bar{w} = C(\theta_L,s^*) + \underline{u} \in [\max\{\theta_L, f_{min}\}, \theta_H]$ with $s^* \in S^*$, denoting the wage that, when paired with effort ${m}^*$, yields $\theta_L$-students their minimal payoff $\underline{u}$.

        \item Firms' wage scheme is 
\[
\omega_{\mathbf{p}}(s)= 
\begin{cases}
    \max\{\theta_L,0\} &\text{ if } s \in S_-^*,\\
    \max\{\bar{w},\E \theta \} &\text{ if } s \in S^*,\\
    \theta_H  &\text{ if } s \in S_+^*.
\end{cases}
\]
The associated belief system is given by
\[
\mu_{\mathbf{p}}(\theta_H| s)= 
\begin{cases}
    0 &\text{ if } s \in S_-^*,\\
    \frac{\lambda }{\lambda + (1-\lambda)q } &\text{ if } s \in S^*,\\
    1  &\text{ if } s \in S_+^*.
\end{cases}
\]
    \end{enumerate} 

To verify, first note that firms' wage scheme is consistent with their belief system, and on-path beliefs are updated according to Bayes' rule. 

Second, the low type has no incentive to deviate. Indeed, deviations to 
\[\psi_{\mathbf{p},\theta_L}\not\subseteq \Delta \left(\text{supp}(\underline{\psi}) \bigcup \text{supp}( \psi_{\mathbf{p},\theta_H})\right)\] are dominated, so we just need to check the low-type students' payoffs following $ \psi_{\mathbf{p},\theta_H}$ and $\underline{\psi}$. 
If $\bar{w} > \E \theta$, then low type is indifferent between $ \psi_{\mathbf{p},\theta_H}$ and $\underline{\psi}$ as they both deliver payoff $\underline{u}$;
if $\bar{w} \leq \E \theta$, low-type students prefer $ \psi_{\mathbf{p},\theta_H}$ over choosing $\underline{\psi}$ and obtaining $\underline{u}$.

Third, high-type students have no incentive to deviate: (i) generating any ``high'' signal $s \in S^*_+$ yields a lower payoff for high-type students. To see this, 
note
\begin{align*}
   \theta_H  - C(\theta_H,s) \leq \underline{u} + C(\theta_L,s^*) - C(\theta_H,s^*) \leq  \omega_{\mathbf{p}}(s^*) - C(\theta_H,s^*), 
\end{align*}
for any $s\in S^*_+$ and any $s^* \in S^*$, where the first inequality follows from \eqref{ineq:exist_comp}, and the second from the definition of $\bar{w}$. Moreover, since low-type students weakly prefer $\psi_{\mathbf{p},\theta_H}$ to $\underline{\psi}$, then high-type students must prefer  $\psi_{\mathbf{p},\theta_H}$ to $\underline{\psi}$, due to their cost advantage in exerting effort. Thus, high-type students have no strict incentive to generate
any “low” signal $s \in S_-^*$.

Finally, we must check that off-path beliefs survive our refinement. Note that deviation to a ``high'' signal $s \in S^*_+$ (i.e., to signal an effort higher than ${m}^*$) is unattractive for both types (given that wage is capped by $\theta_H$). Hence, our refinement does not restrict the off-path belief upon observing a signal $s\in S^*_+$.  Besides, due to the high-type students' cost advantage, deviations that signal lower effort $e_{s} < {m}^*$ are more attractive to low-type students. Indeed, for every $s=(j,m)\in S^*_-$ such that $e_{s} < {m}^*$, 
\begin{align*}
\omega_{\mathbf{p}}(s^*) - C(\theta_H, s^*) + C(\theta_H, s) &= \omega_{\mathbf{p}}(s^*)  - [c(\theta_H, {m}^*)+ f_i]  + [c(\theta_H, e_{s}) + f_j],\\
&> \omega_{\mathbf{p}}(s^*)  - [c(\theta_L, {m}^*) - c(\theta_L, e_{s})]  -f_i + f_j,\\
&= \omega_{\mathbf{p}}(s^*) - C(\theta_L, s^*) + C(\theta_L, s),
\end{align*}
where $s^*=(i,m^*)\in S^*$ and the inequality follows from the assumption that $c(\theta,e)$ has strictly decreasing differences. Thus $\mu(\theta_H|\mathbf{p}, s)=0$ is consistent with our refinement for all $s\in S^*_-$ such that $e_{s} < {m}^*$. Additionally, our refinement does not restrict the off-path belief when $e_{j,m} = {m}^*$ and $j \notin I^*$. Therefore, the off-path beliefs,  $\mu(\theta_H|\mathbf{p}, s)=1$ for all $s\in S^*_+$ and $\mu(\theta_H|\mathbf{p}, s')=0$ for all $s'\in S^*_-$, are consistent with our refinement. 

\vspace{0.2 in}
\paragraph{Case 2}: 
Suppose now that, given $s^* \in S^*$, there exists a signal $s\in S^*_+$ such that
\begin{align}\label{ineq:exist_comp_sep}
    \theta_H  -\underline{u} > C(\theta_H,s)-C(\theta_H,{s^*})  + C(\theta_L,{s^*}).
\end{align}
Define $\underline{C}^{*}_+ = \inf_{s \in S_+^*} C(\theta_H, s)$ and
\begin{align*}
    \Xomit{\begin{cases}
    \min_{i,m \in M_+^*} C(\theta_H, i,m) &\text{ if } \inf\{e_{i,m}: i,m \in M_+^*, i \in I\} > {m}^*\\
    \min_{i,m \in M_+^* \cup M^*} C(\theta_H, i,m) &\text{ if } \inf\{e_{i,m}: i,m \in M_+^*, i \in I^*\} = {m}^*,\\
    \min_{i,m \in M_+^* \cup {(j,m): \inf\{e_{j,m}: i,m \in M_+^*, i \in I^*\} = {m}^*}} C(\theta_H, i,m) &\text{ otherwise }
    \end{cases}
    }
    \underline{S}^{*}_+ &= 
    \begin{cases}
    \{s \in S_+^*: C(\theta_H, s) = \underline{C}^{*}_+\} &\text{ if }  \{s \in S_+^*: C(\theta_H, s) = \underline{C}^{*}_+\} \neq \emptyset,\\
    \{s\in I\times \mathbb{M}: e_{s} = {m}^* \,\text{ and } \, C(\theta_H, s) = \underline{C}^{*}_+\} &\text{ otherwise. }
    \end{cases}
\end{align*}
Note for all $s \in  \underline{S}^{*}_+ $, $e_{s}\geq {m}^*$.

\Xomit{
Note that $C(\theta_H, i,m) = \underline{C}^{*}_+$ for 
$i,m \in M^*$ only if $\inf_{(j,m)\in S^*_+} e_{j,m} = {m}^*$, that is,
$\theta_H - C(\theta_L, {m}^*) = \underline{u}$. \mn{think more on this}
}

We show that, following $\mathbf{p}$,  there exists a separating EPBE where 
    \begin{enumerate}
    \item high types choose $ \psi_{\theta_H, \mathbf{p}}$ such that
        \[ \psi_{\theta_H, \mathbf{p}}(i,e) = 
        \begin{cases}
            \frac{1}{|\underline{S}^{*}_+| } &\text{ if } (i,M_i(e))  \in  \underline{S}^{*}_+,\\
            0 &\text{ otherwise;}
        \end{cases}\]        
        \item low-type students choose $ \psi_{\theta_L, \mathbf{p}}=\underline{\psi}$;
        
         \item firms' wage scheme is 
         \[
\omega_{\mathbf{p}}(s)= 
\begin{cases}
    \theta_H &\text{ if } s \in \underline{S}_+^* \cup {S}_+^*,\\
    \max\{\theta_L,0\} &\text{ otherwise.}
\end{cases}
\]
The associated belief system is given by
\[
\mu_{\mathbf{p}}(\theta_H| s)= 
\begin{cases}
  1 &\text{ if } s \in \underline{S}_+^* \cup {S}_+^*,\\
 0 &\text{ otherwise.}
\end{cases}
\]
    \end{enumerate}

   To verify, first note that firms' wage scheme and belief system are consistent and, on-path, beliefs are updated according to Bayes' rule. 
   
   Second, low-type students have no incentive to deviate since (i) by \eqref{ineq:exist_comp_compare}, sending a signal $s\in S^*_+$ results in a payoff lower than $\underline{u}$; (ii) signal $s=(i,m) \in S^* \cup  S_{-}^*$ leads to wage $\theta_H$ only if $\inf\{e_{s'}: s' \in S^*_+\} = e_{s} ={m}^*$ and $C(\theta_H, s) = \underline{C}_+^*$. Thus, we can find a sequence $\{e^n\}$ such that $(i,M_i(e^n)) \in S^*_+$ and $\lim_{n \rightarrow \infty} e^n = {m}^* = e_{s}$. By \eqref{ineq:exist_comp_compare}, we have that,  for every $n \in \mathbb{N}$, 
$\theta_H- [c(\theta_L, e^n) + f_i] < \underline{u}$.
   Thus, by the continuity of $c(\theta, e)$ in $e$, we get that $\theta_L$-students (weakly) prefer $\underline{\psi}$ to sending $s$ and earning $\theta_H$:
   \[\theta_H- [c(\theta_L, {m}^*) + f_i] \leq \underline{u}.\]

Third, high-type students have no incentive to deviate from $s \in \underline{S}_+^*$.
Indeed, since $C(\theta_H, s) = \inf_{s' \in S_+^*} C(\theta_H, s')$, sending any $s' \in S_+^* \setminus \underline{S}_+^*$ incurs a higher cost for the same wage, $\theta_H$. Moreover, not enrolling or exerting $e<{m}^*$ also results in lower payoffs as \[\underline{u} \leq  \underline{u} -C(\theta_H,{s^*})  + C(\theta_L,{s^*}) \leq \theta_H-\underline{C}_+^*,\]
for any $s^* \in S^*$, where the second inequality follows from \eqref{ineq:exist_comp_sep} and the definition of $\underline{C}_+^*$.

Finally, we must check that off-path beliefs survive our refinement. Our refinement does not restrict the off-path belief for $s \in  S_+^* \setminus \underline{S}_+^*$, as deviations to sending these signals are unprofitable for all students. Thus, we only need to show that every $(i, e)$ with $s = (i, M_i(e)) \in S_{-}^* \cup (S^* \setminus \underline{S}_+^*)$ is more attractive for low-type students.  Condition \eqref{ineq:exist_comp_sep} implies 
\begin{align*}
    \theta_H  -\underline{u} &> C(\theta_H,s')-C(\theta_H,{s^*})  + C(\theta_L,{s^*})\\
    &= C(\theta_H,s') + c(\theta_L,{m}^*) - c(\theta_H,{m}^*),
\end{align*}
for any $s' \in \underline{S}_+^*$ and $s^* \in S^*$. Then, since $c$ has strictly decreasing differences in $(\theta,e)$ and $e_{s} \leq {m}^*$ for all $s \in S_{-}^* \cup (S^* \setminus \underline{S}_+^*)$, we have
\begin{align*}
    \theta_H - C(\theta_H,s')  &>  \underline{u}+ [c(\theta_L,e_{s}) - c(\theta_H,e_{s})]=
  \underline{u} + C(\theta_L,{s})  - C(\theta_H,{s}),
\end{align*}
and thus 
\[\theta_H - C(\theta_H,{s'}) + C(\theta_H,s) > \underline{u} + C(\theta_L, s).\] 
Thus, $\mu_{\mathbf{p}}(\theta_H| i,m)=0$ for all $s \in S_{-}^* \cup (S^* \setminus \underline{S}_+^*)$ is consistent with our refinement.

\section{Proof of Proposition \ref{prop:mon_screen}}\label{app:proofs-prop-mon-screen}

We first show that for the school to earn $\pi^*=\lambda \theta_H$ in equilibrium, the equilibrium outcome must be the one described in Proposition \ref{prop:mon_screen}.
We then establish that in any RPBE, the school must earn $\pi^*=\lambda \theta_H$. 

\paragraph{Step 1}: \emph{The equilibrium outcome in Proposition \ref{prop:mon_screen} is the only one that generates a school profit of $\pi^* = \lambda \theta_H$, and it is consistent with an RPBE.}

First, note that the maximum total wage the firms may be willing to pay is $\lambda \theta_H$. For the firms to pay $\lambda \theta_H$ in total, low-type students cannot be hired; thus, they must prefer not to enroll.
Thus, the school can attain $\pi^*=\lambda \theta_H$ only if (i) it attracts all and only $\theta_H$-students, (ii) it charges $f^*= \theta_H$, (iii) enrolled students exert no effort (otherwise, they would need to be compensated for that effort) so the school must adopt an uninformative monitoring policy with
$M^*(e) = M^*(0)$ for any $e \in \mathbb{R}_+$ (by the minimality requirement), and (iv) firms pay every graduate a wage of $ \theta_H$ observing the equilibrium message generated.

Consider now a candidate equilibrium that yields the outcome described by Proposition \ref{prop:mon_screen}. The school has no incentive to deviate: it cannot attain higher profits. Students have no incentive to deviate: they are indifferent between enrolling or not, obtaining zero payoffs either way. Firms optimally offer wage $\theta_H$ since only $\theta_H$-students enroll. Thus, the candidate equilibrium is a PBE. Moreover, given the school policy $p^* = (f^*, M^*)$, enrolled students can only generate one signal (no unsent message); thus, the one described is an EPBE outcome following $p^*$. Finally, since EPBE exists following any $p$, but no policy can yield higher school profits, this is also an RPBE outcome.

\paragraph{Step 2}: \emph{The school's RPBE profit must be $\pi^* = \lambda \theta_H$.}

We already argued that $\pi^* \leq \lambda \theta_H$.
Suppose for the sake of contradiction that there exists an RPBE $\mathcal{E}'=(p', \psi', \omega',\mu')$ yielding $\pi' <\pi^* = \lambda \theta_H$. We will show that the school can profitably deviate from $p'$ to  $\hat p =(\hat f, \hat M)$:
\begin{align*}
    &\hat f = \theta_H - c(\theta_H, \epsilon)-\gamma_\epsilon,\\
    &\hat M (e) =
    \begin{cases}
        \hat m_A \quad &\text{if } e \geq \epsilon, \\
        \hat m_B \quad &\text{if } e < \epsilon,
    \end{cases}
\end{align*}
where $\hat{m}_A,  \hat{m}_B \in \mathbb{M}$, and 
$\epsilon,\gamma_\epsilon>0$ are sufficiently small---such that  $ \lambda (\theta_H - c(\theta_H, \epsilon) -\gamma_\epsilon) >\pi'$ and $\gamma_\epsilon<c(\theta_L, \epsilon)-c(\theta_H, \epsilon)$. 

We prove that, following $\hat p$, all $\theta_H$-students enroll ($\psi'_{p,\theta_H}(1) = 1$), yielding a profit of at least $\lambda[\theta_H  - c(\theta_H, \epsilon) -\gamma_\epsilon] > \pi'$. 

First, note that an EPBE exists following any policy $p$.

Now we argue that all high types must enroll following $\hat{p}.$ For the sake of contradiction, suppose the EPBE prescribed by $\mathcal{E}'$ following ${\hat{p}}$ is $\mathcal{E}'_{\hat{p}}=(\hat{p}, \psi'_{\hat{p}}, \omega'_{\hat{p}},\mu'_{\hat{p}})$, such that some high-type students do not enroll; i.e., $\psi'_{\hat{p},\theta_H}(1)<1$. Observe that, even if $\omega'_{\hat{\mathbf{p}}}(1,\hat{m}_A )=  \theta_H$, $\theta_L$-students would prefer not enrolling rather than enrolling and exerting $e=\epsilon$ because $\theta_H-\hat f-c(\theta_L, \epsilon)<\theta_H-\hat f-(c(\theta_H, \epsilon)+\gamma_\epsilon)=0$. Then, no low type enrolls and exerts $e=\epsilon$. We then split the discussion into two cases to show
$\omega'_{\hat{\mathbf{p}}}(\hat{m}_A )=\theta_H$ in $\mathcal{E}'_{\hat{p}}$:
\begin{itemize}
    \item[(i)] If $\theta_H$-students exert $e=\epsilon$ with positive probability following $\hat p$, then by Bayesian updating in the EPBE delivers $\omega'_{\hat{\mathbf{p}}}(\hat{m}_A )=\theta_H$.
    \item[(ii)] If no student selected $e=\epsilon$, then $\omega'_{\hat{\mathbf{p}}}(\hat{m}_A )=\theta_H$ by our refinement. Indeed, since $\theta_H$ and $\theta_L$ would obtain the same equilibrium payoff when not enrolling or not exerting any effort,  $U^{\mathcal{E}'_{\hat{p}}}(\theta_H,\hat p)  = U^{\mathcal{E}'_{\hat{p}}}(\theta_L,\hat p)$. Thus $U^{\mathcal{E}'_{\hat{p}}}(\theta_L,\hat p) + c(\theta_L, \epsilon) > U^{\mathcal{E}'_{\hat{p}}}(\theta_H,\hat p) + c(\theta_H, \epsilon)$. Moreover,  $\theta_H-\hat f- c(\theta_H, \epsilon) = \gamma_\epsilon > 0$.     Thus, by \eqref{ineq:D1}, $\mu'_{\hat{\mathbf{p}}}(\theta_H|1,\hat{m}_A )=1$ and $\omega'_{\hat{\mathbf{p}}}(1,\hat{m}_A )=\theta_H$.
\end{itemize}
But if $\omega'_{\hat{\mathbf{p}}}(\hat{m}_A ) = \theta_H$, then every high-type student would prefer enrolling paying fee $\hat f= \theta_H  - c(\theta_H, \epsilon) -\gamma_{\epsilon}$, exerting $e=\epsilon$ for an effort cost of $c(\theta_H, \epsilon)$, and obtaining wage $\theta_H$, rather than not enrolling and getting the $0$ outside option, a contradiction.

\section{Proof of Proposition \ref{prop:competition1}}
Denote by $f^*_{min}=\min_{i\in I}f^*_i$ the lowest fee charged by schools and let $i^* \in I$ be a school that charges the lowest fee, i.e., $f_{i^*}^*=f_{min}^*$. We will prove that there does not exist any efficient RPBE in both the sorting case and the screening case under competition.

\subsection{Sorting} 

In the sorting case, efficiency (maximum social welfare) requires no effort and full employment. To achieve this, an efficient RPBE $\mathcal{E}^* = (\mathbf{p}^*, \psi^*, \omega^*, \mu^*)$ should be such that: (i) All students enroll, exert no effort, earn $\E \theta$, and attain the same payoff; (ii) Schools set fees $(f_i^*)_{i \in I} \in 
[0, \theta_H]^n$ such that $f^*_{min} \leq \E\theta$ and adopt uninformative monitoring policies with $M_i^*(e) = M_i^*(0)$ for all $i \in I$ and $e \in \mathbb{R}_+$.\footnote{The minimality requirement rules out the possibility that $M_j(e')\neq M_j(0)$ for some $j\in I$ and $e'\in \R_+$ not chosen in equilibrium.} 

\vspace{0.2 in}
First, we show that if $\mathcal{E}^*$ is efficient, then $f_{min}^*=0$.  Suppose, for contradiction, $f_{min}^*>0$.

\paragraph{Case 1}:  
{{{If school $i^*$ does not attract all students,}} then $i^*$ could profitably deviate to $\hat p_{i^*}$:
 \begin{align}
 &\hat f_{i^*}=f_{min}^*-c(\theta_L, \epsilon)-\gamma_{\epsilon}, \nonumber\\
&\hat{M}_{i^*}(e) =  
   \begin{cases}
        \hat m_A \quad &\text{if } e \geq \epsilon,\\
        \hat m_B \quad &\text{if } e < \epsilon,\label{eq:unidevSCR}
    \end{cases} 
\end{align}
where $\hat{m}_A,  \hat{m}_B \in \mathbb{M}$, and $\epsilon, \gamma_{\epsilon}>0$ are sufficiently small. 
To see this, consider $\mathcal{E}^*_{\mathbf{p}'}$, the EPBE prescribed by $\mathcal{E}^*$ in the subgame following $\mathbf{p}'$,  with $p_j' = p^*_j$ for all $j \neq i$ and $p_{i}' = \hat{p}_{i}$. We show that, in $\mathcal{E}^*_{\mathbf{p}'}$, $i^*$ would attract all students: 
\begin{enumerate}[(i)]\itemsep0pt
    \item \textit{All high-type students would enroll in $i^*$.} Note that, under ${\mathbf{p}}'$, $e = 0$ is the only meaningful effort alternative to $e = \epsilon$. Thus, if low-type students choose $(i^*, \epsilon)$ with positive probability, $\theta_H$-students would choose it with probability one due to their cost advantage.
Conversely, if no low-type students select $(i^*, \epsilon)$, the maximal payoff for students enrolling in $j \neq i^*$ would be strictly less than $\E \theta - \hat{f}_{i^*}$, as low-type students would pool with high-type students exerting zero effort. Therefore, either our refinement or Bayes' rule would set $\omega(i^*, \hat{m}_A) = \theta_H > \E \theta$, making high-type students strictly prefer enrolling in $i^*$, as $\E\theta - \hat{f}_{i^*} < \theta_H - \hat{f}_{i^*} - c(\theta_H, \epsilon)$.

\item \textit{Since all high-type students enroll in $i^*$, all low-type students would do the same.} Indeed, enrolling in $j\neq i^*$ would identify students as low-type students, yielding a payoff of at most $\theta_L - f_{min}^*$. In contrast, by mimicking high-type students' choices, their payoff at school $i^*$ is at least $\E \theta - \hat{f}_{i^*} - c(\theta_L,\epsilon) = \E \theta - {f}_{min}^* + \gamma_{\epsilon} > \max\{\theta_L - {f}_{min}^*,0\}$. 
\end{enumerate}
 Thus, all students must enroll in school $i^*$ in every EPBE following $\mathbf{p}'$. Hence, this deviation allows school $i^*$ to ensure higher profits --- covering the entire market with a slightly lower fee. 
Thus, if an efficient RPBE $\mathcal{E}^*$ involves $f_{min}^*> 0$ then it must be that, in $\mathcal{E}^*$, all students enroll in $i^*$. 
}

\paragraph{Case 2}:  If, instead, school $i^*$ attracts all students, it is clear that any competing school $j \neq i^*$ would strictly benefit from unilaterally deviating to $\hat{p}_{j} = \hat{p}_{i^*}$ as defined in \eqref{eq:unidevSCR}.

\vspace{0.2 in}
Thus, if an RPBE $\mathcal{E}^*$ is efficient, it must be that $f_{min}^*= 0$. 

\vspace{0.2 in}
{Second, we show that if an RPBE $\mathcal{E}^*$ is efficient, it cannot have $f_{\text{min}}^* = 0$, leading to the conclusion that no efficient RPBE exists. Suppose, for contradiction, that $f_{i^*}^* = 0$, meaning $i^*$ would make zero profits in $\mathcal{E}^*$. However, $i^*$ could ensure positive profits by unilaterally deviating to 
 \begin{align}
 &\hat f_{i^*} =\gamma_\epsilon, \nonumber \\ 
&\hat{M}_{i^*}(e) =  
   \begin{cases}
        \hat m_A \quad &\text{if } e \geq \epsilon,\\
        \hat m_B \quad &\text{if } e < \epsilon, \label{eq:unidevSCR2}
    \end{cases} 
\end{align}
where $\hat{m}_A,  \hat{m}_B \in \mathbb{M}$, and $\gamma_\epsilon, \epsilon>0$ are sufficiently small---such that $\theta_H-c(\theta_H,\epsilon)-\gamma_\epsilon>\E\theta$.  Indeed, suppose, for contradiction, that no student enrolls in school $i^*$ in the corresponding EPBE $\mathcal{E}^*_{\mathbf{p}'}$. 
  Given $\mathbf{p}'$, no positive effort can be signaled in $j \neq i^*$, and low-type students can mimic high-type students (obtaining the same payoff as $e=0$). Thus, every student payoff in $\mathcal{E}^*_{\mathbf{p}'}$ would be bounded above by $\E \theta$. Thus, given the high type’s cost advantage, our refinement would imply $\omega_{\mathbf{p}'}(i^*,\hat m_A)=\theta_H>\E \theta$. Consequently, high-type students would prefer to switch and enroll in $i^*$, as $\theta_H - c(\theta_H, \epsilon) - \gamma_\epsilon > \mathbb{E} \theta$, leading to a contradiction.
} Thus, $\mathcal{E}^*_{\mathbf{p}'}$ must involve some students enrolling in $i^* \in I$, resulting in positive profits for $i^*$.
Therefore, an efficient RPBE with $f_{min}^* =0$ cannot be sustained.}

\subsection{Screening}
{In the screening case, efficiency requires no enrollment from $\theta_L$, full enrollment from $\theta_H$, and no effort exerted. To achieve this, an efficient RPBE $\mathcal{E}^* = (\mathbf{p}^*, \psi^*, \omega^*, \mu^*)$ should be such that: (i) All $\theta_H$-students enroll, exert no effort, and earn $\theta_H$; (ii) Active schools (those with positive enrollment) charge $f^*_i=\theta_H$---otherwise, $\theta_L$ would mimic high-type students by enrolling and exerting no effort; (iii) All schools adopt uninformative monitoring policies with $M_i^*(e) = M_i^*(0)$ for all $i \in I$ and $e \in \mathbb{R}_+$ (by the RPBE minimality requirement).
We consider two cases to prove that an efficient RPBE does not exist.

\paragraph{Case 1:} Suppose $\mathcal{E}^*$ is such that $f^*_{min} \geq \theta_H$.  There must exist a school $i \in I$ that does not attract all high-type students in $\mathcal{E}^*$ and, as a result, earns a profit $\pi^*_i < \lambda \theta_H$. We show that this school could attract all $\theta_H$-students and attain higher profits by unilaterally deviating to $\hat{p}_{i}=(\hat{f}_{i},\hat{M}_{i})$:
  \begin{align*}
  &\hat f_{i}=\theta_H-c(\theta_H,\epsilon)-\gamma_{\epsilon},\\
&\hat{M_i}(e) =  
   \begin{cases}
        \hat m_A \quad &\text{if } e \geq \epsilon,\\
        \hat m_B \quad &\text{if } e < \epsilon,
    \end{cases} 
\end{align*}
where $\hat{m}_A,  \hat{m}_B \in \mathbb{M}$, and $\epsilon, \gamma_{\epsilon}>0$ are sufficiently small such that 
\[\gamma_{\epsilon} <c(\theta_L,\epsilon) - c(\theta_H,\epsilon), \,\,\, \text{and} \,\,\, \theta_H-c(\theta_H,\epsilon) -\gamma_{\epsilon} > {\pi_i^*}/{\lambda}.\]

Indeed consider $\mathcal{E}^*_{\mathbf{p}'}$, the EPBE prescribed by $\mathcal{E}^*$ in the subgame following $\mathbf{p}'$,  with $p_j' = p^*_j$ for all $j \neq i$ and $p_{i}' = \hat{p}_{i}$.
{ First, note that no $\theta_L$ student would select $(i,e=\epsilon)$ in $\mathcal{E}^*_{\mathbf{p}'}$ as $\theta_H-c(\theta_L,\epsilon)-\hat{f_i}<0$. Thus, we must have $\omega^*_{\mathbf{p}'}(i,\hat m_A)=\theta_H$.
As a result, $\theta_H$-students would strictly prefer $(i,e=\epsilon)$ over enrolling in any other school paying fee $\theta_H$ (in this case, their payoff would be at most 0). Thus, in $\mathcal{E}^*_{\mathbf{p}'}$, all $\theta_H$-students must enroll in $i$. Since $\lambda \hat f_i = \lambda (\theta_H-c(\theta_H,\epsilon) - \gamma_{\epsilon}) > \pi_i^*$, school $i$ has a strict incentive to deviate to the proposed policy, a contradiction. Therefore, such efficient RPBE cannot exist.}

\paragraph{Case 2:} Suppose $f^*_{min} = f^*_{i^*}< \theta_H$. If $\mathcal{E}^*$ is efficient, then no student can enroll in $i^*$ (see (ii) above), implying $\pi_{i^*}^*=0$. However, $i^*$ could ensure positive profits by unilaterally deviating to 
 \begin{align}
 &\hat f_{i^*} =\gamma_\epsilon, \nonumber \\
&\hat{M}_{i^*}(e) =  
   \begin{cases}
        \hat m_A \quad &\text{if } e \geq \epsilon,\\
        \hat m_B \quad &\text{if } e < \epsilon, \label{eq:unidevSCRr}
    \end{cases} 
\end{align}
where $\hat{m}_A,  \hat{m}_B \in \mathbb{M}$, and  $\gamma_\epsilon,\epsilon >0$ are sufficiently small---such that $\theta_H-c(\theta_H,\epsilon)-\gamma_\epsilon>\max\{0,\E\theta\}$.
Indeed, suppose, for contradiction, that no student enrolls in school $i^*$ in the corresponding EPBE $\mathcal{E}^*_{\mathbf{p}'}$. 
 Given $\mathbf{p}'$, no positive effort can be signaled in $j \neq i^*$, and low-type students can mimic high-type students (obtaining the same payoff as $e=0$). Thus, every student payoff in $\mathcal{E}^*_{\mathbf{p}'}$ would be bounded above by $\max\{0,\E \theta\}$. Thus, given the high type’s cost advantage, our refinement would imply $\omega_{\mathbf{p}'}(i,\hat m_A)=\theta_H>\max\{0,\E \theta\}$. Consequently, high-type students would prefer to switch and enroll in $i^*$, as $\theta_H - c(\theta_H, \epsilon) - \gamma_\epsilon > \max\{0,\E \theta\}$, leading to a contradiction.
} Thus, $\mathcal{E}^*_{\mathbf{p}'}$ must involve some students enrolling in $i^* \in I$, resulting in positive profits for school $i^*$.
Therefore, an efficient RPBE with $f_{min}^* <\theta_L$ cannot be sustained.

\section{Proofs of Propositions \ref{prop:R}, \ref{prop:comp_Riley}, \ref{prop:competition_fierce}, and \ref{prop:competition2}} \label{app:proofs-prop-comp}

This section contains proofs for a collection of Propositions from the main text, and some additional results we include for completeness. First, we state the equilibria under the case of Mild Competition (Proposition \ref{prop:competition2}). To avoid duplicating arguments across separate proofs, we then provide a unified proof for Propositions \ref{prop:R}, \ref{prop:comp_Riley}, \ref{prop:competition_fierce}, and \ref{prop:competition2}, highlighting the specific parts addressing each proposition.

\subsection{Mild Competition Among Schools}\label{apd:comp}
In this section, we examine the set of RPBEs that arise under mild competition, i.e., when none of the conditions in Definition \ref{def:fierce} are met. While the RPBE outcomes characterized in Proposition \ref{prop:competition_fierce} remain valid, mild competition also allows for symmetric RPBEs in which schools charge positive fees. As we demonstrate, these symmetric equilibria must be semi-pooling.
\begin{proposition}\label{prop:competition2} 
Under mild competition, the outcome of every symmetric RPBE $\mathcal{E} ^*= (\mathbf{p}^*, \psi^*, \omega^*, \mu ^*)$ falls into one of two categories:
\begin{enumerate}
   \item[I.] The separating RPBE outcome of Proposition \ref{prop:R} (\textbf{Riley outcome}), or

\item[II.] {Semi-pooling}, such that: \begin{enumerate}
   \item[(i)] Schools adopt the following policy $p^* = (f^*, M^*)$:
   \begin{align*}
   &f^* \in \{0\} \cup \left[n\theta_L, \min\left\{\theta_H, \frac{\E\theta}{\lambda n}\right\}\right) &&\text{ if $\theta_L \geq 0$}, \\
   &f^* \in \left[0, \max\left\{\frac{\theta_H + (n - 1) \theta_L}{n}, 0\right\}\right]  &&\text{ if $\theta_L < 0$}, \\
   &M^*(e) =  
   \begin{cases}
        m_A \quad &\text{if } e \geq e_h,\\
        m_B \quad &\text{if } e \in [e_l, e_h),\\
        m_C \quad &\text{otherwise,}
    \end{cases}
\end{align*} 
where $0\leq e_l < e_h$ and $m_A, m_B, m_C 
 \in \mathbb{M}$. In particular, if $f^* = 0$, then $e_h = e^{R}$.

\item[(ii)] All students enroll; low types exert $e_l$; high types exert $e_l$ with probability $q_h \in (0,1)$ and $e_h$ with probability $1-q_h$: $\psi^*_{p^*, \theta_H}(i, e_l)={q_h}/{n}$, $\psi^*_{p^*, \theta_H}(i, e_h)={(1-q_h)}/{n}$,  $\psi^*_{p^*, \theta_L}(i, e_l)={1}/{n}$ 
     for any school $i \in I$.

\item[(iii)] Firms offer the following wage schedule to graduates of any school $i \in I$:
\begin{align*}
   \omega^*_{p^*}(i,m) &= 
   \begin{cases}
        \theta_H \quad &\text{if } m = m_A,\\
        w_l \quad &\text{if } m = m_B,\\
        \max\{\theta_L, 0\} \quad &\text{otherwise,}
    \end{cases}
\end{align*}
 where
\begin{align} \label{eq:mixed wage2}
    w_l = \frac{\lambda q_h \theta_H + (1-\lambda) \theta_L}{\lambda q_h  + 1-\lambda} \in (\max\{\theta_L, 0\}, \theta_H).
\end{align}

\end{enumerate}
\end{enumerate}

\end{proposition}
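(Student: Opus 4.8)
The plan is to characterize all symmetric RPBE outcomes under mild competition, proceeding in parallel with the fierce-competition analysis of Proposition \ref{prop:competition_fierce} but now tracking carefully where the fierce-competition hypotheses were used and what survives without them. First I would establish, exactly as in the proof of Proposition \ref{prop:competition_fierce}, that any \emph{separating} symmetric RPBE must deliver the Riley outcome; this argument is purely a forward-induction/D1 argument at the subgame level and does not use the fierceness hypotheses, so it carries over verbatim and yields category I. Next I would rule out fully-pooling RPBE: a unilateral deviation by one school to a small positive fee $\gamma_\epsilon$ paired with a perfectly informative monitoring policy attracts the high types who want to separate, since our refinement attributes the newly available high-effort signal to high types (the cost-advantage inequality \eqref{ineq:D1}), and the resulting profit $\lambda\gamma_\epsilon(1-\text{something})>0$ contradicts zero enrollment in a pooling equilibrium where high types earn strictly less than $\theta_H$. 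This again does not require fierceness.

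The substantive new content is the characterization of the \emph{semi-pooling} outcomes in category II and, in particular, pinning down the admissible fee range. I would first show that in any semi-pooling symmetric RPBE the low types' equilibrium payoff must strictly exceed $\max\{\theta_L,0\}$, by the same deviation logic: if low types were held to $\max\{\theta_L,0\}$ the configuration would effectively be separating (all low types identified), and we are in the separating case already handled. Hence low types attend and pool (at effort $e_l$) with a positive mass of high types, generating the intermediate wage $w_l$ of \eqref{eq:mixed wage2}; the value of $q_h$ and $w_l$ are then forced by Bayes' rule and by the low type's indifference between the pooling signal and not being identified. Second, I would show that the top effort $e_h$ must equal the Riley effort $e^R$ when $f^*=0$: if $e_h<e^R$, a low type could profitably imitate the high signal (since $c(\theta_L,e_h)<c(\theta_L,e^R)=\theta_H-\max\{\theta_L,0\}$), breaking separation at the top; if $e_h>e^R$, a deviating school offering a slightly informative policy with threshold $e^R$ and a tiny fee strictly attracts high types, contradicting equilibrium. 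For $f^*>0$ the binding constraint relaxes, so $e_h\ge e^R$ is possible but the fee is bounded as stated.

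The fee bounds are where the mildness hypotheses enter. I would derive them by writing down, for a putative symmetric semi-pooling RPBE with fee $f^*>0$, the most profitable unilateral deviation — undercut to $f^*-\epsilon$ (or to $\theta_L-\epsilon$ when $f^*>\theta_L$) while offering a perfectly informative monitoring policy — and requiring that this deviation \emph{not} be profitable, using the worst-case (for the deviator) EPBE guaranteed by Theorem \ref{thm:existence}. In the sorting case ($\theta_L\ge 0$): undercutting and capturing only the high types yields $\approx\lambda n f^*$ relative comparison against the status-quo $f^*$ share $f^*$ of full enrollment; non-profitability of capturing \emph{all} students at fee $\theta_L-\epsilon$ requires $n\theta_L\le f^*$ (else the deviation, which yields $n\theta_L>f^*$ after accounting for market share, dominates), and the upper bound $f^*<\E\theta/(\lambda n)$ comes from the requirement that capturing only the high types (worth $\lambda f^*$ per capita pre-division, i.e.\ market-share-$1/n$ in equilibrium) not beat the equilibrium payoff, combined with the ceiling $\theta_H$. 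In the screening case ($\theta_L<0$): the deviation capturing all high types at a fee just below $\theta_H$ must not beat $f^*$'s equilibrium share, which after the arithmetic gives $f^*\le\max\{(\theta_H+(n-1)\theta_L)/n,0\}$; when $-(n-1)\theta_L\ge\theta_H$ this bound is $0$, recovering consistency with Definition \ref{def:fierce}(iii). I expect the main obstacle to be bookkeeping: carefully specifying the off-path EPBE after each candidate deviation (invoking the Case 1 / Case 2 construction in the proof of Theorem \ref{thm:existence}) so that the deviator's profit is correctly computed as a \emph{worst case}, and verifying that the semi-pooling profile together with the stated wage schedule and the $S_-^*, S^*, S_+^*$-style off-path beliefs actually constitutes an EPBE — i.e.\ that no downward or upward effort deviation within a school is profitable and that the D1 inequality \eqref{ineq:D1} is consistent with attributing low effort to low types and high effort to high types. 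Once those routine but lengthy verifications are in place, collecting the constraints yields precisely the two categories in the statement.
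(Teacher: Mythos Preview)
Your overall architecture matches the paper's unified proof closely: show any separating RPBE is Riley (the paper's Step~3), rule out full pooling via the small-fee/perfect-monitoring deviation (Step~4), and characterize the remaining semi-pooling equilibria, deriving the fee bounds from non-profitability of undercutting deviations (Steps~1--2 and~5). So the plan is sound.

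There is, however, a concrete error in your derivation of the screening fee bound. You propose a deviation ``capturing all high types at a fee just below $\theta_H$.'' That deviation does \emph{not} work: at fee $\theta_H-\epsilon$, a high type's net payoff from enrolling in the deviating school is at most $\theta_H-(\theta_H-\epsilon)-c(\theta_H,e)\leq\epsilon$, which cannot beat her equilibrium payoff once any effort is required. The paper's correct deviation (its Step~1) is to undercut the \emph{current} fee, $\hat f_i=f^*-\epsilon$, while offering perfect monitoring; this guarantees capturing all high types, yielding at least $\lambda(f^*-\epsilon)$. Non-profitability then forces $\pi^*\geq\lambda f^*$, i.e., $[\lambda+(1-\lambda)q_l]/n\geq\lambda$, hence $q_l\geq(n-1)\lambda/(1-\lambda)$. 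Combining this lower bound on low-type enrollment with the participation constraint $f^*\leq[\lambda\theta_H+(1-\lambda)q_l\theta_L]/[\lambda+(1-\lambda)q_l]$ (and $\theta_L<0$) is what delivers $f^*\leq\max\{[\theta_H+(n-1)\theta_L]/n,0\}$. Your sorting upper bound is also muddled: the paper obtains $f^*<\E\theta/(\lambda n)$ simply by chaining $\lambda f^*\leq\pi^*<\E\theta/n$, the second inequality using that full pooling is already ruled out so the equilibrium surplus is strictly below $\E\theta$. Finally, your argument that low types' payoff strictly exceeds $\max\{\theta_L,0\}$ (``otherwise effectively separating'') is too quick; the paper treats the $f^*>0$ and $f^*=0$ cases separately (Steps~2 and~5), in each case exhibiting an explicit profitable deviation when low types are held to their floor payoff.
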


\bigskip 
\subsection{Proof}
Consider any symmetric RPBE $\mathcal{E}^* = (\mathbf{p}^*, \psi^*,\mu^*, \omega^*)$ with $n\geq 2$ competing schools, where $p_i^*= p^* \equiv (f^*, M^*)$ for all $i\in I$. Let ${e}_h^*$ be the maximal equilibrium effort exerted with positive probability in $\mathcal{E}^*_{\mathbf{p}^*}$, the EPBE following $\mathbf{p}^*$.\footnote{\label{fnt:finite effort}The maximum exists since at most $3$ equilibrium effort levels are exerted with positive probability in a symmetric RPBE. Otherwise, there must be at least two effort levels, $0 \leq e_1 < e_2$, exerted with positive probability by both high- and low-type students, which is impossible since high types must strictly prefer $e_2$ when low-type students are indifferent between $e_1$ and $e_2$.} Note that, due to their cost advantage, $\theta_H$-students must exert ${e}_h^*$ with positive probability in $\mathcal{E}^*_{\mathbf{p}^*}$.

\paragraph{Step 1:}\emph{ If $\theta_L \geq 0$, then $f^* \in \{0\} \cup \big[n\theta_L, \min\{\theta_H,\E\theta/(\lambda n)\}\big)$. If $\theta_L < 0$, then $f^* \in [0, \max\{{[\theta_H + (n - 1) \theta_L]}/{n}, 0\}]$. Further Proposition \ref{prop:comp_Riley} holds, i.e., if competition is fierce (Definition \ref{def:fierce}), then $f^*=0$. } 
\newline Clearly, if $f^* > \theta_H$, no student enrolls as wages cannot exceed $\theta_H$, engendering a strict incentive for schools to deviate. Therefore, we only need to consider the case when $f^* \in [0, \theta_H]$.
Suppose first that  $f^* > 0$. 
Consider the subgame following  $\hat{\mathbf{p}}$ where $\hat p_j=p^*$ for all $j\neq i$ and school $i$ unilaterally deviates to $\hat p_i$ such that:
\begin{align}
    &\hat{f}_i  \in (0, f^*), \nonumber \\
    &\hat{M}_i(e) = e \text{ for every } e \geq 0. \label{eq:M hat}
\end{align}
Define a strictly increasing function $E:\R_{+}\to \R_{+}$ such that 
\[\forall e \in \Re_+: \; 0 < c(\theta_L,E(e))-c(\theta_L,e) < f^* - \hat{f}_i.\] 
We now show that $i$ can attract all high types with this unilateral deviation. Suppose, for contradiction, that in $\mathcal{E}^*_{\hat{\mathbf{p}}}$, a positive mass of high types does not enroll in $i$, i.e., $\psi^*_{\hat{\mathbf{p}}, \theta_H}(i) < 1$.
Denote by $\Hat{{e}}_h \geq 0$ the maximal effort exerted outside school $i$ under  $\mathcal{E}^*_{\hat{\mathbf{p}}}$.\footnote{The maximum exists since (i) there are at most $3(n-1)$ effort levels that are exerted with positive probability in the other $(n-1)$ schools under any policy profile (for the same reason as shown in footnote \ref{fnt:finite effort}); (ii) if no students enroll in schools other than $i$, then, by assumption, there exists a positive mass of high students who chooses the outside option $j = 0$, and thus $\Hat{{e}}_h = 0$.

}  Note that, if $\Hat{{e}}_h = 0$, then we must have 
\begin{align*} U^{\mathcal{E}^*_{\hat{\mathbf{p}}}} (\theta_H) - U^{\mathcal{E}^*_{\hat{\mathbf{p}}}} (\theta_L) = 0 < c(\theta_L, E(0)) - c(\theta_H, E(0)).
\end{align*}
Moreover, if $\Hat{{e}}_h > 0$, then due to their cost advantage, high types must select $(j,\hat{e}_h)$ with positive probability, $\psi^*_{\hat{\mathbf{p}},\theta_H}(j,\hat{e}_h)>0$, for some $j \in I\backslash\{i\}$. Thus
\begin{align*}   U^{\mathcal{E}^*_{\hat{\mathbf{p}}}} (\theta_H) - U^{\mathcal{E}^*_{\hat{\mathbf{p}}}} (\theta_L) \leq  c(\theta_L, \hat{e}_h) - c(\theta_H, \hat{e}_h) < c(\theta_L, E(\hat{e}_h)) - c(\theta_H, E(\hat{e}_h)),
\end{align*}
where the first inequality follows from the incentive compatibility constraint of low types, and the second one from $c(\theta, e)$ being strictly submodular in $(\theta, e)$. As a result:

\begin{enumerate}[(i)]\itemsep0pt
\item If $\psi^*_{\hat{\mathbf{p}}, \theta_L}(i, E(\hat{e}_h)) = 0$, then either Bayes' updating (if $\psi^*_{\hat{\mathbf{p}}, \theta_H}(i, E(\hat{e}_h)) > 0$) or our refinement \eqref{ineq:D1}  (if $\psi^*_{\hat{\mathbf{p}}, \theta_H}(i, E(\hat{e}_h)) = 0$) 
 implies $\mu^*_{\hat{\mathbf{p}}}(\theta_H | i, E(\hat{e}_h)) = 1$, and thus $\omega^*_{\hat{\mathbf{p}}}(i, E(\hat{e}_h)) = \theta_H$.

 However, this contradicts the assumption that $\psi^*_{\hat{\mathbf{p}}, \theta_H}(i) <1$ as high types would strictly prefer $(i, E(\hat{e}_h))$ over $(j,\hat{e}_h)$ where $j \in \{0\} \cup I \backslash \{i\}$:
\begin{align}\label{ineq:high_compare_1}
    \theta_H - \hat f_i - c(\theta_H, E(\hat{e}_h)) > \theta_H - f^* - c(\theta_H, \hat{e}_h) \geq U^{\mathcal{E}^*_{\hat{\mathbf{p}}}} (\theta_H).
\end{align}

\item If $\psi^*_{\hat{\mathbf{p}}, \theta_L}(i, E(\hat{e}_h)) >0$, then high types must strictly prefer $(i, E(\hat{e}_h))$ over $(j,\hat{e}_h)$ 
(since $E(\hat{e}_h) > \hat{e}_h$ and high types have a cost advantage in exerting effort), a contradiction. 
\end{enumerate}
Therefore, following $\hat{\mathbf{p}}$, all high types must enroll in $i$, so $\psi^*_{\hat{\mathbf{p}}, \theta_H}(i) = 1$, resulting in school $i$ earning at least $\lambda \hat{f}_i$. Since all high types enroll in school $i$, students enrolled in any $j \neq i$ would be identified as low-type students and receive a payoff of at most $\max\{0, \theta_L - f^*\}$ under $\mathcal{E}^*_{\hat{\mathbf{p}}}$.

We are now ready to use this characterization to show the following claims.

\begin{claim}
    $f^* < \theta_H.$
\end{claim}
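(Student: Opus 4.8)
The plan is to prove that $f^* < \theta_H$ by contradiction. Suppose a symmetric RPBE has $f^* = \theta_H$. The key observation is that if every school charges exactly $\theta_H$, then the most any enrolled student can hope to earn (net of fees, before effort costs) is $\theta_H - \theta_H = 0$, since wages are capped at $\theta_H$. In particular the equilibrium payoff of low types satisfies $U^{\mathcal{E}^*_{\mathbf{p}^*}}(\theta_L) \leq 0 = \max\{0, \theta_L - f^*\}$ when $\theta_L \le \theta_H$, so low types weakly prefer their outside option (or their best zero-effort option) to enrolling; and high types similarly can earn at most $0 - c(\theta_H, e)$ for any positive effort $e$, hence their equilibrium payoff is also at most $0$, with equality only if they exert zero effort.

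First I would invoke the characterization established just above in Step 1: a unilateral deviation by school $i$ to a fee $\hat f_i \in (0, f^*) = (0,\theta_H)$ paired with the perfectly informative monitoring policy $\hat M_i(e) = e$ attracts \emph{all} high types in the EPBE $\mathcal{E}^*_{\hat{\mathbf{p}}}$ following the deviation. This was shown for arbitrary $f^* > 0$, so it applies with $f^* = \theta_H$. Consequently school $i$'s profit after the deviation is at least $\lambda \hat f_i > 0$, while on path school $i$ earns $\pi^*_i = f^* \cdot [\lambda \psi^*_{\mathbf{p}^*,\theta_H}(i) + (1-\lambda)\psi^*_{\mathbf{p}^*,\theta_L}(i)]$. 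To complete the contradiction I need to check that $\pi^*_i$ is small enough to be beaten by $\lambda \hat f_i$ for some admissible choice of $\hat f_i$, i.e. that $\pi^*_i < \lambda \theta_H$; then choosing $\hat f_i$ close enough to $\theta_H$ gives $\lambda \hat f_i > \pi^*_i$.

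So the core remaining step is: \emph{when $f^* = \theta_H$, no school's equilibrium enrollment can consist of a full unit mass} — more precisely, the on-path profit $\pi^*_i$ must be strictly below $\lambda\theta_H$. If $\pi^*_i = \lambda\theta_H$ were to hold, school $i$ would need to attract all high types and no low types (since low types enrolling and being paid a wage $\le \theta_H$ at fee $\theta_H$ would make the firm lose money, or equivalently drag the pooled wage below $\theta_H$ so that some student's payoff is negative, contradicting optimality). But if all high types enroll in school $i$ at fee $\theta_H$ and exert zero effort (which they must, by the payoff-cap argument above — any positive effort gives strictly negative payoff, worse than the zero outside option), then they earn exactly $\theta_H - \theta_H - 0 = 0$, tying their outside option. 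This is consistent, but then consider: some other school $j \neq i$ earns $\pi^*_j = 0$ (it attracts nobody, or only loss-generating low types which can't happen in equilibrium either). School $j$ can profitably deviate to $\hat p_j$ with a small positive fee and informative monitoring, attracting all high types by exactly the Step-1 argument, earning $\lambda \hat f_j > 0 = \pi^*_j$ — a contradiction. Hence no efficient-extraction split is an equilibrium, and in all cases one finds a school with $\pi^*_i < \lambda\theta_H$, against which the deviation to $(\hat f_i, \hat M_i)$ with $\hat f_i$ near $\theta_H$ is strictly profitable.

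The main obstacle I anticipate is handling the bookkeeping across the $n$ schools cleanly: I must argue that \emph{for every} symmetric profile with $f^* = \theta_H$ there exists \emph{some} school whose profit falls strictly short of $\lambda\theta_H$, ruling out the degenerate possibility that the unit mass of high types is split so that each school gets a positive share but the total is still bounded. Since the total mass of high types is $\lambda$ and total profit across all schools is at most $\lambda\theta_H$ (bounded by total downstream surplus, which cannot exceed $\lambda\theta_H$ when only high types are productively employed — and low types cannot be profitably employed at wage $\le \theta_H$ with fee $\theta_H$), if $n \ge 2$ then at least one school earns at most $\lambda\theta_H/n < \lambda\theta_H$; that school has the profitable deviation described, contradiction. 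I would present this averaging argument as the crux, then conclude $f^* < \theta_H$.
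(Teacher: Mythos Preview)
Your proof is correct and follows the same approach as the paper: when $f^*=\theta_H$ only high types can enroll (else Bayes' rule forces a wage below $\theta_H$ and a strictly negative net payoff, contradicting optimality), so by symmetry each school earns at most $\lambda\theta_H/n$, and the Step-1 deviation to $(\hat f_i,\hat M_i)$ with $\hat f_i=\theta_H-\epsilon$ yields at least $\lambda(\theta_H-\epsilon)>\lambda\theta_H/n$. Your middle paragraph considering the asymmetric case where one school captures all high types is unnecessary in a \emph{symmetric} RPBE --- you can and should go straight to the averaging/symmetry bound, which is exactly what the paper does.
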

\begin{proof}
Suppose $f^* = \theta_H$. Then, only high types may enroll in schools under $\mathcal{E}^*_{\mathbf{p}^*}$; otherwise, some students' wages would fall below $\theta_H$ by Bayes' rule, making them strictly better off not enrolling, a contradiction. Thus, the maximum profit that school $i$ could achieve under $\mathcal{E}^*_{\mathbf{p}^*}$ is ${\lambda \theta_H}/{n}$.
However, by unilaterally deviating to $(\hat{f}_i, \hat{M}_i)$ with $\hat{f}_i = \theta_H - \epsilon$ for sufficiently small $\epsilon > 0$, school $i$ earns at least $\lambda (\theta_H - \epsilon) > {\lambda \theta_H}/{n}$, a contradiction.
\end{proof}
\begin{claim}
    If ${f^*>0}$ then ${f^* \geq n \theta_L}$.
\end{claim}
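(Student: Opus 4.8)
The plan is that the claim is trivial unless $\theta_L>0$: if $\theta_L\le 0$ then $n\theta_L\le 0<f^*$. So assume $\theta_L>0$ and $f^*>0$, and build on the undercutting deviation just analyzed. Recall that for every $\hat f_i\in(0,f^*)$, the unilateral deviation of school $i$ to $\hat p_i=(\hat f_i,\hat M_i)$ with perfectly informative monitoring $\hat M_i(e)=e$ induces, in the prescribed EPBE $\mathcal{E}^*_{\hat{\mathbf{p}}}$, full enrollment of high types in $i$ and leaves every student enrolled at some $j\ne i$ with payoff at most $\max\{0,\theta_L-f^*\}$.

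The first step is to upgrade this to full enrollment of \emph{all} students in $i$, provided $\hat f_i<\min\{\theta_L,f^*\}$. Here I would use only one extra fact about $\mathcal{E}^*_{\hat{\mathbf{p}}}$: in any PBE the firms' posterior at the signal generated by exerting zero effort at school $i$ is a probability measure on $\Theta$, so the associated wage is at least $\theta_L$. Hence a low type can secure at least $\theta_L-\hat f_i$ by enrolling in $i$ and exerting no effort. Since $\hat f_i<\theta_L$ makes $\theta_L-\hat f_i>0$, and $\hat f_i<f^*$ makes $\theta_L-\hat f_i>\theta_L-f^*$, we obtain $\theta_L-\hat f_i>\max\{0,\theta_L-f^*\}$, so enrolling in $i$ strictly dominates the outside option and every school $j\ne i$ for low types. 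Together with the known full enrollment of high types, all students enroll in $i$, so this deviation earns school $i$ a profit of exactly $\hat f_i$.

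The second step is bookkeeping. Since $\mathcal{E}^*$ is an RPBE the deviation is unprofitable, so $\pi_i^*\ge \hat f_i$ for every $\hat f_i\in(0,\min\{\theta_L,f^*\})$, hence $\pi_i^*\ge\min\{\theta_L,f^*\}$. On the other hand, symmetry of the RPBE implies that each school enrolls the same mass $\beta^*$ of students with $n\beta^*\le 1$, so $\pi_i^*=f^*\beta^*\le f^*/n$. Therefore $\min\{\theta_L,f^*\}\le f^*/n<f^*$ (using $n\ge2$ and $f^*>0$); this forces $\min\{\theta_L,f^*\}=\theta_L$, and then $\theta_L\le f^*/n$, i.e.\ $f^*\ge n\theta_L$, which is the claim.

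The step I expect to be the crux is upgrading ``all high types enroll in $i$'' to ``all students enroll in $i$'': the continuation EPBE after the deviation is the one specified by $\mathcal{E}^*$, not one we are free to design, so the argument must go through using only the universal PBE lower bound $\theta_L$ on the zero-effort wage at school $i$ together with the previously established $\max\{0,\theta_L-f^*\}$ ceiling on payoffs outside $i$. Everything after that is elementary.
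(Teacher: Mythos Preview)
Your proposal is correct and follows essentially the same approach as the paper's own proof: both use the same undercutting deviation $\hat p_i=(\hat f_i,\hat M_i)$ with $\hat f_i$ just below $\min\{\theta_L,f^*\}$, observe that enrolling in $i$ guarantees any student at least $\theta_L-\hat f_i>\max\{0,\theta_L-f^*\}$ so that \emph{all} students enroll in $i$, and then compare the resulting profit $\hat f_i$ against the symmetric-RPBE bound $\pi_i^*\le f^*/n$. The paper phrases this as a contradiction from $f^*<n\theta_L$ while you run the same inequalities directly, but the content is identical.
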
 
\begin{proof}
This result trivially holds in the screening case ($\theta_L < 0$). In the sorting case ($\theta_L > 0$), suppose, for contradiction, that $f^* \in (0, n\theta_L)$. Then, it is strictly profitable for school $i$ to unilaterally deviate to $\hat p_i$, with $\hat f_i=\min\{\theta_L,f^*\}- \epsilon$ and $\epsilon>0$ sufficiently small, a contradiction. Indeed, under $\mathcal{E}^*_{\hat{\mathbf{p}}}$, 
enrolling in $i$ gives students a payoff of at least $\theta_L - \hat f_i > \max\{0, \theta_L - f^*\}$. Thus, low types also strictly prefer enrolling in school $i$. Therefore, by unilaterally deviating to $(\hat f_i, \hat M_i)$, school $i$ attracts all students and earns $\min\{\theta_L,f^*\}-\epsilon$,  which is strictly more than ${f^*}/{n}$, the maximal profit  school $i$ could obtain under $\mathcal{E}^*_{{\mathbf{p}^*}}$.
\end{proof}

\begin{claim}
    If $n>{1}/{\lambda}$, then $f^*=0$.
\end{claim}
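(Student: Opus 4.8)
The plan is to argue by contradiction, reusing the unilateral deviation already analyzed at the start of Step 1. Suppose $f^* > 0$; by the first two claims we may take $f^* \in [n\theta_L, \theta_H)$. I would fix an arbitrary school $i$ and consider the deviation $\hat{\mathbf{p}}$ with $\hat p_j = p^*$ for all $j \neq i$ and $\hat p_i = (\hat f_i, \hat M_i)$, where $\hat f_i \in (0, f^*)$ and $\hat M_i(e) = e$ for all $e \geq 0$, exactly as in \eqref{eq:M hat}.

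First I would bound school $i$'s equilibrium profit using symmetry. Since $\mathcal{E}^*$ is a symmetric RPBE, $\psi^*_{\mathbf{p}^*, \theta}(i) = \psi^*_{\mathbf{p}^*, \theta}(j)$ for every $i,j \in I$ and $\theta \in \Theta$; as these $n$ probabilities sum to at most one, each is at most $1/n$. Hence
\[
\pi^*_i = f^*\left[\lambda \psi^*_{\mathbf{p}^*, \theta_H}(i) + (1-\lambda)\psi^*_{\mathbf{p}^*, \theta_L}(i)\right] \leq \frac{f^*}{n}.
\]
Second, I would invoke the conclusion already established in Step 1: for any $\hat f_i \in (0, f^*)$, in the EPBE $\mathcal{E}^*_{\hat{\mathbf{p}}}$ prescribed by $\mathcal{E}^*$ in the subgame following $\hat{\mathbf{p}}$, all high types enroll in school $i$, so $\psi^*_{\hat{\mathbf{p}}, \theta_H}(i) = 1$ and school $i$'s post-deviation profit is at least $\lambda \hat f_i$.

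Finally, I would combine the two bounds. For $\mathcal{E}^*$ to be an equilibrium we would need $\lambda \hat f_i \leq \pi^*_i \leq f^*/n$ for all $\hat f_i \in (0, f^*)$, hence $\lambda f^* \leq f^*/n$, i.e. $n \leq 1/\lambda$, contradicting $n > 1/\lambda$. Concretely, $n > 1/\lambda$ gives $\frac{f^*}{\lambda n} < f^*$, so choosing $\hat f_i \in \bigl(\frac{f^*}{\lambda n}, f^*\bigr)$ yields $\lambda \hat f_i > f^*/n \geq \pi^*_i$: a strict profitable deviation. Therefore $f^* = 0$.

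There is essentially no difficult step remaining: all the substantive work — that undercutting the common fee while offering a perfectly informative monitoring policy necessarily attracts every high type in the continuation EPBE — has already been carried out in Step 1. The only points requiring care are the elementary profit accounting (the $f^*/n$ upper bound from symmetry, the $\lambda\hat f_i$ lower bound from the deviation) and the inequality $\lambda n > 1$ that makes the deviation strictly profitable.
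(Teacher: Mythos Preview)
Your proposal is correct and follows essentially the same approach as the paper: bound the symmetric equilibrium profit by $f^*/n$, invoke the Step~1 deviation to $(\hat f_i,\hat M_i)$ with $\hat f_i$ just below $f^*$ to guarantee at least $\lambda\hat f_i$, and use $n>1/\lambda$ to obtain a strictly profitable deviation. The paper's version is terser (it simply takes $\hat f_i=f^*-\epsilon$), but the logic is identical.
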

\begin{proof}
Suppose $f^*>0$. If $n>{1}/{\lambda}$, the maximum profit $i$ could achieve under $\mathcal{E}^*_{\mathbf{p}^*}$ is ${f^*}/{n}<\lambda f^*$. However, by unilaterally deviating to $(\hat f_i, \hat M_i)$, with $\hat f_i = f^*-\epsilon$ and $\epsilon>0$ arbitrarily small, 
school $i$ attracts all high types, earning at least $\lambda \hat f_i = \lambda (f^* - \epsilon)>{f^*}/{n}$, a contradiction. 
\end{proof}

\begin{claim}
    If $n \theta_L > {\E \theta}$, then $f^* = 0$.
\end{claim}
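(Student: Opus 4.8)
The plan is to argue by contradiction: assume $f^* > 0$ and produce a profitable unilateral deviation for one school, contradicting that $\mathcal{E}^*$ is an RPBE. First I would record two preliminaries. (a) The hypothesis $n\theta_L > \E\theta = \lambda\theta_H + (1-\lambda)\theta_L$ rearranges to $(n-1+\lambda)\theta_L > \lambda\theta_H > 0$; since $n \geq 2$ gives $n-1+\lambda > 0$, this forces $\theta_L > 0$, so we are in the sorting case (and $\E\theta$ is the maximal social surplus). (b) By the preceding claim, $f^* > 0$ implies $f^* \geq n\theta_L$, and combining with $n \geq 2$ and $\theta_L > 0$ yields $f^* > \theta_L$. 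This strict inequality is exactly what lets an "undercut-to-below-$\theta_L$" deviation attract low types.

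Next I would reuse the deviation already analyzed at the start of Step 1: school $i$ switches to $\hat{M}_i(e) = e$ and charges $\hat{f}_i \in (0, f^*)$, against $\hat{p}_j = p^*$ for $j \neq i$. That analysis shows that in the EPBE $\mathcal{E}^*_{\hat{\mathbf{p}}}$ following the deviation, all high types enroll in $i$, and hence every student enrolled in a rival school $j \neq i$ is identified as a low type and earns at most $\max\{0, \theta_L - f^*\}$, which equals $0$ here since $f^* > \theta_L$. The new ingredient is to take $\hat{f}_i = \theta_L - \epsilon$ for small $\epsilon \in (0,\theta_L)$ (legitimate because $\theta_L - \epsilon \in (0, f^*)$). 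A low type who enrolls in $i$ can always exert zero effort; even if labelled a low type, the sequentially rational wage is at least $\theta_L$ (firms hire in the sorting case), so their payoff is at least $\theta_L - \hat{f}_i = \epsilon > 0 \geq$ anything available outside $i$. Thus all low types strictly prefer to enroll in $i$ as well, school $i$ captures the entire unit mass, and earns $\hat{f}_i = \theta_L - \epsilon$.

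Finally I would close with the aggregate-surplus bound. Because $\mathcal{E}^*$ is symmetric, the $n$ schools collect the same fee from the same mass of each type, so each earns exactly a $1/n$ share of the schools' total profit; and total school profit is bounded above by the maximal social surplus, which is $\E\theta$ in the sorting case (productivity in the job allocation is at most $\E\theta$, while effort costs and students' payoffs are nonnegative). Hence school $i$'s equilibrium profit is at most $\E\theta/n$, which is strictly below $\theta_L - \epsilon$ once $\epsilon$ is small, since $\E\theta < n\theta_L$. This contradicts optimality of $p^*$ for school $i$, so $f^* = 0$.

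The one mildly delicate point — and where I would be most careful in writing it up — is to justify, uniformly across all EPBEs of the post-deviation subgame, that (i) a low type enrolling at the deviating school is guaranteed a payoff of at least $\theta_L - \hat{f}_i$ (no EPBE can punish such a student with a wage below $\theta_L$), and (ii) no student who stays out of school $i$ can do better than $0$. Both follow from sequential rationality of wages together with the "all high types enroll at $i$" conclusion borrowed from the Step 1 preamble, but they should be spelled out explicitly rather than asserted.
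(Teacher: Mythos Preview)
Your proposal is correct and follows essentially the same approach as the paper: observe that the hypothesis forces the sorting case, invoke the preceding claim to get $f^* \geq n\theta_L > \theta_L$, use the Step~1 deviation with $\hat f_i = \theta_L - \epsilon$ and perfectly informative monitoring to attract all students (high types by the Step~1 argument, low types because enrolling guarantees at least $\theta_L - \hat f_i = \epsilon > 0$), and compare the resulting profit $\theta_L - \epsilon$ to the symmetric-RPBE upper bound $\E\theta/n < \theta_L$. Your write-up is somewhat more explicit than the paper's, particularly in spelling out why low types are guaranteed $\epsilon$ at the deviating school and at most $0$ elsewhere, but the argument is the same.
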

\begin{proof}
    Suppose $n \theta_L > {\E \theta}$, which can only hold if $\theta_L> 0$ (i.e., the sorting case). As shown in Claim 2, if $f^* > 0$, then $f^* \geq n \theta_L$ and school $i$ can secure a profit arbitrarily close to $\theta_L$ by unilaterally deviating to $(\hat f_i, \hat M_i)$ with $\hat{f}_i = \theta_L - \epsilon$ for $\epsilon > 0$ arbitrarily small. Since ${\E \theta}/{n}$ is the maximum profit a school can earn in a symmetric RPBE when $\theta_L> 0$, this deviation is profitable if $\theta_L > {\E \theta}/{n}$.\footnote{Recall that,  in the sorting case, $\E \theta$ is the maximum social surplus.} Thus, $n \theta_L > {\E \theta}$ implies $f^* = 0$.
\end{proof}

\begin{claim}
If $\theta_L < 0$, $f^* \leq \max\{{[\theta_H + (n - 1) \theta_L]}/{n}, 0\}$.
\end{claim}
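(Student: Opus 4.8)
The plan is to handle $f^*=0$ trivially (then $f^*\le\max\{[\theta_H+(n-1)\theta_L]/n,0\}$ holds) and, assuming $f^*>0$, to combine a lower bound on the mass of students school $i$ enrolls with an upper bound on the fee those enrollees can finance. For the lower bound I would re-use the deviation already analyzed in Step~1: when a single school $i$ switches to a perfectly informative policy $\hat M_i(e)=e$ together with a fee $\hat f_i\in(0,f^*)$, all high types enroll in $i$ (we showed $\psi^*_{\hat{\mathbf{p}},\theta_H}(i)=1$), so school $i$'s profit along this deviation is at least $\lambda\hat f_i$. Since in a symmetric RPBE school $i$ earns $\pi_i^*=f^*\,s_i$ with $s_i:=\lambda\psi^*_{\mathbf{p}^*,\theta_H}(i)+(1-\lambda)\psi^*_{\mathbf{p}^*,\theta_L}(i)$ the mass of students it enrolls, optimality rules out the deviation, giving $\lambda\hat f_i\le f^*\,s_i$ for all $\hat f_i<f^*$ and hence, letting $\hat f_i\uparrow f^*$ and dividing by $f^*>0$,
\[
s_i\ \ge\ \lambda .
\]
Symmetry supplies the other needed fact: $\sum_{j\in I}\psi^*_{\mathbf{p}^*,\theta_H}(j)\le1$ with all summands equal forces $\psi^*_{\mathbf{p}^*,\theta_H}(i)\le 1/n$, i.e.\ the mass of high types enrolled in $i$ is $\lambda\psi^*_{\mathbf{p}^*,\theta_H}(i)\le\lambda/n$.

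For the upper bound I would invoke individual rationality together with competitive wage-setting. Each student who enrolls in $i$ earns a nonnegative payoff (the outside option is $0$), so her realized wage is at least $f^*+c(\theta,e)\ge f^*$; in particular no on-path message generated by an enrollee of $i$ draws the response $\emptyset$, and wages equal posterior means. Summing wages over the enrollees of school $i$ and applying the tower property, the total wage bill at $i$ equals $\theta_H\,\lambda\psi^*_{\mathbf{p}^*,\theta_H}(i)+\theta_L\,(1-\lambda)\psi^*_{\mathbf{p}^*,\theta_L}(i)$, while the per-enrollee bound $f^*$ makes it at least $f^*\,s_i$. Hence
\[
f^*\,s_i\ \le\ \theta_H\,\lambda\psi^*_{\mathbf{p}^*,\theta_H}(i)+\theta_L\,(1-\lambda)\psi^*_{\mathbf{p}^*,\theta_L}(i).
\]
Dividing by $s_i>0$ and rewriting the right-hand side as $\theta_L+(\theta_H-\theta_L)\,\lambda\psi^*_{\mathbf{p}^*,\theta_H}(i)/s_i$, the inequalities $\lambda\psi^*_{\mathbf{p}^*,\theta_H}(i)\le\lambda/n$, $s_i\ge\lambda$ and $\theta_H-\theta_L>0$ give $f^*\le\theta_L+(\theta_H-\theta_L)/n=[\theta_H+(n-1)\theta_L]/n$. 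If this quantity is negative it contradicts $f^*>0$, so no symmetric RPBE with a positive fee exists and $f^*=0$; in all cases $f^*\le\max\{[\theta_H+(n-1)\theta_L]/n,0\}$. Note that this also recovers fierce-competition condition (iii), $-(n-1)\theta_L\ge\theta_H$, as precisely the regime where the bound degenerates to $0$.

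I do not expect a genuine obstacle; the work is bookkeeping. The two points to keep honest are (i) that the wage-bill identity is legitimate---this rests on the observation that an enrollee generating an unhired message on path would be strictly worse off than taking the outside option, impossible once $f^*>0$---and (ii) that $s_i\ge\lambda$ really does follow from the perfectly informative deviation, which only uses that this deviation captures every high type irrespective of the other schools' fees. Everything else is the elementary fact that a positive fee must be financed by every enrollee's belief-consistent wage, aggregated across the school and combined with the symmetry cap $\psi^*_{\mathbf{p}^*,\theta_H}(i)\le1/n$.
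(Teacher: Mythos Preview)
Your proof is correct and follows essentially the same approach as the paper: both combine (i) the Step~1 deviation showing a school undercutting with a perfectly informative policy captures all high types, hence $\pi_i^*\ge\lambda f^*$, with (ii) the individual-rationality/wage-consistency bound that fees paid cannot exceed the expected type of enrollees. The paper works with the aggregate low-type enrollment $q_l$ and derives $q_l\ge(n-1)\lambda/(1-\lambda)$, whereas you work at the per-school level with $s_i\ge\lambda$ and the symmetry cap $\psi^*_{\mathbf{p}^*,\theta_H}(i)\le 1/n$; these are equivalent repackagings of the same two inequalities, and your version arguably handles the high-type enrollment share a bit more transparently.
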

\begin{proof} 
Let $q_l \in[0,1]$ be the proportion of low-type students who enroll in schools under $\mathcal{E}^*$. 
Note that school $i \in I$ would profitably deviate from $p^*$ to $\hat{p}_i = (\hat f_i, \hat M_i)$, with $\hat f_i = f^*-\epsilon$ and $\epsilon>0$ arbitrarily small,  if its profits were $\pi^* < \lambda f^* $. Since $\pi^* \leq [\lambda +(1-\lambda) q_l]f^*/n$, to avoid profitable deviations,  we must have $[\lambda +(1-\lambda) q_l]f^*/n\geq \lambda f^*$, i.e., $ q_l \geq  {(n-1)\lambda}/{(1-\lambda)}$.
Moreover, since $f^*$ cannot exceed $\max\{ \frac{\lambda \theta_H+(1-\lambda)q_l\theta_L}{\lambda +(1-\lambda)q_l}, 0\}$, we have 
   $ \pi^* \leq \max\left\{ \frac{\lambda \theta_H+(1-\lambda)q_l\theta_L}{\lambda +(1-\lambda)q_l} 
  , 0 \right\}\frac{\lambda +(1-\lambda)q_l}{n} \leq  \max\left\{\frac{\lambda[\theta_H+(n-1)\theta_L]}{n}, 0 \right\},$
 where the second inequality follows from $\theta_L<0$ and $ q_l \geq  {(n-1)\lambda}/{(1-\lambda)}$. 
 
 Thus, to prevent profitable deviations, we need $\max\left\{{\lambda[\theta_H+(n-1)\theta_L]}/{n}, 0 \right\}\geq \lambda f^*$, i.e., $f^* \leq \max\{{[\theta_H + (n - 1)\theta_L]}/{n}, 0 \}$. 
 \end{proof}

\begin{remark}
    Note that the previous claim implies that if  $ -(n - 1)\theta_L \geq \theta_H$, then $f^* = 0$.    
\end{remark}
 
\begin{claim}
If $\theta_L \geq 0$, $f^* < \E\theta/(\lambda n)$.
\end{claim}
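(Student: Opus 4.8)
The plan is to combine the unilateral-deviation machinery already developed at the start of Step 1 with the welfare ceiling for the sorting case and Proposition \ref{prop:competition1}. First I would dispose of the trivial case $f^* = 0$: since $\E\theta = \lambda\theta_H + (1-\lambda)\theta_L > 0$, the inequality $f^* < \E\theta/(\lambda n)$ holds immediately. So assume $f^* > 0$.

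Next I would lower-bound the symmetric equilibrium profit $\pi^*$. By the characterization established at the beginning of Step 1, a school $i$ can unilaterally deviate to $\hat p_i = (\hat f_i, \hat M_i)$ with $\hat f_i \in (0, f^*)$ and the perfectly informative monitoring policy $\hat M_i(e) = e$ as in \eqref{eq:M hat}, and in the EPBE $\mathcal{E}^*_{\hat{\mathbf{p}}}$ induced by this deviation \emph{all} high types enroll in $i$; hence school $i$'s profit after the deviation is at least $\lambda \hat f_i$. Because $\mathcal{E}^*$ is an RPBE, this deviation cannot be strictly profitable, so $\pi^* \geq \lambda \hat f_i$ for every $\hat f_i \in (0,f^*)$, and letting $\hat f_i \uparrow f^*$ gives $\pi^* \geq \lambda f^*$. (Note the deviating school is only guaranteed to recapture the high types, which is why the bound is $\lambda f^*$ rather than $f^*$.) On the other side, total school profits never exceed the maximum social surplus, which in the sorting case is $\E\theta$ (full employment, zero effort); by symmetry $n\pi^* \leq \E\theta$, i.e.\ $\pi^* \leq \E\theta/n$. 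Combining the two bounds yields $\lambda f^* \leq \E\theta/n$, hence $f^* \leq \E\theta/(\lambda n)$.

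To upgrade this to a strict inequality I would rule out $f^* = \E\theta/(\lambda n)$. If it held, the two bounds would pinch: $\pi^* = \lambda f^* = \E\theta/n$, so $\sum_{i\in I}\pi_i^* = n\pi^* = \E\theta$. Since students' equilibrium payoffs are non-negative, the welfare of $\mathcal{E}^*$, namely $\lambda U^{\mathcal{E}^*_{\mathbf{p}^*}}(\theta_H) + (1-\lambda)U^{\mathcal{E}^*_{\mathbf{p}^*}}(\theta_L) + \sum_{i\in I}\pi_i^*$, would be at least $\E\theta$; as it can never exceed $\E\theta$ in the sorting case, it would equal $\E\theta$, so $\mathcal{E}^*$ would attain the maximum social welfare --- contradicting Proposition \ref{prop:competition1}. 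Therefore $f^* < \E\theta/(\lambda n)$.

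The only genuinely nontrivial input here --- that the deviation to a perfectly informative policy attracts all high types no matter what the other schools do --- has already been proved in Step 1, so the remaining work is just assembling the inequalities; the single point that requires care is keeping track that the guarantee is on high-type enrollment alone, which is exactly what produces the factor $\lambda$ in $\pi^* \geq \lambda f^*$.
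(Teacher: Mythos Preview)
Your proof is correct and follows essentially the same approach as the paper: bound $\pi^*$ below by $\lambda f^*$ via the deviation to a perfectly informative policy, bound it above by $\E\theta/n$ via the welfare ceiling, and then rule out equality. The only difference is how the strictness is obtained: the paper asserts $\pi^* < \E\theta/n$ by forward-referencing Step~4 (no fully pooling RPBE exists), whereas you invoke Proposition~\ref{prop:competition1} (no efficient RPBE under competition), which has the minor advantage of citing an already-proved result rather than a later step.
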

\begin{proof}
Note that $\pi^*< \E\theta/n$ if $\theta_L \geq 0$: indeed $\E \theta$ is the maximum social surplus and can only be obtained in a fully pooling RPBE (which we will prove cannot exist in Step 4). Moreover, the unilateral deviation to $\hat{p}_i = (\hat f_i, \hat M_i)$ is profitable, with $\hat f_i = f^*-\epsilon$ and $\epsilon>0$ arbitrarily small, unless $\pi^*\geq \lambda f^*$.    
\end{proof}

\noindent These claims taken together conclude Step 1.

\paragraph{Step 2:}\emph{ If a symmetric RPBE $\mathcal{E}^*$ involves $f^* > 0$ (as in  [Proposition \ref{prop:competition2}]), then its outcome is semi-pooling, with full enrollment and high types exerting the same effort as low-type students, $e_l\geq 0$, with probability $q_h \in (0,1)$ and a higher effort $e_h>e_l$ with probability $1 - q_h$.
} 
\medskip

\noindent
Suppose a symmetric RPBE $\mathcal{E}^*$ involves $f^* > 0$. Then, as shown above, $f^* > \theta_L$, and students strictly prefer not enrolling over paying $f^*$ and being identified as low types. Thus, there is no $(i, e) \in I \times \mathbb{R}_+$ such that $\psi^*_{\mathbf{p}^*, \theta_L}(i, e) > 0 = \psi^*_{\mathbf{p}^*, \theta_H}(i, e)$. 
This leaves only three possible classes of symmetric RPBE:
\begin{enumerate}[(i)]\itemsep0pt
    \item Full separation: Low types do not enroll, high types enroll and exert $e_h\geq 0$;
    \item Pooling: All enrolled students exert the same effort $e_l\geq 0$;
    \item Semi-pooling: Among enrolled students, low types exert $e_l\geq 0$, high types exert $e_l\geq 0$ with probability $q_h \in (0,1)$ and $e_h>e_l$ with probability $1-q_h$.
\end{enumerate}

Note that $\mathcal{E}^*$ cannot belong to the first class. 
Indeed, as shown, by deviating to $\hat{p}_i$ with $\hat{f}_i = f^* - \epsilon$ for any small $\epsilon > 0$, school $i$ can secure a profit of at least $\lambda (f^* - \epsilon)$, which strictly exceeds its maximal RPBE profit of $\lambda f^* / n$.
 
$\mathcal{E}^*$ cannot belong to the second class. Indeed, assume that, in $\mathcal{E}^*$, low types enroll and exert $e_l \geq 0$ with probability $q_l \in (0,1]$ and do not enroll otherwise; i.e. $\psi^*_{\mathbf{p}^*, \theta_L}(j=0,e=0)=1-q_l$ and $\psi^*_{\mathbf{p}^*, \theta_L}(i,e_l)=q_l/n$ for every $i\in I$. 
For contradiction, suppose all (enrolled) high types also exert $e_l$ in $\mathcal{E}^*_{\mathbf{p}^*}$, i.e., $\psi^*_{\mathbf{p}^*, \theta_H}(j=0,0)=1-b$ for $b\in(0,1]$ and $\psi^*_{\mathbf{p}^*, \theta_H}(i,e_l)=b/n$ for every $i\in I$. Thus, \[\frac{\frac{\lambda}{n}\theta_H+(1-\lambda)\frac{q_l}{n}\theta_L}{{\frac{\lambda}{n}+(1-\lambda)\frac{q_l}{n}}} \geq \omega_{\mathbf{p}^*}^*(i,M^*(e_l))\geq f^*+c(\theta_L,e_l).\]
Consider school $i$'s unilateral deviation to $\Tilde{{p}_i}$ such that
\begin{align}
    &\tilde{f}_i =
f^* -[c(\theta_L,e_l +\epsilon)-c(\theta_L,e_l)] -\gamma_{\epsilon}, \nonumber\\
    &\tilde{M}_i(e) =
    \begin{cases}
     m_A \quad &\text{ if } e\geq e_l +\epsilon,\\
    e \quad &\text{ if } e< e_l +\epsilon,
    \end{cases} \label{eq:M tilde}
\end{align}
where $m_A \in \mathbb{M}$ and $\epsilon, \gamma_{\epsilon} >0$ are arbitrarily small. By similar arguments as in Step 1, school $i$ attracts all high types in the resulting EPBE $\mathcal{E}^*_{\Tilde{\mathbf{p}}}$. Additionally, more than a fraction ${q_l}$ low types must enroll in $i$ under $\mathcal{E}^*_{\tilde{\mathbf{p}}}$. Indeed, if a fraction $\alpha\leq q_l$ of low types enrolls in $i$ under $\mathcal{E}^*_{\tilde{\mathbf{p}}}$, low types could earn, by mimicking the high types, a payoff (weakly) greater than $\frac{\lambda\theta_H+(1-\lambda)q_l\theta_L}{{\lambda+(1-\lambda)q_l}}- c(\theta_L,e_l+\epsilon)-\tilde{f}_i> \omega_{\mathbf{p}^*}^*(i, M^*(e_l))- f^*-c(\theta_L,e_l)\geq 0$. Hence, by deviating to $\tilde{p}_i$, school $i$ can earn at least $[\lambda + (1-\lambda) q_l](f^*-\epsilon')$ where $\epsilon' = c(\theta_L,e_l +\epsilon)-c(\theta_L,e_l)+\gamma_{\epsilon}$. This profit is strictly higher than the one under $\mathcal{E}^*_{{\mathbf{p}}^*}$, $[\lambda + (1-\lambda) q_l] f^* /n$, a contradiction. 

Finally, we show that the third class of RPBE requires full enrollment. Indeed, let  $q_l\in(0,1]$ and  $q_h\in(0,1)$ 
be such that $\psi^*_{\mathbf{p}^*, \theta_L}(0,0)=1-q_l$, $\psi^*_{\mathbf{p}^*, \theta_L}(i,e_l)=q_l/n$, $\psi^*_{\mathbf{p}^*, \theta_H}(i,e_l)=q_h/n$, and $\psi^*_{\mathbf{p}^*, \theta_L}(i,e_h)=(1-q_h)/n$ for every $i\in I$, with $e_h>e_l\geq 0$.\footnote{$q_l = 0$ or $q_h = 0$ corresponds to RPBE type (i); $q_h = 1$ corresponds to RPBE type (ii). We have already ruled out both.} 
Suppose, for contradiction, that some students do not enroll in $\mathcal{E}^*$. Then low types must be indifferent between enrolling and not, $ U^{\mathcal{E}^*_{{\mathbf{p}^*}}}(\theta_L)=0$. 
Let $w_l^*=\omega_{\mathbf{p}^*}^*(i,M^*(e_l))$ for any $i \in I$.
The payoffs of low and high types must be, respectively:
\begin{align}
U^{\mathcal{E}^*_{\mathbf{p}^*}}(\theta_L) &= w_l^* - c(\theta_L, e_l)  - f^* = 0, \label{eqn:payoff_low_semi} \\
U^{\mathcal{E}^*_{\mathbf{p}^*}}(\theta_H) &= \theta_H - c(\theta_H, e_h) - f^* =
    w_l^* - c(\theta_H, e_l)  - f^* = c(\theta_L, e_l) - c(\theta_H, e_l), \label{eqn:payoff_high_semi}
\end{align}
where the last equality in \eqref{eqn:payoff_high_semi}  follows from \eqref{eqn:payoff_low_semi}.

We show that school $i$ benefits from unilaterally deviating to $(\tilde f_i,\tilde M_i)$ above. Denote by $\tilde{e}_h \in \{0, e_l, e_h\}$ the maximal effort exerted in schools other than $i$ in $\mathcal{E}^*_{\tilde{\mathbf{p}}}$. 
If $\psi^*_{\tilde{\mathbf{p}}, \theta_L}(j, e) > 0$ for some $j\in I\setminus \{i\}$ and $e\geq 0$, then $\psi^*_{\tilde{\mathbf{p}}, \theta_H}(j, e) > 0$, as students would prefer not enrolling over being identified as $\theta_L$ (recall that $f^*>\theta_L$). 

We first show all high types must enroll in school $i$ following $\tilde p$, i.e., $\psi^*_{\tilde{\mathbf{p}}, \theta_H}(i) =1$. If $\tilde{e}_h \in \{0, e_l\}$, we can use arguments similar to those in Step 1 to conclude $\psi^*_{\tilde{\mathbf{p}},\theta_H}(i)=1$. Next, we focus on the case when $\tilde{e}_h = e_h$ and show, by contradiction, this case cannot arise in $\mathcal{E}^*_{\tilde{\mathbf{p}}}$. 
Given that $\tilde{e}_h = e_h$, we must have $\psi^*_{\tilde{\mathbf{p}}, \theta_H}(j, e_h) > 0$ for some $j\in I\setminus \{i\}$. Hence, using \eqref{eqn:payoff_high_semi},
\begin{align}
    U^{\mathcal{E}^*_{\tilde {\mathbf{p}}}}(\theta_H) - U^{\mathcal{E}^*_{\tilde {\mathbf{p}}}}(\theta_L) \leq  U^{\mathcal{E}^*_{\tilde {\mathbf{p}}}}(\theta_H) 
    &\leq \theta_H - c(\theta_H, e_h)  - f^*, \nonumber \\
    &= c(\theta_L, e_l) - c(\theta_H, e_l), \nonumber \\
    &< c(\theta_L, e_l+ \epsilon) - c(\theta_H, e_l + \epsilon). \label{ineq:D1_semi}
\end{align}
\begin{enumerate}[(i)]\itemsep0pt
\item If $\psi^*_{\tilde{\mathbf{p}}, \theta_L}(i, e_l + \epsilon) = 0$, no low-type student would generate signal $(i,m_A)$. Then either Bayes' updating (if $\psi^*_{\tilde{\mathbf{p}}, \theta_H}(i, e_l + \epsilon) > 0$) or, by \eqref{ineq:D1_semi}, our refinement (if $\psi^*_{\tilde{\mathbf{p}}, \theta_H}(i, e_l + \epsilon) = 0$) would imply that firms' belief upon observing $(i, m_A)$ must be $\mu^*_{\tilde{\mathbf{p}}}(\theta_H | i, m_A) = 1$, and thus $\omega^*_{\tilde{\mathbf{p}}}(i, m_A) = \theta_H$. However,  this contradicts the assumption that $\psi^*_{\tilde{\mathbf{p}}, \theta_H}(i) <1$ as high types would strictly prefer $(i, e_l + \epsilon)$ over $(j,{e}_h)$ where $j \in \{0\} \cup I \backslash \{i\}$:
\begin{align*}
    \theta_H - \tilde{f}_i - c(\theta_H, e_l+ \epsilon) > \theta_H - f^*- c(\theta_H, e_h) \geq U^{\mathcal{E}^*_{\tilde {\mathbf{p}}}} (\theta_H).
\end{align*}

\item If, instead, $\psi^*_{\tilde{\mathbf{p}}, \theta_L}(i, e_l + \epsilon) > 0$, then high types must strictly prefer $(i, e_l + \epsilon)$ over $(j,{e}_h)$. Indeed, by individual rationality, we must have
\begin{align*}
    U^{\mathcal{E}^*_{\tilde {\mathbf{p}}}}(\theta_L) = \omega^*_{\tilde{\mathbf{p}}}(i, m_A)  - c(\theta_L, e_l+\epsilon)  - \tilde{f}_i \geq 0.
\end{align*}
Thus, high types must strictly prefer $(i,e_l+\epsilon)$ over $(j,{e}_h)$ with  $j \neq i$ since
\begin{align*}
    \omega^*_{\tilde{\mathbf{p}}}(i, m_A)  - c(\theta_H, e_l+\epsilon)  - \tilde{f}_i &=  U^{\mathcal{E}^*_{\tilde {\mathbf{p}}}}(\theta_L) + [c(\theta_L, e_l+\epsilon) - c(\theta_H, e_l+\epsilon)]\\
    &\geq c(\theta_L, e_l+\epsilon) - c(\theta_H, e_l+\epsilon)\\
    &>\theta_H - c(\theta_H, e_h)  - f^*\geq U^{\mathcal{E}^*_{\tilde {\mathbf{p}}}}(\theta_H), 
\end{align*} 
where the last two inequalities follow from \eqref{ineq:D1_semi} and the fact that $\psi^*_{\tilde{\mathbf{p}}, \theta_H}(j, e_h) > 0$ for some $j\in \{0\} \cup I\setminus \{i\}$. This contradicts the assumption that $\psi^*_{\tilde{\mathbf{p}}, \theta_H}(i) <1$. \end{enumerate}
Thus, all high types must enroll in school $i$ in $\mathcal{E}^*_{\tilde {\mathbf{p}}}$; i.e., $\psi^*_{\tilde{\mathbf{p}}, \theta_H}(i) = 1$, a contradiction to $\psi^*_{\tilde{\mathbf{p}}, \theta_H}(j, e_h) > 0$ for some $j\in I\setminus \{i\}$. 

Combined with $f^*>\theta_L$, this implies that low types' maximal payoff outside school $i$ is $0$. Moreover, under $\mathcal{E}^*_{\tilde{\mathbf{p}}}$, the cost for low types to mimic high types is at most $c(\theta_L, e_l + \epsilon) + \tilde{f}_i$, which is strictly lower than $c(\theta_L, e_l) + f^*$, the cost they incurred to semi-pool with high types under $\mathcal{E}^*_{\mathbf{p}^*}$.\footnote{Recall that no $e>e_l+\epsilon$ can be signaled in $i$ after the deviation.} Thus, the portion of low types enrolling in $i$  under $\mathcal{E}^*_{\tilde {\mathbf{p}}}$ is strictly greater than $q_l$, their portion under $\mathcal{E}^*_{\mathbf{p}^*}$. As a result, the deviation is strictly profitable for school $i$, leading to a contradiction to partial enrollment.

\paragraph{Step 3:}\emph{ Separating RPBE always exists and induces the Riley outcome \textbf{[Proposition \ref{prop:R}]}.}\\ If a symmetric RPBE $\mathcal{E}^*$ features full separation, we must have: 
\begin{enumerate}[(i)]\itemsep0pt
    \item $f^* = 0=\pi^*$, as $f^* > 0$ corresponds only to semi-pooling RPBE outcomes (Step 2), 
    \item $M^*$ such that:
\begin{align}\label{eq:M sep}
    {M}^*(e) =  
   \begin{cases}
         m_A \quad &\text{if } e \geq e_h,\\
        m_B \quad &\text{otherwise, }
    \end{cases}
\end{align} 
for some threshold effort $e_h > 0$ (by the RPBE minimality requirement). 
\item Students strategies: $\psi^*_{{\mathbf{p}^*},\theta_H}(i,e_h)=1/n$ and $\psi^*_{{\mathbf{p}^*},\theta_L}(i,e)=0$ for all $i\in I$, $e\geq e_h$.
\end{enumerate}

We then show $e_h$ must equal the Riley effort, $e_h = e^R$. First, since by definition $c(\theta_L, e^{R}) = \theta_H - \max\{\theta_L, 0\}$, if low types could earn a wage of $\theta_H$ by paying $f^* = 0$ and exerting $e < e^R$, they would optimally do so; hence, $e_h \geq e^R$. Second, we show that if $e_h> e^{R}$, then school $i$ can earn strictly positive profits by unilaterally deviating to $\hat p_i = (\hat{f}_i, \hat{M}_i)$ where $\hat{f}_i = \gamma_{\epsilon}$ with $\gamma_{\epsilon} > 0$ arbitrarily small, and $\hat{M}_i(e)=e$ for all $e\geq 0$. 
Suppose for contradiction, no students enroll in school $i$ under $\mathcal{E}^*_{\hat{\mathbf{p}}}$. Denote by $\hat e_h \in \{0, e_h\}$ 
the maximal effort exerted by high types outside $i$ under $\mathcal{E}^*_{\hat{\mathbf{p}}}$.  Since $\hat f _i>0$, selecting $(i,e^{R})$ would result in a negative payoff for low types, implying  $\psi^*_{\tilde{\mathbf{p}},\theta_L}(i,e^{R})=0$. 
Hence,  either Bayes' rule (if $\psi^*_{\hat{\mathbf{p}}, \theta_H}(i, e^R) > 0$) or our refinement (if $\psi^*_{\hat{\mathbf{p}}, \theta_H}(i, e^R) = 0$) 
 would imply $\mu^*_{\hat{\mathbf{p}}}(\theta_H | i, e^R) = 1$, and thus $\omega^*_{\hat{\mathbf{p}}}(i, e^R) = \theta_H$. 
 Thus, if $ e_h> e^{R}$, high types would strictly prefer $(i,e^{R})$  over $(j,e_h)$ for all $j\in I$, contradicting $\hat e_h =  e_h$.

If, instead, $\hat e_h = 0$, then $U^{\mathcal{E}^*_{\hat{\mathbf{p}}}} (\theta_H) = U^{\mathcal{E}^*_{\hat{\mathbf{p}}}} (\theta_L) = \max\{\E\theta, 0\}.$ 
Therefore,
\begin{align*}
     U^{\mathcal{E}^*_{\hat{\mathbf{p}}}} (\theta_H) - U^{\mathcal{E}^*_{\hat{\mathbf{p}}}} (\theta_L) = 0 <  c(\theta_L, \epsilon) - c(\theta_H, \epsilon),
\end{align*}
for any $\epsilon>0$. If no student enrolls in $i$ in  $\mathcal{E}^*_{\hat{\mathbf{p}}}$, $(i, \epsilon)$ is an unsent signal. Then, by our refinement, $\mu^*_{\hat{\mathbf{p}}}(\theta_H | i, \epsilon) = 1$, and thus $\omega^*_{\hat{\mathbf{p}}}(i, \epsilon) = \theta_H$ for every $\epsilon>0$ sufficiently small. However, this implies high types would strictly prefer $(i,\epsilon)$ over $(j,0)$ for all $j\neq i$ since
\[\theta_H - c(\theta_H, \epsilon) - \gamma_{\epsilon} > \max\{\E\theta, 0\} = U^{\mathcal{E}^*_{\hat{\mathbf{p}}}} (\theta_H), \]
contradicting zero enrollment in school $i$.

Hence, whenever $e_h \neq e^R$, school $i$ can profitably deviate. Thus, in a separating RPBE $\mathcal{E}^*$, all high types must exert $e_h = e^R$, earning a wage of $\theta_H$ and all low types must exert $e=0$ and earn $\max\{0,\theta_L\}$; i.e., the Riley outcome.

Finally, we verify there exists a separating RPBE $\mathcal{E}^*$ such that:

\begin{enumerate}[(i)]\itemsep0pt
\item Every school $i \in I$ offers $p^* = (f^*, M^*)$ where $f^*= 0$ and $M^*$ is given by \eqref{eq:M sep} with $e_h = e^{R}$.
\item High types select strategy $\psi^*_{\theta_H}$ such that $\psi^*_{{\mathbf{p}^*}, \theta_H}(i, e^R) =1/n$ for all $i \in I$. 
\item Low types select strategy $\psi^*_{\theta_L}$ such that $\psi^*_{{\mathbf{p}^*}, \theta_L}(i, 0) =1/n$ for all $i \in I$ in the sorting case (with $\theta_L> 0$), and $\psi^*_{{\mathbf{p}^*}, \theta_L}(i) =0$ for all $i \in I$ in the screening case (with $\theta_L<0$).
\item Firms wage schedule $\omega^*$ is such that  $\omega^*_{\mathbf{p}^*}(i, m_A) =\theta_H$ and $\omega^*_{\mathbf{p}^*}(i, m_B) =\max\{\theta_L,0\}$ for all  $i \in I$. 
\item For any $\hat{\mathbf{p}}$ such that $\hat{p}_i \neq p^*$ for a school $i\in I$ and $\hat{p}_j=p^*$ for all $j\in I\setminus \{i\}$: 
\begin{enumerate}[(a)]\itemsep0pt
    \item high-type students' strategy is  $\psi^*_{\hat{\mathbf{p}}, \theta_H}(j, e^R) =1/(n-1)$; 
    \item low-type students' strategy is $\psi^*_{\hat{\mathbf{p}}, \theta_L}(j, 0) =1/(n-1)$ in the sorting case or, $\psi^*_{{\hat{\mathbf{p}}}, \theta_L}(i') =0$ for all $i' \in I$ in the screening case;
    \item firms offer $\omega^*_{\hat{\mathbf{p}}}(j, m_A) = \theta_H$, and $\omega^*_{\hat{\mathbf{p}}}(j, m_B) =\omega^*_{\hat{\mathbf{p}}}(i, m) =\max\{\theta_L,0\}$ for any $m \in \hat{M}_i(\mathbb{R}_+)$.
\end{enumerate}
\item $\mathcal{E}^*_{{\mathbf{p}}}$ is an EPBE for every policy profile $\mathbf{p}$.\footnote{Recall that the existence of an EPBE for any policy profile is established in Theorem \ref{thm:existence}.}
\end{enumerate}
Indeed, no student is incentivized to deviate, and on-path wages coincide with the expected type. Besides, schools cannot make positive profits by any unilateral deviation: with a positive fee, (i) providing signals that require a higher effort $e \geq e^{R}$ attracts nobody, and (ii) providing any other signals leads to wage $\max\{\theta_L,0\}$ and attracts nobody. Finally,  all off-path beliefs satisfy our refinement.

\paragraph{Step 4:}\emph{ There exists no fully pooling RPBE.}\\ Suppose, for contradiction, that a symmetric RPBE $\mathcal{E}^*$ satisfies $\psi^*_{\mathbf{p}^*, \theta_L} = \psi^*_{\mathbf{p}^*, \theta_H}$. Then:
\begin{enumerate}[(i)]\itemsep0pt
    \item the RPBE policy $p^*$ must be such that $f^* = 0=\pi^*$ (by Steps 1 and 2) and, 
    \begin{align}
    M^*(e) =  
   \begin{cases}
         m_A & \text{if } e \geq e_l, \\
         m_B & \text{otherwise,}
    \end{cases}
\end{align}
for some threshold effort $e_l \geq 0$ (by the RPBE minimality requirement). 
\item In $\mathcal{E}^*_{\mathbf{p}^*}$, all enrolled students exert $e_l$.\footnote{High and low types can pool at most on one effort.}
\end{enumerate}  
We show school $i$ can profitably attract a positive mass of students by unilaterally deviating to $\hat p_i = (\hat{f}_i, \hat{M}_i)$ where $\hat{f}_i = \gamma_{\epsilon}> 0$ sufficiently small, and $\hat{M}_i(e)=e$ for all $e\geq 0$. 

Suppose, for contradiction, no students enroll in $i$, $\psi^*_{{\hat{\mathbf{p}}}, \theta_H}(i)=\psi^*_{{\hat{\mathbf{p}}}, \theta_L}(i) =0$ for $\hat{\mathbf{p}}$ such that $\hat{p}_i\neq p^*$ and $\hat{p}_j=p^*$ for all $j\in I\setminus \{i\}$. 
Denote by $\hat e_h \in \{0, e_l\}$ the minimal effort exerted by high types outside school $i$, in  $\mathcal{E}^*_{\hat{\mathbf{p}}}$. Define $\hat w_h:=\max_{j\neq i}\omega_{{\hat{\mathbf{p}}}}(j,\hat e_h) \geq 0$ the maximal wage associated with $\hat e_h$ at a school $j \neq i$. First, note that $\hat w_h < \theta_H$: indeed we have $\theta_h-f^*-c(\theta_L, \hat e_h)>\max\{0,\theta_L-f^*\}$, implying low types would optimally pool with high types on $\hat e_h$ at school $j\neq i$ to earn $\hat w_h$, if $\hat w_h\geq \theta_H$.

Second,  by low types' incentive compatibility constraint, 
\[0 \leq U^{\mathcal{E}^*_{\hat{\mathbf{p}}}} (\theta_H)- U^{\mathcal{E}^*_{\hat{\mathbf{p}}}} (\theta_L) \leq c(\theta_L, \hat e_h) - c(\theta_H, \hat e_h)  < c(\theta_L, \hat e_h+ \epsilon) - c(\theta_H, \hat e_h+ \epsilon),\]
for any $\epsilon>0$ sufficiently small. If no students enroll in school $i$ under $\mathcal{E}^*_{\hat{\mathbf{p}}}$\, , $(i, \hat e_h+ \epsilon)$ is an unsent signal. Then by our refinement, firms' belief would be such that $\mu^*_{\hat{\mathbf{p}}}(\theta_H | i, \hat e_h+ \epsilon) = 1$, and thus $\omega^*_{\hat{\mathbf{p}}}(i, \hat e_h+ \epsilon) = \theta_H$ for every $\epsilon>0$. However, this implies high types would strictly prefer $(i,\hat e_h+ \epsilon)$ over $(j,\hat e_h)$ for all $j\neq i$ since, for $\epsilon,\gamma_{\epsilon}>0$ sufficiently small, 
\[\theta_H - c(\theta_H, \hat e_h+ \epsilon) - \gamma_{\epsilon} >  \hat w_h - c(\theta_H, \hat e_h) = U^{\mathcal{E}^*_{\hat{\mathbf{p}}}} (\theta_H), \]
where the inequality follows from $\hat w_h < \theta_H$. This contradicts the assumption that no students enroll in school $i$ under $\mathcal{E}^*_{\hat{\mathbf{p}}}$. 
Thus, school $i$ can profitably deviate from $p^*$ to $\hat{p} = (\hat{f}, \hat{M}_i)$, contradicting the existence of a pooling RPBE.

\paragraph{Step 5:} \emph{In every semi-pooling RPBE $\mathcal{E}^*$ with zero fees, all students enroll, low types exert $e_l \geq 0$, high types exert $e_l$ with probability $q_h \in (0,1)$ and the Riley effort $e^{R}$ with probability $1-q_h$.} \quad In any semi-pooling symmetric RPBE $\mathcal{E}^*$ with $f^*=0$, schools' monitoring structure must be:
\begin{align}\label{eq:semi-pooling}
    {M}^*(e) =  
   \begin{cases}
        m_A \quad &\text{if } e \geq e_h,\\
        m_B \quad &\text{if } e \in [e_l, e_h),\\
        m_C \quad &\text{otherwise,}
    \end{cases}
\end{align} 
for some $e_h > e_l \geq 0$. Let $\psi^*_{{\mathbf{p}^*}, \theta_L}(i, e_l) =q_l/n$ and $\psi^*_{{\mathbf{p}^*}, \theta_L}(i, 0) =(1-q_l)/n$ for all $i\in I$, $q_l \in (0,1]$. Let $\psi^*_{{\mathbf{p}^*}, \theta_H}(i, e_l) =q_h/n$ and $\psi^*_{{\mathbf{p}^*}, \theta_L}(i, e_h) =(1-q_h)/n$ for all $i\in I$, $q_h \in (0,1)$.\footnote{$q_l = 0$ or $q_h = 0$ corresponds to the case of full separation, which has been characterized by Step 3; $q_h = 1$ corresponds to the case of full pooling, which we have shown is impossible by Step 4.} Denote $w_l=w_l^*=\omega_{\mathbf{p}^*}^*(i,m_B) \in [\max\{\theta_L,0\}, \theta_H)$  for any $i \in I$. 
In this step, we need to show that (i) low types' payoff $U^{\mathcal{E}^*_{\mathbf{p}^*}} (\theta_L)>\max\{\theta_L,0\}$, implying $q_l = 1$; (ii) the threshold effort $e_h = e^{R}$. 

 Suppose, for contradiction, $U^{\mathcal{E}^*_{\mathbf{p}^*}} (\theta_L)\leq \max\{\theta_L,0\}.$ 
Since $\psi^*_{{\mathbf{p}^*}, \theta_L}(i, e_l) =q_l/n>0$, by low types participation constraints, it must be that: 
\begin{align}\label{eqn:payoff_low_semi_zerofee}
    U^{\mathcal{E}^*_{\mathbf{p}^*}} (\theta_L) = w_l - c(\theta_L, e_l) = \max\{\theta_L,0\}.
\end{align}

High types are indifferent between $e_l$ and $e_h$; thus their payoff is
\begin{align}\label{eqn:payoff_high_semi_zerofee}
    U^{\mathcal{E}^*_{\mathbf{p}^*}}(\theta_H) = \theta_H - c(\theta_H, e_h) =
    w_l - c(\theta_H, e_l) = \max\{\theta_L,0\} + c(\theta_L, e_l) - c(\theta_H, e_l),
\end{align}
where the last equality follows from \eqref{eqn:payoff_low_semi_zerofee}.

Consider the subgame following  $\tilde{\mathbf{p}}$ where $\tilde p_j=p_j^*$ for all $j\neq i$ and school $i$ unilaterally deviates to $\tilde p_i = (\tilde{f}_i, \tilde{M}_i)$ such that  $\tilde{f}_i = \gamma_{\epsilon} >0$, and $\tilde{M}_i$ is given by \eqref{eq:M tilde}. We will show that school $i$ can profitably attract a positive mass of students with this unilateral deviation.

Suppose, for contradiction, no students enroll in school $i$ in $\mathcal{E}^*_{\tilde{ \mathbf{p}}}$, i.e., $\psi^*_{{\tilde{\mathbf{p}}}, \theta_H}(i)=\psi^*_{\tilde{\mathbf{p}}, \theta_L}(i)=0$. 
Denote by $\tilde e_h \in \{0, e_l, e_h\}$ the maximal effort high types exert outside school $i$ in
$\mathcal{E}^*_{\tilde{ \mathbf{p}}}$. 

If $\tilde e_h = e_h$, we have
\begin{align}\label{ineq:D1_zerofee}
    U^{\mathcal{E}_{\tilde{\mathbf{p}}}}(\theta_H) - U^{\mathcal{E}_{\tilde{\mathbf{p}}}}(\theta_L) &\leq \theta_H - c(\theta_H, e_h) - \max\{\theta_L,0\} \nonumber\\
    &= c(\theta_L, e_l)- c(\theta_H, e_l) < c(\theta_L, e_l+\epsilon)- c(\theta_H, e_l+\epsilon),
\end{align}
where the equality follows from \eqref{eqn:payoff_high_semi_zerofee}. Since no students enroll in school $i$, $(i,m_A)$ is unsent under $\mathcal{E}^*_{\tilde{\mathbf{p}}}$. Thus, our refinement and inequality \eqref{ineq:D1_zerofee} imply 
$\mu^*_{\tilde{\mathbf{p}}}(\theta_H | i,m_A) = 1$, and thus $\omega^*_{\tilde{\mathbf{p}}}(i,m_A) = \theta_H$ for every $\epsilon>0$. However, this implies high types would strictly prefer $(i,e_l+ \epsilon)$ over $(j, e_h)$ for all $j\neq i$ since, for $\epsilon,\gamma_{\epsilon}>0$ sufficiently small, 
\begin{align}\label{ineq:high_compare_3}
    \theta_H - \gamma_\epsilon - c(\theta_H, e_l+ \epsilon) > \theta_H - c(\theta_H, e_h) \geq U^{\mathcal{E}_{\tilde{\mathbf{p}}}} (\theta_H),
\end{align}
contradicting $\psi^*_{{\tilde{\mathbf{p}}}, \theta_H}(i)=\psi^*_{\tilde{\mathbf{p}}, \theta_L}(i)=0$.

If $\tilde e_h \in \{0, e_l\}$, similar arguments as in Step 4 prove that school $i$ can attract all high types: they would strictly prefer $(i,\tilde e_h + \epsilon')$ over $(j,\tilde e_h )$ when $\epsilon',\gamma_{\epsilon} > 0$ are sufficiently small.
Therefore, with the unilateral deviation to $\tilde p_i$, school $i$ can earn a strictly positive profit, i.e., more than the 0 RPBE profit it would obtain by choosing $p^*$.

Therefore, it must be that $U^{\mathcal{E}^*_{\mathbf{p}^*}} (\theta_L) > \max\{\theta_L,0\}$, and thus $\psi^*_{{{\mathbf{p}^*}}, \theta_H}(i)=\psi^*_{{\mathbf{p}^*}, \theta_L}(i)=1/n$ for all $i\in I$.

Next, we prove that $e_h = e^{R}$. To this end, consider
school $i$'s unilateral deviation to $\hat{p}_i = (\hat{f}_i, \hat{M}_i)$ where  $\hat{f}_i = \gamma_{\epsilon} >0$, and $\hat{M}_i(e)=e$ for all $e\geq 0$. Denote by $\mathcal{E}^*_{\hat{\mathbf{p}}}$ the EPBE prescribed by $\mathcal{E}^*$ following $\hat{\mathbf{p}}$, where $\hat{p}_j=p^*$ for all $j\in I\setminus \{i\}$ and $\hat{p}_i = (\hat{f}_i, \hat{M}_i)$. Denote by $\hat e_h \in \{0, e_l, e_h\}$ the minimal  effort high types exert outside $i$, and  define $\hat w_h:=\max_{j\neq i}\omega_{{\hat{\mathbf{p}}}}(j,\hat e_h) \geq 0$ the maximal wage associated with $\hat e_h$ at a school $j \neq i$ under $\mathcal{E}^*_{\hat{\mathbf{p}}}$. 
We show that, whenever $e_h \neq e^{R}$,
school $i$'s unilateral deviation to  $\hat{p}_i$ is strictly profitable.

First, suppose that $e_h < e^{R}$. Since $\hat e_h \leq e_h < e^{R}$, we have $\theta_h-f^*-c(\theta_L, \hat e_h)>\max\{0,\theta_L-f^*\}$, implying low types would optimally pool with high types on $\hat e_h$  at school $j\neq i$  to earn $\hat w_h$, if $\hat w_h\geq \theta_H$. This implies $\hat w_h< \theta_H$. Then, the same argument as in Step 4 proves that by unilaterally deviating to $\hat{p}_i$, school $i$ can attract strictly positive enrollment and earn profits $\hat{ \pi}_i>0=\pi^* = 0$. As a result, we must have that $e_h \geq e^R$.

Second, suppose $e_h > e^{R}$. If $\hat e_h<e_h$ the same argument as the previous paragraph implies school $i$ earns strictly positive profits in $\mathcal{E}^*_{\hat{\mathbf{p}}}$. If instead $\hat e_h=e_h$, same argument as in Step 3 proves that school $i$ would attract some high types in $\mathcal{E}^*_{\hat{\mathbf{p}}}$: they would strictly prefer $(i,e^R)$ over  $(j, e_h )$, with $j\neq i$, when $\gamma_{\epsilon} > 0$ is sufficiently small. Thus, school $i$ would profitably deviate from $p^*$, yielding $\pi_i^*=0$, to $\hat{p}_i$, yielding strictly positive profits--a contradiction.

We conclude that $e_h = e^{R}$.

\paragraph{Step 6:} \emph{Under IIS, any RPBE with $f^*=0$ induces the Riley outcome.} \quad

So far, we have established that (i) any separating RPBE delivers the Riley outcome (Step 3); (ii) no RPBE features full pooling (Step 4); (iii) a semi-pooling RPBE with $f^*=0$ is characterized as in Step 5. This step shows how the IIS assumption rules out the semi-pooling RPBE with zero fees.  

Suppose, for contradiction, an IIS-RPBE $\mathcal{E}^*$ is characterized by Step 5: schools charge $f^*= 0$ and adopt monitoring structures as in \eqref{eq:semi-pooling}  with $e_h = e^{R} > e_l \geq 0$; (ii) all students enroll; low types exert $e_l$; high types exert effort $e_l$ with probability $q_h \in (0,1)$ and $e^{R}$ with probability $1-q_h$; (iii) firms offer wages $\theta_H, w_l, \max\{\theta_L, 0\}$ upon receiving message $m_A, m_B, m_C$, respectively, from any school, where $w_l$ is given by \eqref{eq:mixed wage}.

Consider $\mathcal{E}^*_{\tilde {\mathbf{p}}}$, the EPBE prescribed by $\mathcal{E}^*$ in the subgame following $\tilde {\mathbf{p}}$,  with $\tilde{p}_j = p^*_j$ for all $j \neq i$ and $\tilde{p}_{i}=(\tilde{f}_i, \tilde{M}_i)$ where $\tilde{f}_i = \gamma_{\epsilon}>0$ sufficiently small, and $\tilde{M}_i$ is given by \eqref{eq:M tilde}. 

Under IIS, if no student enrolls in school $i$ under $\mathcal{E}^*_{\tilde {\mathbf{p}}}$, then all students enroll in other schools, adopt the same conditional strategies $\psi^*_{\tilde{\mathbf{p}},\theta}(j,e|j \neq i)=\psi^*_{\mathbf{p}^*,\theta}(j,e|j \neq i)$, and earn the same payoffs as in  $\mathcal{E}^*_{{\mathbf{p}^*}}$. Thus
\begin{align*}
    U^{\mathcal{E}^*_{\tilde{\mathbf{p}}}}(\theta_H) - U^{\mathcal{E}^*_{\tilde{\mathbf{p}}}}(\theta_L) &= U^{\mathcal{E}^*_{{\mathbf{p}^*}}}(\theta_H) - U^{\mathcal{E}^*_{{\mathbf{p}^*}}}(\theta_L)\\
    &= c(\theta_L, e_l)- c(\theta_H, e_l) < c(\theta_L, e_l+\epsilon)- c(\theta_H, e_l+\epsilon).
\end{align*}
Thus, our refinement implies 
$\mu^*_{\tilde{\mathbf{p}}}(\theta_H | i,m_A) = 1$, and thus $\omega^*_{\tilde{\mathbf{p}}}(i,m_A) = \theta_H$ for every $\epsilon>0$. However, this implies high types would strictly prefer $(i,e_l+ \epsilon)$ over $(j, e_h)$ for all $j\neq i$ since, for $\epsilon,\gamma_{\epsilon}>0$ sufficiently small, 
\begin{align}
    \theta_H - \gamma_\epsilon - c(\theta_H, e_l+ \epsilon) > \theta_H - c(\theta_H, e_h) = U^{\mathcal{E}^*_{\tilde{\mathbf{p}}}} (\theta_H),
\end{align}
contradicting $\psi^*_{{\tilde{\mathbf{p}}}, \theta_H}(i)=\psi^*_{\tilde{\mathbf{p}}, \theta_L}(i)=0$.
 Hence, school $i$ can attract a positive mass of students and thus earn a positive profit $\tilde{\pi}_i>\pi^*=0$, a contradiction. Therefore, there exists no semi-pooling IIS-RPBE with $f^*=0$.

\section{Proof of Proposition \ref{prop:mon_screenConst}}

Observe that, in the sorting case (i.e., $\theta_L > 0$) with $K \geq \E\theta$, the unique RPBE outcome is described as in Proposition \ref{prop:mon_sorting} following the logic in Section \ref{sec:mon_sorting}. In what follows, we analyze the screening case (i.e., $\theta_L < 0$) and the sorting case with $K < \E\theta$.
   
\paragraph{Step 1}: 
\emph{The school cannot make profits higher than 
$\pi^*= K[\lambda + \tilde{\alpha}_K(1-\lambda)]$
where $\tilde{\alpha}_K = \min\{\alpha_K, 1\}$ and $\alpha_K$ is given by \eqref{eq:low enroll prob}
.} \quad 

First fix any fee $f \in [0,\theta_H]$. Define $\tilde{\alpha}_f=\min \{{\alpha}_f, 1\}$ such that 
\[\frac{\lambda \theta_H+{\alpha}_f(1-\lambda)\theta_L}{\lambda +{\alpha}_f(1-\lambda)}=f.\]
Thus, $\tilde{\alpha}_f  \in [0,1]$ represents the largest proportion of low types that can optimally enroll under the fee $f$. Accordingly, the school's profit from charging the fee $f$ is at most 
\begin{align*}
    \bar{\pi}_f &= f[\lambda+\tilde{\alpha}_f(1-\lambda)]\\
    &= \begin{cases}
        f &\text{ if } f \in [0, \E\theta),\\
        \lambda \theta_H+ {\alpha}_f(1-\lambda)\theta_L &\text{ if } f \in [\E\theta, \theta_H].
    \end{cases}
\end{align*}
With the credit constraint $K$, the fee charged by the school must satisfy $f \in [0, K]$. In the screening case $(\theta_L < 0)$, since $\bar{\pi}_f$ increases with $f$, the school's profit cannot exceed $\pi^*= \bar{\pi}_K = K [\lambda +\tilde{\alpha}_K(1-\lambda)]$. In the sorting case $(\theta_L > 0)$ with $K \in (0, \E\theta)$, since $\bar{\pi}_f$ increases with $f$ over $[0,K]$, the school's profit cannot exceed $\pi^*= \bar{\pi}_K = K$.

\paragraph{Step 2}: \emph{The candidate PBE with on-path play described by Proposition \ref{prop:mon_screenConst} and an EPBE in every off-path subgame is an RPBE that achieves profits $\pi^*$.} \quad The equilibrium outcome specified in Proposition \ref{prop:mon_screenConst} yields profit $\pi^*$ for the school.
The school will not deviate as it can never obtain a profit higher than $\pi^*$. If $K\in [\E\theta, \theta_H)$, then students will be indifferent between enrolling or not, obtaining zero payoffs either way; if $K\in [0, \E\theta)$, then students weakly prefer to enroll (see Step 4). Firms' wages are given following the Bayes' rule. Finally, since there is no unsent message under the equilibrium policy, the proposed strategy profile is consistent with our refinement.

\paragraph{Step 3}: \emph{No other equilibrium profits.} 
 Given Step 1, we only need to show the school's equilibrium profit cannot be lower than $\pi^*$. Suppose for  contradiction that there exists an RPBE $(f', M', \psi', \omega',\mu')$ in which the school earns $\pi' <\pi^*$. Consider the following deviation from the school: 
\begin{align*}
    &\hat f = K, \\
    &\hat M (e) =
    \begin{cases}
         e \quad &\text{if } e < \epsilon,\\
         \epsilon \quad &\text{if } e\geq   \epsilon,
    \end{cases}
\end{align*}
where $\epsilon>0$ is arbitrarily small.

We show below that under this policy $\hat p = (\hat f, \hat M)$, the schools profit $\hat{\pi}$ converges to $\pi^*$ as $\epsilon$ converges to $0$.

To see this, note first that a positive mass of high types must enroll in the school: if no high-type student enrolls, then the message $\epsilon$ must be unsent, and all students must obtain zero payoff under the policy $\hat p$. Our refinement then implies that exerting effort $\epsilon$ is rewarded by wage $\theta_H$. Since $\hat f=K<\theta_H$, high types would prefer to enroll and exert a small effort $\epsilon$, a contradiction. 

Denote by $\hat{\underline{e}}_h \in [0, \epsilon]$ the minimal effort that high types exert in school under the policy $\hat p$. 
With $\epsilon$ sufficiently small, some low types must also exert effort $\hat{\underline{e}}_h$, and thus the associated wage $\omega(\hat{{p}},\hat{\underline{e}}_h)<\theta_H$.

If $\hat{\underline{e}}_h <\epsilon$, then
by our refinement, $\omega(\hat{{p}},\gamma)=\theta_H$ for any message $\gamma\in (\hat{\underline{e}}_h,\epsilon]$, attracting students to deviate and exert a negligible higher effort, a contradiction. Therefore, we have (i) $\hat{\underline{e}}_h =\epsilon$, (ii) all high types and a positive mass of low types enroll in the school, and (iii) all enrolled students exert effort $\epsilon$.   
Finally, we show how low types' enrollment varies with the magnitude of the credit constraint $K$. If $\E\theta \geq K +c(\theta_L, \epsilon) $, then all low types enroll, and the school's profit equals $K$. If $\E\theta < K+c(\theta_L, \epsilon)$, then the fraction of low types that enroll in the school, $\hat{\alpha}_\epsilon$, is given by $\frac{\lambda \theta_H+\hat{\alpha}_\epsilon(1-\lambda)\theta_L}{\lambda +\hat{\alpha}_\epsilon(1-\lambda)}=K+c(\theta_L, \epsilon)$. Note that $\lim_{\epsilon \rightarrow 0} \hat{\alpha}_\epsilon = {\alpha}_K$, where ${\alpha}_K$ is given by \eqref{eq:low enroll prob}.  Thus the school's profit $\hat \pi =K(\lambda +\min\{\hat{\alpha}_\epsilon,1\}(1-\lambda))$ converge to $\pi^* > \pi'$ as $\epsilon$ goes to 0, a profitable deviation and therefore a contradiction. Hence, the monopolist school's equilibrium profit must be $\pi^*.$ 

\paragraph{Step 4}: \emph{If $K \in [ \E\theta, \theta_H)$, then the RPBE outcome is unique; if $K\in (0,\E\theta)$, there exist multiple RPBE outcomes, all featuring full enrollment.}   \quad 
By Step 3, the school's equilibrium profit is $\pi^* = K[\lambda + \tilde{\alpha}_K(1-\lambda)]$. For $K \in [\E\theta, \theta_H)$, we focus on the screening case and we have $\tilde{\alpha}_K = {\alpha}_K$. By Step 1, $\pi^*$ can only be attained by (i) the school charging the fee $K$; (ii) all enrolled students exerting zero effort (and thus by minimality requirement, the school must adopt a pooling monitoring policy); (iii) all high types and a fraction ${\alpha}_K$ of low types enrolling in the school; and (iv) firms offer wage $K$ to all students enrolled in the school. We have shown in Step 2 that this outcome can arise in an RPBE; the arguments above further show that this is the unique RPBE outcome that can arise.

If $K\in (0,\E\theta)$, then $\tilde{\alpha}_K = 1$ (i.e., full enrollment) and $\pi^* = K$ (and thus the fee must be $K$). For the screening case,
using the arguments similar to Step 2, we can show that there exists a continuum of full pooling RPBE outcomes where (i) the school adopts the following policy
\begin{align*}
    &f^* = K, \\
    &M^*(e) =
    \begin{cases}
         m_A \quad &\text{if } e \geq e_l,\\
         m_B \quad &\text{if } e < e_l,
    \end{cases}
\end{align*}
where $e_l\leq [0, e']$ where $ K + c(\theta_L, e') = \E\theta$; (ii)
all students enroll and exert the threshold effort $e_l$; (iii) firms offer the wage $\E\theta$ upon receiving the message $m_A$, and offer zero wage upon receiving the message  $m_B$.  Additionally, there exists a continuum of partially pooling RPBE outcomes where (i) the school adopts the following policy
\begin{align*}
    &f^* = K, \\
    &M^*(e) =
    \begin{cases}
         m_A \quad &\text{if } e \geq e_h,\\
         m_B \quad &\text{if } e \in [e_l, e_h),\\
         m_C \quad &\text{if } e \in [0, e_l),
    \end{cases}
\end{align*}
where $e_l \leq [0, e_h)$; (ii) low types enroll and exert effort $e_l$;  (iii) high types enroll, exert effort $e_l$ with probability $q_h \in (0,1)$ and exert effort $e_h$ with probability $1- q_h$ such that 
\[w_l - c(\theta_H, e_l) = \theta_H-c(\theta_H, e_h),\]
where $w_l = \frac{\lambda q_h\theta_H+(1-\lambda)\theta_L}{\lambda q_h+(1-\lambda)
} \geq K+c(\theta_L, e_l)$; (iv) firms offer the wage $\theta_H$ upon receiving the message $m_A$, the wage $w_l$ upon receiving the message  $m_B$, and the wage zero upon receiving the message  $m_C$.

Similarly, for the sorting case with $K\in (0,\E\theta)$, we can show that there exists a continuum of RPBE outcomes where (i) the school adopts the following policy
\begin{align*}
    &f^* = K, \\
    &M^*(e) =
    \begin{cases}
         m_A \quad &\text{if } e \geq e_l,\\
         m_B \quad &\text{if } e < e_l,
    \end{cases}
\end{align*}
where $e_l\geq 0$; (ii) low types enroll and exert zero effort;  (iii) high types enroll, exert zero effort with probability $q_h \in [0,1]$ and exert effort $e_l$ with probability $1- q_h$ such that 
\[w_l = \theta_H-c(\theta_H, e_l),\]
where $w_l = \frac{\lambda q_h\theta_H+(1-\lambda)\theta_L}{\lambda q_h+(1-\lambda)
} \geq K$; (iv) firms offer the wage $\theta_H$ upon receiving the message $m_A$, the wage $w_l$ upon receiving the message  $m_B$.

\end{document}

\bigskip
\section{Stochastic Messages}\label{sec:stochastic}
In the baseline model, we excluded the possibility that schools would offer a stochastic monitoring structure. We justified this in light of our main application. From a moral and legal standpoint, assigning different grades to students with equal performance would be problematic. Nevertheless, we showed that a monopolist school can maximize and extract the social surplus without randomizing. Also, the results show that competition generates inefficiencies and pushes toward the Riley outcome, unaffected by the option to provide a stochastic monitoring policy.

In this setting, school $i$'s monitoring policy, $M_{i}:\Re_{+}\to \Delta\mathbb{M}$, maps every effort $e \in \R_+$ into a message distribution $M_i\left( \cdot|e \right)\in\Delta\mathbb{M}$. 

Denote by  $S^{\psi^*_{\mathbf{p}}}\in I \times \mathbb{M}$  the set of messages generated with positive probability by the $\mathcal{E}^*_{\mathbf{p}}$-equilibrium strategy profile in the subgame following $\mathbf{p}$.

To accommodate stochastic monitoring, we need to extend our refinement. First, note that after the schools announce their policy vector $\mathbf{p}$, firms can detect deviating choices of the students only if they observe a signal $(i,m) \notin S^{\psi^*_{\mathbf{p}}}$. Our extended refinement on firms' beliefs applies precisely to off-path messages in every subgame.

For every $(i,m')\notin S^{\psi^*_{\mathbf{p}}}$ call $W^{\mathcal{E}^*_{\mathbf{p}}}(i,m')$ the set of wage schedules that are consistent with the candidate equilibrium belief $\mu^*$ everywhere but in $i,m'$. Formally,  given $i,m'\notin S^{\psi^*_{\mathbf{p}}}$,
$$W^{\mathcal{E}^*_{\mathbf{p}}}(i,m')= \left\{\omega_{\mathbf{p}}: I \times \mathbb{M} \to [\max\{\theta_L,0\},\theta_H], \,\,s.t\,\, \omega_{\mathbf{p}}(j,m) = \omega^*_{\mathbf{p}}(j,m) \text{ if $(j,m)\neq (i,m')$}  \right\}.$$

Finally, given schools’ policy vector $\mathbf{p}$, we define the set of students' efforts in school $i$ that can generate message $m'$ with positive probability: $E_{\mathbf{p}, i}(m')= \left\{e\in\Re_+ : M_{i}\left(m'|e\right)>0 \right\}$

We can now define our equilibrium refinement:
\begin{definition}\label{df:refine}
A Perfect Bayesian Equilibrium in the subgame following $\mathbf{p}$, $\mathcal{E}^*_{\mathbf{p}}=(\psi^*_{\mathbf{p}}, \omega^*_{\mathbf{p}}, \mu^*_{\mathbf{p}})$ fails our extended D1 criterion, if there exists an off-path signal $(i,m)\notin S^{\psi^*_{\mathbf{p}}}$, and types $\theta, \theta'\in \Theta$, 
such that, $\mu^*_{\mathbf{p}}(\theta|i,m)>0$, and 
 $$\left\{\omega_{\mathbf{p}} \in W^{\mathcal{E}^*_{\mathbf{p}}}:U^{\mathcal{E}^*_{\mathbf{p}}}(\theta)\leq 
 +\max_{e\in E_{\mathbf{p},i}\left(m\right)} 
 \left(-c(\theta,e)+\sum_{m\in Supp(M_i(e))} \omega_{\mathbf{p}}(i,m) M_i(m|e)
 \right)-f_i\right\}\subsetneq$$ 
 $$\left\{\omega_{\mathbf{p}} \in W^{\mathcal{E}^*_{\mathbf{p}}}:U^{\mathcal{E}^*_{\mathbf{p}}}(\theta')\leq 
 +\max_{e\in E_{\mathbf{p},i}\left(m\right)} 
 \left(-c(\theta',e)+ \sum_{m\in Supp(M_i(e))} \omega_{\mathbf{p}}(i,m) M_i(m|e)
 \right)-f_i\right\}.$$ 
 We call $EPBE$ any $PBE$ satisfying our extended D1 criterion.
 
\end{definition}
This definition reduces to our previous one when schools only offer deterministic monitoring policies. However, our existence proof does not easily extend. We cannot rule out the possibility that no EPBE exists in the subgame following some policy vector that involves stochastic monitoring.

\section{Choice of Equilibrium Refinement}\label{sec:ic}
\mc{I am not sure if this section belongs to the paper: in my mind, IC is not much more popular than D1. }
Our refinement is stronger than the more frequently used Intuitive Criterion. The benefit is that it results in a  \textbf{unique} refined PBE outcome in the case of a monopolist school, which presents a clear benchmark for subsequent analysis of the competition.

To see why the Intuitive Criterion does not suffice to yield a unique equilibrium outcome in the monopoly-sorting case, first note that the proof in Step (ii) above no longer holds under the Intuitive Criterion.
It is possible that, under the Intuitive Criterion, no students enroll if the monopolist school adopts the policy $(\hat f, \hat M)$. The associated (off-path) belief is that any signal from the school would result in a low wage $\theta_L$, which is lower than the fee charged $E[\theta] - c(\theta_L, \epsilon)$ with $\epsilon$ sufficiently small. This off-path belief survives the Intuitive Criterion since, for the type $\theta_L$, enrolling and exerting $\epsilon$ is a profitable deviation from their equilibrium strategy under the rationalizable belief that doing so will result in a high wage $\theta_H$. Thus, even upon observing the good signal $\hat m_A$, the firm could conceive that the signal is generated by the type $\theta_L$.

More generally, for any $q\in (\max\{\frac{\lambda (\theta_H - \theta_L)}{\E \theta}, \frac{\theta_L - \lambda \E\theta}{(1-\lambda) \E\theta}\},1]$, any school policy $(\hat f, \hat M)$ in which
\begin{align*}
   \hat f &\in (\frac{\max\{\theta_L, \lambda \theta_H \}}{\lambda + (1-\lambda) q}, \E \theta ],\\
    \hat M (e) &=
    \begin{cases}
        m_A \quad &\text{if } e \geq \hat e,\\
        m_B \quad &\text{if } e < \hat e,
    \end{cases}
\end{align*}
and $c(\theta_L, \hat e) = \frac{\lambda \theta_H + (1-\lambda) q \theta_L}{\lambda + (1-\lambda) q} - \hat f$,
can be supported by a PBE that survives the Intuitive Criterion. Here, $\hat e$ is given such that low type is indifferent between exerting effort $\hat e$ at school and the outside option.\footnote{Note that since $\hat f > \theta_L$, the outside option is better than being identified as low type with no effort.} The equilibrium outcome will be (semi-)pooling, in which all high types exert effort $\hat e$ and low types exert effort $\hat e$ with probability $q$. The associated belief system is: 
\begin{enumerate}
    \item[(i)] $\mu (\theta_H|\hat f, \hat M, m_A) =  \frac{\lambda}{\lambda + (1-\lambda) q}$, $\mu (\theta_H|\hat f, \hat M, m_B) =  0$;
    \item[(ii)] if $(f, M) \neq (\hat f, \hat M)$ and $f \leq \theta_L$, $\mu (\theta_H|f, M, m) \in [0,1]$ is given by Bayes rule (on-path) and satisfies the Intuitive Criterion (off-path);
    \item[(iii)] if $(f, M) \neq (\hat f, \hat M)$ and $f > \theta_L$, $\mu (\theta_H|f, M, m) = 1$ if required by Intuitive Criterion, and $\mu (\theta_H|f, M, m) = 0$ otherwise.
\end{enumerate}
\Xomit{
\[
\mu (\theta_H|f, M, m) = \begin{cases}
    \frac{\lambda}{\lambda + (1-\lambda) q}  &\text{ if } ( f, M, m) = ( \hat f, \hat M, m_A),\\
    1  &\text{ if } \text{Intuitive Criterion requires as such,}\\
    0 & \text{ otherwise.}
\end{cases}
\]
}
The equilibrium profit for the school is thus $\hat \pi = \hat f [\lambda + (1-\lambda) q] > \max\{\theta_L, \lambda \theta_H \}$. 

We can split the discussion into the following three cases to see why this constitutes a PBE that survives the Intuitive Criterion. First, if the school charges $f \leq \theta_L$, it can earn at most $\theta_L$, which is smaller than its equilibrium profit $\hat \pi$. Second, if the school adopts an off-path policy $(f, M) \neq (\hat f, \hat M)$ where  $f > \theta_L$, note that no low types would enroll in school. This is because, under the given belief system, the low types can earn at most wage $\theta_L$, which cannot cover the fee. Therefore, the school can only hope to enroll all high types (with mass $\lambda$) and charge at most $\theta_H$. Thus, the school can earn at most $\lambda\theta_H$, which is again smaller than its equilibrium profit $\hat \pi$. Therefore, the school optimally chooses the proposed policy $(\hat f, \hat M)$, and under this policy, no player has the incentive to deviate from the proposed play.

\end{document}